\documentclass[11pt]{article}
\usepackage[margin=1in]{geometry}

\usepackage[usenames,dvipsnames]{xcolor}
\definecolor{Gred}{RGB}{219, 50, 54}
\definecolor{ToCgreen}{RGB}{0, 128, 0}

\usepackage[T1]{fontenc}

\usepackage[scale=0.97]{XCharter} 

\usepackage[libertine,bigdelims,vvarbb,scaled=1.05]{newtxmath} 


\usepackage{babel}
\usepackage[spacing=true,kerning=true,babel=true,tracking=true]{microtype}

\DeclareMathAlphabet{\pazocal}{OMS}{zplm}{m}{n} 
\renewcommand{\mathcal}[1]{\pazocal{#1}}

\usepackage{graphicx}
\usepackage{bm}
\usepackage{amsmath}
\usepackage{amsfonts}
\usepackage{amsthm}
\usepackage{mathtools}
\usepackage{hyperref}
\hypersetup{
      colorlinks=true,
  citecolor=ToCgreen,
  linkcolor=Sepia,
  filecolor=Gred,
  urlcolor=Gred
  }
\usepackage{braket}
\usepackage[normalem]{ulem}
\usepackage{wrapfig}
\usepackage{tikz}
\usepackage{dsfont}
\usepackage{comment}
\usepackage{algpseudocode}
\usepackage{algcompatible}
\usepackage{thmtools,thm-restate}
\mathtoolsset{showonlyrefs}
\usepackage{newfloat}
\usepackage{enumitem}

\hypersetup{colorlinks=true,urlcolor=[rgb]{0,0,0.5},citecolor=[rgb]{0.5,0,0},linkcolor=[rgb]{0,0,0.4}}

\usetikzlibrary{decorations.pathmorphing}

\newcommand{\calZ}{\mathcal{Z}}
\renewcommand{\epsilon}{\varepsilon}

\newcommand{\Z}{\mathbb{Z}}

\newcommand{\C}{{\mathbb{C}}}

\newcommand{\wh}{\widehat}

\newcommand{\Split}{\mathsf{Split}}
\newcommand{\Dsplit}{\mathsf{DSplit}}

\newcommand{\norm}[1]{\left\lVert#1\right\rVert}

\newtheorem{theorem}{Theorem}[section]
\newtheorem{lemma}[theorem]{Lemma}
\newtheorem{definition}[theorem]{Definition}
\newtheorem{corollary}[theorem]{Corollary}

\newtheorem{fact}[theorem]{Fact}
\newtheorem{remark}[theorem]{Remark}
\newtheorem{claim}[theorem]{Claim}

\newcommand{\poly}{\mathrm{poly}}
\newcommand{\tr}{\mathrm{tr}}
\newcommand{\eps}{\epsilon}
\newcommand{\diag}{\mathrm{diag}}
\newcommand{\op}{\textsf{op}}

\DeclarePairedDelimiter{\iprod}{\langle}{\rangle}

\DeclareMathOperator{\E}{\mathbb{E}}
\DeclareMathOperator{\Var}{\mathsf{Var}}
\newcommand{\Cov}{\mathsf{Cov}}

\newcommand{\Sp}{\mathsf{Sp}}

\usepackage[size=small]{caption}
\usepackage{etoolbox}



\usepackage{algorithm}
\usepackage{algpseudocode}

\begin{document}

\title{Optimal high-precision shadow estimation}

\author{
Sitan Chen
\thanks{SEAS, Harvard University. Email: \href{mailto:sitan@seas.harvard.edu}{sitan@seas.harvard.edu};}
\qquad\qquad
Jerry Li
\thanks{Microsoft Research. Email: \href{mailto:jerrl@microsoft.com}{jerrl@microsoft.com};}
\qquad\qquad
Allen Liu
\thanks{MIT. Email: \href{mailto:cliu568@mit.edu}{cliu568@mit.edu}; This work was supported in part by an NSF Graduate Research Fellowship and a Fannie and John Hertz Foundation Fellowship}
}

\maketitle

\begin{abstract}
    We give the first tight sample complexity bounds for shadow tomography and classical shadows in the regime where the target error is below some sufficiently small inverse polynomial in the dimension of the Hilbert space. Formally we give a protocol that, given any $m\in\mathbb{N}$ and $\epsilon \le O(d^{-12})$, measures $O(\log(m)/\epsilon^2)$ copies of an unknown mixed state $\rho\in\mathbb{C}^{d\times d}$ and outputs a classical description of $\rho$ which can then be used to estimate any collection of $m$ observables to within additive accuracy $\epsilon$. Previously, even for the simpler task of shadow tomography \--- where the $m$ observables are known in advance \--- the best known rates either scaled benignly but suboptimally in all of $m, d, \epsilon$~\cite{buadescu2021improved,watts2024quantum}, or scaled optimally in $\epsilon, m$ but had additional polynomial factors in $d$ for general observables~\cite{huang2020predicting}. Intriguingly, we also show via dimensionality reduction, that we can rescale $\eps$ and $d$ to reduce to the regime where $\epsilon \le O(d^{-1/2})$. Our algorithm draws upon representation-theoretic tools recently  developed in the context of full state tomography~\cite{chen2024optimal}.
\end{abstract}

\newpage

\tableofcontents

\newpage

\section{Introduction}

In this paper, we consider the well-studied and related problems of shadow tomography~\cite{aaronson2018shadow,aaronson2019gentle,buadescu2021improved,watts2024quantum} and learning classical shadows~\cite{huang2020predicting,huang2022learning,grier2022sample}.
In both problems, there is an unknown quantum state, and the goal is to simultaneously estimate as many linear properties of this state as possible, using as few copies of $\rho$ as possible.
Such problems come up naturally in a variety of real-world laboratory settings, see e.g.~\cite{peruzzo2014variational,bluvstein2024logical,bentsen2019integrable,mcclean2016theory,kandala2017hardware}.
The power of these frameworks is that these shadow estimation tasks can be performed \emph{efficiently}. 
That is, to simultaneously predict $m$ properties of the state, one only requires $\poly (\log m)$ many copies of the state, and $\poly (\log m)$ time.
Indeed, algorithms for classical shadows have already shown immense potential in real-world evaluations~\cite{zhang2021experimental,struchalin2021experimental,levy2024classical}.

Formally, we let $\rho \in \C^{d \times d}$ be an arbitrary and unknown $d$-dimensional mixed state.
In shadow tomography, we are given a set of $m$ linear observables $\{O_i\}_{i = 1}^m \subset \C^{d \times d}$ satisfying $0 \preceq O_i \preceq I$, and given $n$ copies of $\rho$, our goal is to estimate $\tr (O_i \rho)$ to accuracy $\eps$, for all $i = 1, \ldots, m$, with high probability.

In classical shadows, the setup is the same, except that the measurements must be chosen obliviously with respect to the collection of observables $\{O_i\}_{i = 1}^m$.
An algorithm for learning classical shadows can be broken down into two phases.
First, in the measurement phase, the algorithm performs measurements on $n$ copies of $\rho$, to obtain a classical representation of $\rho$, which is referred to the classical shadow of $\rho$.
Then, in the estimation phase, the algorithm is given $m$ observables $\{O_i\}_{i = 1}^m$, and it must output estimates of $\tr (O_i \rho)$ to accuracy $\eps$ for all $i = 1, \ldots, m$, based solely on the observables and the classical shadow of $\rho$.

Despite significant research interest in the area, prior to our work, the exact complexity of these tasks for general $\rho$ and $\{O_i\}^m_{i=1}$ was unknown, for any nontrivial regime of $m, d, \epsilon$.
For shadow tomography, the best known rate is due to~\cite{buadescu2021improved,watts2024quantum}, who give an algorithm that uses 
\begin{equation}
n = O\left(\frac{\log^2(m)\cdot \log(d)}{\eps^4} \right)
\end{equation}
copies of $\rho$, whereas the best known lower bound is the ``trivial'' bound of $n = \Omega(\tfrac{\log m}{\eps^2})$ from the classical setting.
For classical shadows, the best known rate is due to~\cite{huang2020predicting}, who give an algorithm that (for general observables) requires $n = O(\tfrac{d \log m}{\eps^2})$ copies of $\rho$. 
The best lower bound for classical shadows is due to~\cite{grier2022sample}, who obtain a lower bound of
\begin{equation}
n = \Omega \Bigl( \Bigl( \frac{\sqrt{d}}{\eps} + \frac{1}{\eps^2} \Bigr) \log m \Bigr) \; ,
\end{equation}
which holds even when $\rho$ is pure.
The lack of tight rates for these problems is in stark contrast to other quantum learning settings such as full state tomography~\cite{o2016efficient,haah2016sample,chen2023does}, state certification~\cite{o2015quantum,chen2022tight}, and Pauli channel estimation~\cite{chen2022quantum,chen2023efficientPauli}, where asymptotically tight sample complexities are known.

In this work, we initiate the study of these problems in what we call the \emph{high-accuracy} regime, that is, when $\eps = O(d^{-c})$ for some $c$ sufficiently large.
This is in contrast to much prior theoretical work on these problems, which primarily focused on the ``low-accuracy'' regime where $m$ and $d$ are large compared to $\eps^{-1}$.
Our primary interest in this setting is two-fold.
First, from a practical point of view, this is an important setting: in practice, it is very often the case that we wish to obtain detailed information about relatively small quantum systems~\cite{zhang2021experimental,struchalin2021experimental,hadfield2022measurements}.
Second, from a mathematical point of view, we find that these problems exhibit new and very interesting properties within this regime.

Indeed, informally stated, our main algorithmic result (see Theorem~\ref{thm:main-informal}) is a new, statistically optimal estimator in this regime for both classical shadows and shadow tomography for general observables.
To our knowledge, this is the first time that tight rates have been established for these problems, in any nontrivial regime of $m, d, \epsilon$.
Qualitatively speaking, our results uncover the following, previously unknown phenomena for these problems in the high-accuracy regime:
\begin{itemize}[leftmargin=*]
    \item {\bf Shadow estimation no harder than classical counterpart.} 
    In the high-accuracy regime, we show that the rate for shadow estimation matches the corresponding ``trivial'' lower bound in the classical case, where the observables are diagonal matrices.
    To our knowledge, this is the first time where tight rates have been demonstrated for either problem, and also the only known regime where the quantum and classical rates match for general observables.
    \item {\bf Oblivious protocols can match adaptive ones.} We show that in this regime, shadow tomography and classical shadows \emph{are statistically equivalent}. In other words, there is no statistical advantage for the measurements to be chosen adaptively based on the set of linear observables of interest.
    This is perhaps surprising, as all previously known statistically efficient algorithms for shadow tomography crucially required measurements to be chosen based on the set of linear observables of interest.
    \item {\bf Representation theory for shadow estimation.} From a technical point of view, one interesting aspect of our work is that it deviates heavily from previous techniques on shadow estimation, and instead builds upon representation theoretic techniques previously used for full state tomography~\cite{chen2024optimal}. 
    This adds to a growing literature on the power of such representation theoretic tools for shadow estimation~\cite{grier2022sample,grier2024improved}; however, we emphasize that beyond this similarity, our techniques are quite distinct from these works.
\end{itemize}
Finally, we also demonstrate a formal reduction to the ``medium-accuracy'' regime (see Lemma~\ref{lem:reduce-dimension}).
That is, we show that without loss of generality, for the task of classical shadow estimation, one may assume that $\eps \leq O(d^{-1/2})$.
This reduction works by demonstrating that by leveraging classical ideas from dimensionality reduction~\cite{johnson1986extensions}, one can linearly trade off $d$ and $\eps$ in the sample complexity, as long as $\eps \geq O(d^{-1/2})$.
While this still leaves a gap in the parameter landscape where we do not know the correct rates, as our analysis currently requires $\eps \leq O(d^{-12})$, at the very least this demonstrates that all of the interesting action for this problem is in the setting where $\eps$ and $d^{-1}$ are polynomially related.
In fact, we conjecture that the correct rate for classical shadows is
\begin{equation}
\label{eq:conjecture}
\Theta \left( \left( \frac{d}{\eps} + \frac{1}{\eps^2} 
\right) \cdot \log m \right) \; ,
\end{equation}
as we conjecture that (1) the reduction holds up to the threshold of $\eps \geq \Omega (d^{-1})$, and (2) our rate of $O(\log m / \eps^2)$ holds as long as $\eps \leq O(d^{-1})$.
However, it seems that improving the thresholds on both fronts requires additional ideas, and we leave establishing this rate as interesting future work.


\section{Our contributions} 

\begin{figure}[h!]
    \centering
    \includegraphics[width=0.9\textwidth]{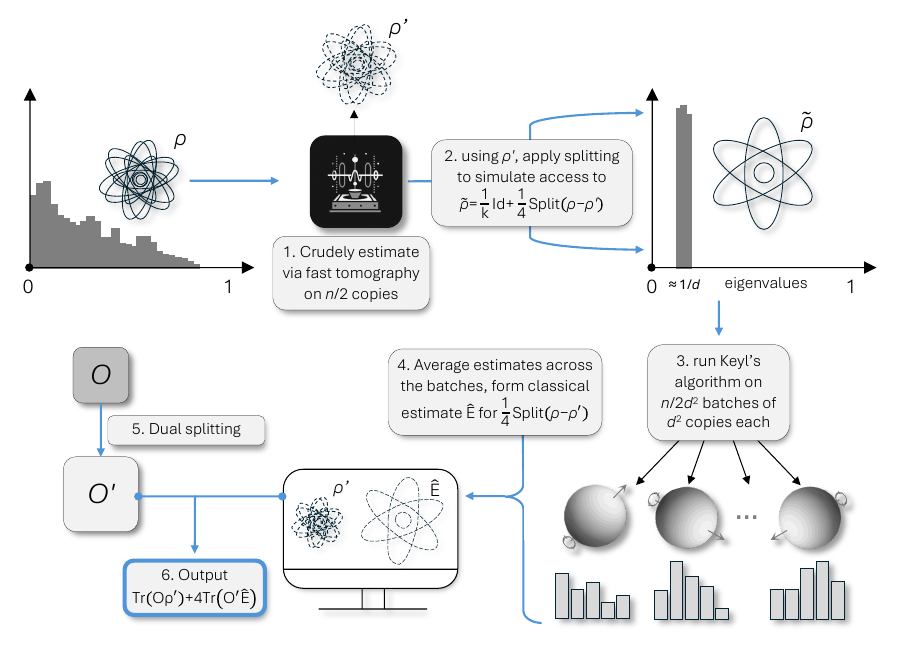}
    \caption{Overview of our shadow estimation algorithm}
    \label{fig:main}
\end{figure}

In this work, we give new algorithms for shadow tomography and for learning classical shadows.
We demonstrate that these algorithms obtain optimal sample complexities for both problems in the high-accuracy regime. 
In fact, our rates match the ``trivial'' lower bounds for these problems, demonstrating that in this regime, the quantum task is no harder than the classical one.
To our knowledge, this is the first time where tight rates have been demonstrated for either problem.

Perhaps surprisingly, we do this by giving a new algorithm for learning classical shadows which matches the lower bound for the ostensibly easier task of shadow tomography.
That is, we show that in the high-accuracy regime, learning classical shadows is no harder than shadow tomography.
More formally, we show:
\begin{theorem}
    \label{thm:main-informal}
    Let $m, d \in \Z$ be fixed, and let $\eps = O(d^{-12})$. Then, there is an estimator which takes $n$ copies of an unknown mixed state $\rho \in \C^{d \times d}$, where
    \begin{equation}
        n = O \left( \frac{\log m}{\eps^2} \right) \; ,    
    \end{equation}
    and outputs a classical function $F: \C^{d \times d} \to [0, 1]$ so that for any fixed collection of $m$ observables $\{O_i\}_{i = 1}^m$, the function satisfies $| F(O_i) - \tr (O_i \rho) | \leq \eps$ for all $i = 1, \ldots, m$ with high probability.
\end{theorem}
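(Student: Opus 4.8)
The plan is to reduce the theorem to a single-observable estimation primitive and then boost to all $m$ observables by a union bound. From the $n$ copies of $\rho$ I would first produce a classical (Hermitian, not necessarily PSD) estimate $\widehat{\rho}\in\C^{d\times d}$ by a measurement chosen \emph{obliviously} to the $O_i$'s, and then set $F(O)\coloneqq\max\{0,\min\{1,\tr(O\widehat{\rho})\}\}$, clipping into $[0,1]$ so the output has the required range; since $\tr(O\rho)\in[0,1]$, clipping only decreases the error. It then suffices to show that for every \emph{fixed} Hermitian $O$ with $0\preceq O\preceq I$,
\[
\Pr\bigl[\,\abs{\tr(O\widehat\rho)-\tr(O\rho)}>\eps\,\bigr]\le\delta/m,
\]
since a union bound over $i=1,\dots,m$ then gives all $m$ estimates simultaneously with failure probability $\delta$, and for constant $\delta$ the sample size is $n=O(\log m/\eps^2)$ as claimed. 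Writing $O=\tfrac{\tr O}{d}I+O_0$ with $O_0$ traceless and $\norm{O_0}_\op\le1$ (so $\norm{O_0}_F^2\le d$), and using $\tr\widehat\rho=\tr\rho=1$, it is enough to control $\tr(O_0(\widehat\rho-\rho))$.

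For $\widehat\rho$ I would use the representation-theoretic state-tomography estimator of~\cite{chen2024optimal} (or a mild modification of it): perform weak Schur sampling on $\rho^{\otimes n}$ to obtain a random Young diagram $\lambda\vdash n$, whose normalization estimates the spectrum of $\rho$, together with the associated within-block measurement estimating the eigenbasis, and reassemble these into $\widehat\rho$. The two properties I need are: (i) \textbf{small bias}, $\abs{\tr(O_0(\E\widehat\rho-\rho))}\le\poly(d)/n$ for every bounded traceless $O_0$; and (ii) \textbf{dimension-free per-observable concentration}, namely that for every fixed bounded traceless $O_0$ the random variable $\tr(O_0(\widehat\rho-\E\widehat\rho))$ has sub-Gaussian-type tails for deviations of order $\eps$ with variance proxy $O(1/n)$, with \emph{no} hidden factor of $d$. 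Granting (i) and (ii), taking $n=\Theta(\log(m/\delta)/\eps^2)$ makes the fluctuation term at most $\eps/2$ except with probability $\delta/m$; and since $\eps\le O(d^{-12})$ this same choice ensures $n\ge\poly(d)/\eps$, so the bias term is at most $\eps/2$. This is precisely where the high-accuracy hypothesis enters, and it is why the exponent in $\eps\le O(d^{-c})$ is tied to the degree of the bias polynomial (and, likewise, to the threshold below which a sub-exponential estimator behaves sub-Gaussianly at scale $\eps$).

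The heart of the argument, and the step I expect to be the main obstacle, is (ii): showing that pairing the tomographic error $\widehat\rho-\rho$ against an \emph{arbitrary} fixed bounded observable yields fluctuations of order $1/\sqrt n$, rather than the $\sqrt{d/n}$ that single-copy classical-shadows estimators~\cite{huang2020predicting} incur. Heuristically this should hold because the error of the optimal tomographic estimator carries $\E\norm{\widehat\rho-\rho}_F^2=O(d/n)$ spread roughly isotropically over the $d^2$ matrix directions, so $\E[\tr(O_0(\widehat\rho-\rho))^2]\approx\norm{O_0}_F^2/(nd)\le1/n$; but converting this into a theorem requires a genuine fluctuation analysis of weak Schur sampling --- quantifying how $\lambda$ deviates from the spectrum of $\rho$ and how the within-block eigenvector estimates fluctuate --- and checking that after contracting with $O_0$ these deviations remain $d$-independent. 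I would carry this out by splitting $\tr(O_0\widehat\rho)-\tr(O_0\rho)$ into a ``spectrum'' contribution and an ``eigenbasis'' contribution, bounding the variance of each via known concentration properties of Schur--Weyl distributions and the covariance structure of the within-block POVM, and then upgrading the variance bounds to the tail bounds in (ii) by the usual moment/Chernoff machinery (using $n\gg\poly(d)$ once more to absorb cross terms). The union bound and clipping arguments are routine once (i) and (ii) are established.
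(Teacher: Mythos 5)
Your high-level strategy matches the paper's: construct an observable-oblivious estimate $\widehat\rho$ from Keyl's POVM, split the error into bias and fluctuation, show the fluctuation for a fixed bounded observable is $O(1/\sqrt n)$ with \emph{no} hidden $d$, and finish with a union/median-boost over $m$ observables. You also correctly flag where the real work is, namely the dimension-free per-observable concentration.

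However there is a genuine gap: you apply Keyl's estimator directly to an arbitrary $\rho$, whereas both ingredients you need fail in that generality. The paper first reduces to the \emph{balanced} case $\rho = I_d/d + E$ with $\norm{E}_F \le \sqrt{\eps}/d^2$, via a crude preliminary tomography step and the $\Split/\Dsplit$ mechanism, and \emph{only then} measures in batches of $t = \Theta(d^2)$ copies. That reduction is not cosmetic. (a) The bias control you posit as assertion (i) comes from linearizing $\rho^{\otimes t}$ around $(I_d/d)^{\otimes t}$ (Lemma~\ref{lem:estimator-error} and Corollary~\ref{coro:fake-estimator-mean}), and this linearization needs $t\norm{E}_F$ tiny; it has no meaning for a single weak-Schur measurement on $\rho^{\otimes n}$ or for unbalanced $\rho$. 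The exponent $12$ in $\eps\le O(d^{-12})$ is exactly where these parameters bite, so ``$\mathrm{poly}(d)/n$ bias'' cannot be waved in. (b) Your isotropy heuristic $\E[\tr(O_0(\widehat\rho-\rho))^2] \approx \norm{O_0}_F^2/(nd)$ is not justified for general $\rho$: for a strongly skewed spectrum the fluctuations of Keyl's estimator do \emph{not} spread evenly over all $d^2$ matrix directions, and the naive bound can lose a factor of $d$. The paper instead establishes the $d$-free variance only after balancing, by exploiting copy-wise rotation-invariance of Keyl's POVM (Claim~\ref{claim:haar-moments} and the chain starting at Eq.~\eqref{eq:boundvariance}), which integrates out the unitary \emph{before} contracting with $O_0$ — a structurally different argument from the ``spectrum plus eigenbasis'' decomposition you sketch, and not obviously recoverable from ``known concentration of Schur--Weyl distributions.'' You also need the $\Dsplit$ dual map to transport a general observable $O$ across the balancing step while keeping $\norm{\Dsplit(O)}_F$ controlled by $\norm{O}_\op$ (Claim~\ref{claim:splitting-properties}); without something playing this role the balanced-case bound, which is stated in terms of $\norm{O}_F/\sqrt{d}$, does not translate back to error $\eps\norm{O}_\op$ in the original problem.
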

We pause here to make a couple of remarks on this result.
First, as alluded to above, this clearly also implies the same upper bound for shadow tomography, and moreover, this rate is tight, as it matches the lower bound for shadow tomography (and estimating statistical queries).

Second, our rate is dimension independent.
This is in contrast to prior rates for learning classical shadows, and indeed, even the lower bound~\cite{grier2022sample}.
Note that our result does not violate this lower bound because the dimension-dependent term vanishes in the high-accuracy regime, and indeed our result suggests that for classical shadows the optimal dependence on $d$ ought to be a lower order term in $\eps^{-1}$.

Thirdly, we obtain the optimal quadratic scaling in $1/\epsilon$, whereas the aforementioned upper bounds for shadow tomography scale with $1/\epsilon^4$. A recent work~\cite{chen2024optimal2} shows that with $\mathrm{polylog}(d)$-copy measurements, $\Omega(1/\epsilon^4)$ copies are necessary even in the special case when the observables are Pauli operators. We show that if the number of copies that can be measured at once scales polynomially in the dimension, then this lower bound no longer applies.

\vspace{0.5em}

\noindent \textbf{A reduction to the medium-accuracy regime} As mentioned preivously, we also demonstrate the following reduction to the medium-accuracy regime:  
\begin{theorem}[informal, see Theorem~\ref{lem:reduce-dimension}]
\label{thm:reduction-informal}
    Suppose that for all $k, \eps'$ satisfying $\eps' \leq k^{-1/2}$, there is an algorithm that solves classical shadows for $k$-dimensional states to error $\eps'$ (and constant failure probability) with $f(k, \eps')$ copies.
    Let $d, \eps$ satisfy $\eps \geq 100/\sqrt{d}$.
    Then, there is an algorithm that solves classical shadows for $d$-dimensional states to error $\eps$ (and constant failure probability) that uses $f(100 \sqrt{d} / \eps, 1 / \sqrt{d})$ copies. 
\end{theorem}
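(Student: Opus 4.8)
The plan is to perform a Johnson--Lindenstrauss-style randomized compression of the Hilbert space, realized as a quantum channel, and then invoke the hypothesized medium-accuracy algorithm on the compressed state. Set the compression dimension $k := \lceil 100\sqrt d/\eps\rceil$ and the target accuracy $\eps' := 1/\sqrt d$. The hypothesis $\eps \ge 100/\sqrt d$ is exactly what is needed to invoke the assumed algorithm in dimension $k$: it gives $k \le d$ (so there is room to compress) and $\eps' = 1/\sqrt d \le k^{-1/2}$ (so $(k,\eps')$ lies in the regime where the $f(k,\eps')$-copy algorithm exists). As shared randomness, sample a Haar-random isometry $W \colon \C^{k-1}\to\C^{d}$, equivalently a uniformly random rank-$(k-1)$ projection $\Pi := WW^\dagger$ on $\C^d$, and reserve the remaining dimension of $\C^k$ as an ``overflow'' register. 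Define a CPTP map $\mathcal{N}_W \colon \C^{d\times d} \to \C^{k\times k}$ that restricts a state to the random subspace and dumps its orthogonal complement into the overflow register, schematically $\mathcal{N}_W(\rho) = W^\dagger \rho W \oplus \tr\bigl((I-\Pi)\rho\bigr)$; this is trace-preserving and completely positive since $W^\dagger(\cdot)W$ has Kraus operator $W^\dagger$ with $WW^\dagger = \Pi \preceq I$, completed by Kraus operators supported on $I-\Pi$ that feed the overflow line. For an observable $O$ with $0 \preceq O \preceq I$, define its pushforward $\widetilde O := W^\dagger O W \oplus 0$, which is automatically a valid observable since $0 \preceq W^\dagger O W \preceq W^\dagger W = I$.

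The reduction's algorithm is then: in the measurement phase, route each of the $n = f(k,\eps')$ copies of $\rho$ through $\mathcal{N}_W$, obtaining $n$ i.i.d.\ copies of the $k$-dimensional state $\sigma := \mathcal{N}_W(\rho)$, run the assumed algorithm on them to produce a classical description $\mathcal{D}$, and output the shadow $(\mathcal{D}, W)$. In the estimation phase, given $\{O_i\}_{i=1}^m$, form $\widetilde{O_i}$, query $\mathcal{D}$ for $\widehat{t_i}$ with $|\widehat{t_i} - \tr(\widetilde{O_i}\sigma)| \le \eps'$ for all $i$ simultaneously (constant failure probability, since the assumed algorithm already handles $m$ observables at once), and output an explicit affine correction $F(O_i)$. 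The correction is dictated by the identity $\E_W\bigl[\tr(\widetilde O\,\sigma)\bigr] = \E_W[\tr(\Pi O\Pi\rho)] = a\cdot \tr(O\rho) + b\cdot \tr(O)$, where $a,b$ depend only on $k,d$ and come from the second-moment (Weingarten) calculus for the Haar measure on the Grassmannian; we set $F(O) := \bigl(\widehat{t} - b\,\tr O\bigr)/a$, clipped to $[0,1]$ (computable since $\tr O$ is revealed with the observables and $a,b$ are known).

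For the error analysis, fix $i$ and use $\tr(\widetilde{O_i}\sigma) = \tr(\Pi O_i \Pi \rho)$ to write
\begin{equation}
|F(O_i) - \tr(O_i\rho)| \le \frac{1}{a}\Bigl(\,|\widehat{t_i} - \tr(\Pi O_i\Pi\rho)| + \bigl|\tr(\Pi O_i\Pi\rho) - \E_W[\tr(\Pi O_i\Pi\rho)]\bigr|\,\Bigr) \le \frac{1}{a}\bigl(\eps' + \delta_i\bigr),
\end{equation}
where $\delta_i$ measures the deviation of $\tr(\Pi O_i\Pi\rho)$ from its mean over the random subspace. The second term is controlled by measure concentration: $W \mapsto \tr(\Pi O_i\Pi\rho)$ is Lipschitz on the Stiefel manifold with constant governed by $\norm{O_i}_{\op}\norm{\rho}_1 \le 1$, so Levy's lemma (equivalently, concentration of Haar measure on the unitary group) gives sub-Gaussian tails, and a union bound over $i\in[m]$ makes all $\delta_i$ small with high probability; the $\log m$ factor incurred is negligible in the relevant parameter range and can be absorbed into a mild polylogarithmic slack in $k$. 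The first term is handled by the calibration $\eps' = 1/\sqrt d$.

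The step I expect to be the crux is controlling the ``signal coefficient'' $a$: one must ensure $a^{-1}$ is small enough that $a^{-1}\eps'$ and $a^{-1}\delta_i$ both come out to $O(\eps)$. This is delicate because the naive restriction $\rho \mapsto W^\dagger\rho W$ attenuates the signal — $\tr(\Pi O\Pi\rho)$ sees $\rho$ only through its overlap with a random $k$-dimensional subspace — and one cannot simply rescale the state by $d/k$ to compensate, since $\sqrt{d/k}\,W^\dagger$ fails to be a sub-isometry when $k<d$; likewise, postselecting on the $\Pi$-outcome would restore the normalization but waste a $d/k$ factor in the copy count, which is not allowed. Reconciling ``legitimate CPTP channel, legitimate observables, no wasted copies, and un-attenuated signal'' is where the work lies, and the specific choices $k = \Theta(\sqrt d/\eps)$, $\eps' = 1/\sqrt d$ are exactly what make the two error contributions in the display balance at $O(\eps)$; the threshold $\eps \ge 100/\sqrt d$ is precisely what keeps the whole scheme self-consistent ($k\le d$ and $\eps' \le k^{-1/2}$).
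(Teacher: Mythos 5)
You have correctly located the crux---controlling the ``signal coefficient'' $a$---but your construction does not resolve it, and I believe it provably cannot in the given parameter range. For a Haar-random rank-$(k-1)$ projector $\Pi = WW^\dagger$, the compressed expectation value is $\tr(\widetilde O\,\sigma) = \tr(\Pi O\Pi\,\rho)$, and the Weingarten calculus you invoke gives $\E_W[\Pi A\Pi] = c_1 A + c_2\tr(A)\,I$ with leading coefficient $c_1 \approx (k/d)^2$; so $a \approx (k/d)^2 = 10^4/(d\eps^2)$. The inverted error is then $a^{-1}\eps' \approx (d\eps^2/10^4)\cdot(1/\sqrt d) = \sqrt d\,\eps^2/10^4$, which exceeds the target $\eps$ whenever $\eps \gtrsim 1/\sqrt d$ --- i.e.\ on essentially the entire regime the reduction is supposed to handle. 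You explicitly observe that you cannot rescale $W^\dagger$ and that postselecting on $\Pi$ ``wastes a $d/k$ factor,'' both of which are true for your sketch, and then leave the reconciliation as ``where the work lies.'' That is precisely the missing ingredient, and it is not a constant-tuning issue: a rank-$(k-1)$ random restriction of $\rho$ simply does not carry enough signal.

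The paper escapes this by using a structurally different compression. It draws $k$ Gaussian matrices $V_1,\dots,V_k\in\C^{d\times (2d/k)}$ with i.i.d.\ $N(0,1/d)$ entries (so the total column count is $\approx 2d$, not $k$), forms the $k\times k$ sketch $[\tr(V_i^\dagger \rho V_j)]_{ij}$, and \emph{does} postselect onto that block. Because each sketch index aggregates $2d/k$ Gaussian columns, the sketched block carries a \emph{constant} fraction $0.1\alpha\in[0.01,1]$ of the trace, so postselection only costs a constant factor of copies (not $d/k$), and the signal coefficient in Claim~\ref{claim:mean-calc} comes out as $a = 2(k-1)/d \sim k/d$ rather than $(k/d)^2$; one then has $a^{-1}\eps' \sim (d/k)\cdot(1/\sqrt d) \sim \eps$, with fluctuations handled by the variance bound of Claim~\ref{claim:var-calc} and Chebyshev rather than Levy concentration. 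The ``oversampled Gaussian sketch with cheap postselection'' is the idea your write-up is missing; swapping the Haar projector for this sketch would also remove your informal appeal to ``polylogarithmic slack in $k$'' to absorb the union-bound cost, since the paper gets constant success per query and amplifies by a standard median-of-$O(\log(1/\delta))$ trick at the end.
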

Here, we briefly pause to show how this translates to a reduction to the medium-accuracy regime.
The theorem states that to solve classical shadows to error $\eps$ for $d$ dimensional states, it suffices to obtain an estimator for classical shadows in $k = 100 \sqrt{d} / \eps$ dimensions to error $\eps' = 1 / \sqrt{d}$.
Notice that by the choice of parameters, we have that $k \leq d$, and thus consequently $\eps' \leq d^{-1/2} \leq k^{-1/2}$, so this new problem is indeed in the medium-accuracy regime.
In fact, in Section~\ref{sec:reduction} we show a more general reduction which allows us to linearly trade off $d$ and $\eps$.

Moreover, this reduction indeed recovers a linear tradeoff between $\eps$ and $d$ that we believe is optimal.
For instance, if we believe the conjectured rate~\eqref{eq:conjecture} holds for all $\eps \leq d^{-1/2}$
then a straightforward computation demonstrates that this reduction yields the same rate holds for $\eps \geq 100 d^{-1/2}$ as well.

\section{Technical overview}


\noindent Our approach is a departure from the aforementioned approaches to shadow tomography, which automatically lose an extra $\log d / \epsilon^2$ factor because they involve an outer routine based on online learning. Instead, our starting point is the recent approach of~\cite{chen2024optimal} for full \emph{state tomography}.


\vspace{0.5em}\noindent \textbf{Overview of~\cite{chen2024optimal}.} Roughly speaking, the approach in that work consisted of two components: 1) a reduction from tomography of arbitrary mixed states to tomography of mixed states which are a small perturbation of the maximally mixed state, and 2) an analysis of Keyl's estimator for learning the \emph{perturbation} that gives rise to such a state, rather than learning the state directly.

Let us first focus on step 2). We say call states that are small perturbations of $I_d/d$ \emph{balanced states}.  If we express an unknown balanced state $\rho$ as $\rho = I_d / d + E$ for some perturbation $E$, then they observed that the tensor product of $t$ copies of $\rho$ is close to the ``linearized state''
\begin{equation}
    \rho' \triangleq (I_d/d)^{\otimes t} + \sum_{{\rm sym}} E \otimes (I_d/d)^{\otimes t-1}    
\end{equation}
where $\sum_{{\rm sym}}$ denotes the sum over all tensor products of one copy of $E$ and $t - 1$ copies of $I_d/d$. Technically this linearized state is not necessarily a density matrix, but for any POVM $\{M_z\}_{z\in\mathcal{Z}}$, one can still consider measurement statistics of the form $\int_{\mathcal{Z}} f(M_z) \iprod{\rho', M_z}\,\mathrm{d}z$, which correspond to the expectation of any estimator $f$ applied to the outcome of ``measuring'' the linearized state.

Now consider a natural choice of $f$ in the context of state tomography: Keyl's estimator. Whereas the expected result of applying Keyl's estimator to the actual state $\rho$ is hard to characterize, the upshot of working with the linearization $\rho'$ is that it is much more amenable to calculations, and we can actually explicitly compute the analogous result for $\rho'$. In particular, by taking $f$ and $\{M_z\}_{z\in\mathcal{Z}}$ to be given by Keyl's estimator, the above measurement statistic $\int_{\mathcal{Z}} f(M_z) \iprod{\rho', M_z}\,\mathrm{d}z$ turns out to be exactly given by a perturbation of the maximally mixed state by some \emph{known multiple} of $E$ (Corollary~\ref{coro:fake-estimator-mean}), i.e. $I_d/d + cE$ for some known factor $c$. 

This means that if $E$ has sufficiently small norm, the expected result $\E[\wh{\rho}]$ of applying Keyl's estimator to the actual state $\rho$ is sufficiently close to this (Lemma~\ref{lem:estimator-error}) that if we had access to $\E[\wh{\rho}]$, we could simply estimate the perturbation $E$ via $c^{-1}(\E[\wh{\rho}] - I_d/d)$. Of course in reality we only have access to \emph{realizations} of the state $\wh{\rho}$ obtained by Keyl's estimator, rather than their expectation, but using existing bounds on the variance of Keyl's estimator~\cite{o2016efficient}, we can control the deviation between $\wh{\rho}$ and $\E[\wh{\rho}]$.

\vspace{0.5em}\noindent \textbf{Adapting to the shadow estimation setting.} In this work, we observe that even though the above analysis was originally implemented in~\cite{chen2024optimal} to bound the accuracy of the estimate $\wh{E}$ for perturbation $E$ obtained in Frobenius norm, essentially the same analysis translates naturally to bounding accuracy as quantified by how close $\iprod{O,E}$ is to $\iprod{O,\wh{E}}$ for an arbitrary observable $O$. The only place where we need to be somewhat careful is in controlling the variance of the estimator: instead of directly bounding the expected squared distance between $\wh{\rho}$ and $\E[\wh{\rho}]$, we need to bound the expected squared discrepancy $\E[\iprod{O, \wh{\rho} - \E[\wh{\rho}]}^2]$. While it may be tempting to simply bound this by applying Cauchy-Schwarz and appealing to the existing bound on $\E[\norm{\wh{\rho} - \E[\wh{\rho}]}^2_F]$, this is lossy by dimension-dependent factors. Instead, we need to exploit the rotation-invariance of Keyl's estimator to get a tighter bound on this variance (see Eq.~\eqref{eq:boundvariance} onwards).

The above discussion is already sufficient to prove our main result in the special case where $\rho$ is balanced. It gives rise to an estimate for $E$, and thus for $\rho$, which is \emph{oblivious} in the sense that it does not depend on the choice of observable $O$ above. Furthermore, our algorithm only needs $O(\log(1/\delta)/\epsilon^2)$ samples to produce an estimate $\wh{\rho}$ for which $|\iprod{O, \wh{\rho} - \rho}| \le \epsilon$ with probability $1 - \delta$. As $\wh{\rho}$ is oblivious to $O$, it can be used to estimate any set of $m$ observables $O_1,\ldots,O_m$ with probability $1 - m\delta$. By taking $\delta = O(1/m)$, we obtain Theorem~\ref{thm:main-informal} in the special case of balanced states. We then appeal to part 1) of the analysis in~\cite{chen2024optimal}, which allows us to effectively reduce from the case of arbitrary mixed states to the case of balanced states.

\vspace{0.5em}\noindent \textbf{Reduction to the balanced case.} Here we summarize how we adapt their approach in this step to our shadow estimation setting. At the end we comment on how it compares to the implementation in~\cite{chen2024optimal} for full state tomography.

Roughly speaking, this part of the proof is based on a certain ``splitting'' operation $\mathsf{Split}$ (see Definition~\ref{def:split}) that linearly maps any $d$-dimensional mixed state to an $O(d)$-dimensional one whose eigenvalues are upper bounded by $1/d$. Importantly, given measurement access to $\rho^{\otimes t}$, one can simulate measurement access to $\mathsf{Split}(\rho)^{\otimes t}$, and furthermore there is a \emph{dual operation} $\mathsf{DSplit}$ (see Definition~\ref{def:dsplit}) that can be applied to any observable $O$ such that the expectation value $\iprod{O,\rho}$ is equal to $\iprod{\mathsf{DSplit}(O), \mathsf{Split}(\rho)}$.

We can then reduce to the case that $\rho$ is balanced as follows: we obtain a crude estimate $\tilde{\rho}$ for $\rho$ using \emph{low-accuracy state tomography} (Theorem~\ref{thm:untentangled-tomography}), and then we ``recenter'' $\mathsf{Split}(\rho)$ around $\mathsf{Split}(\tilde{\rho})$ to get an $O(d)$-dimensional state which is sufficiently close to maximally mixed. Importantly, using the description of $\tilde{\rho}$ and by simulating measurement access to $\mathsf{Split}(\rho')$, we can simulate measurement access to this recentered state and thus reduce to the case where the unknown state is balanced (in $O(d)$ dimensions).

We remark that the primary difference between our implementation of this splitting technique and the one in~\cite{chen2024optimal} is our use of $\mathsf{DSplit}$, which is specific to the shadow estimation setting. For state tomography, \cite{chen2024optimal} considered a different operation. Roughly speaking, they defined a procedure $\mathsf{Rec}$ which \emph{inverts} the mapping given by $\mathsf{Split}$, so that once one has an estimate for the perturbation corresponding to the recentered state, their final estimator is given by applying $\mathsf{Rec}$ to this estimate. In contrast, because our goal is not to estimate the state in Frobenius norm, but rather to estimate it well enough to answer expectation value queries, we need an operation \emph{dual} to $\mathsf{Split}$ instead of an operation \emph{inverse} to it. This leads us to consider the operation $\mathsf{DSplit}$ sketched above.


\section{Outlook}

In this work we gave the first algorithm for classical shadows to achieve optimal sample complexity $O(\log m / \epsilon^2)$ for general observables in some nontrivial regime, namely when the target accuracy $\epsilon$ is inverse polynomial in the dimension $d$ of the Hilbert space. In contrast, prior work either suffered from extraneous logarithmic factors in $m$ and $d$ and polynomial factors in $1/\epsilon$, or required polynomial factors in $d$ . Interestingly, our proof leverages ideas from the recent work of~\cite{chen2024optimal} on \emph{full} state tomography. The central idea is to formulate an (approximately) unbiased estimate of the state by analyzing the behavior of Keyl's estimator on a certain linearization of the batch of copies of the unknown state.

The natural question left open by our work is to handle the low-accuracy regime, that is, to lift the assumption that $\epsilon \le 1/\mathrm{poly}(d)$. Unfortunately in the low-accuracy regime, the linearization trick mentioned above no longer applies. Resolving this would settle the main open question of~\cite{aaronson2018shadow}, i.e. showing that in all parameter regimes, the sample complexity of shadow tomography is no worse than that of its classical analogue.

Another interesting direction for future work is to understand how the sample complexity of classical shadows changes under additional constraints on $\rho$, e.g. if it has low rank or is preparable with a shallow quantum circuit. In the special case where $\rho$ is rank-1, this was settled in the work of~\cite{grier2022sample}. The recent work of~\cite{grier2024improved} obtained an improved upper bound for general low-rank states compared to the result of~\cite{huang2020predicting}.

\vspace*{6pt}
\noindent {\bf Acknowledgments.}\quad 

The authors thank Ainesh Bakshi, Jaume de Dios Pont, Ryan O'Donnell, and Ewin Tang for illuminating discussions about shadow tomography.

\bibliographystyle{plain}
\bibliography{bibliography}

\newpage

\appendix

\section{Representation Theory and Keyl's POVM}

Our algorithm will be based on Keyl's POVM \cite{keyl2006quantum} which is at the heart of most quantum state tomography algorithms that use entangled measurements \cite{wright2016learn,o2016efficient,o2017efficient,chen2024optimal}.  Defining Keyl's POVM requires some basic concepts from representation theory. 
In the following, we list the ones relevant to our discussion here; see \cite{wright2016learn} for a more detailed exposition.

\begin{definition}\label{def:young-tableaux}[Young Tableaux]
We have the following standard definitions:
\begin{itemize}
\item Given a partition $\lambda \vdash n$, a Young diagram of shape $\lambda$ is a left-justified set of boxes arranged in rows, with $\lambda_i$ boxes in the $i$th row from the top.
\item  A standard Young tableaux (SYT) $T$ of shape $\lambda$ is a Young diagram of shape $\lambda$ where each box is filled with some integer in $[n]$ such that the rows are strictly increasing from left to right and the columns are strictly increasing from top to bottom.
\item A semistandard Young tableaux (SSYT) $T$ of shape $\lambda$ is a Young diagram of shape $\lambda$ where each box is filled with some integer in $[d]$ for some $d$ and the rows are weakly increasing from left to right and the columns are strictly increasing from top to bottom.
\end{itemize}
\end{definition}

\noindent We recall the correspondence between Young tableaux and representations of the symmetric and general linear groups:

\begin{definition}
We say a representation $\mu$ of $GL_d$ over a complex vector space $\C^m$ is a polynomial representation if for any $U \in \C^{d \times d}$, $\mu(U) \in \C^{m \times m}$ is a polynomial in the entries of $U$.
\end{definition}

\begin{fact}[\cite{sagan2013symmetric}]
The irreducible representations of the symmetric group $S_n$ are exactly indexed by the partitions $\lambda \vdash n$ and have dimensions $\dim(\lambda)$ equal to the number of standard Young tableaux of shape $\lambda$.  We denote the corresponding vector space $\Sp_{\lambda}$.
\end{fact}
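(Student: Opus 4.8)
This is the classical classification of the irreducible representations of $S_n$; the plan is to follow the modern treatment via Specht modules (see \cite{sagan2013symmetric}), in three parts: construct a family of modules indexed by partitions, prove they are irreducible and pairwise non-isomorphic, and then conclude by a counting argument that they exhaust all irreducibles, while simultaneously reading off the promised dimension formula.

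First I would carry out the construction. For $\lambda \vdash n$, consider the Young tableaux of shape $\lambda$ filled bijectively with $[n]$, and form the \emph{tabloids} $\{t\}$ by quotienting by the row-stabilizer action; the permutation module $M^\lambda$, spanned by the tabloids, carries an obvious $S_n$-action. For a tableau $t$ with column stabilizer $C_t \le S_n$, I would define the polytabloid $e_t = \sum_{\pi \in C_t} \sgn(\pi)\, \pi\{t\}$ and set $\Sp_\lambda = \Span\{ e_t : t \text{ of shape } \lambda \} \subseteq M^\lambda$, which is an $S_n$-submodule because $\sigma e_t = e_{\sigma t}$.

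Next, I would establish irreducibility and distinctness. Equip $M^\lambda$ with the $S_n$-invariant inner product under which the tabloids are orthonormal. The key combinatorial input is the ``sign lemma,'' which compares, for tableaux $t$ and $t'$, whether some row of $t$ meets some column of $t'$ in two entries; from it one deduces James's submodule theorem: every $S_n$-submodule $U \subseteq M^\lambda$ satisfies either $\Sp_\lambda \subseteq U$ or $U \subseteq (\Sp_\lambda)^\perp$. Since over $\C$ we have $\Sp_\lambda \cap (\Sp_\lambda)^\perp = 0$, this forces $\Sp_\lambda$ to be irreducible. For distinctness, I would show that a nonzero $S_n$-homomorphism $\Sp_\lambda \to M^\mu$ exists only if $\lambda$ dominates $\mu$; applying this with the inclusion $\Sp_\mu \subseteq M^\mu$ shows an isomorphism $\Sp_\lambda \cong \Sp_\mu$ forces $\lambda \succeq \mu$ and $\mu \succeq \lambda$, hence $\lambda = \mu$.

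Finally, the counting and the dimension. The number of modules just constructed is the number of partitions of $n$, which equals the number of conjugacy classes of $S_n$ (the cycle types), which over $\C$ equals the number of irreducibles; having exhibited that many pairwise non-isomorphic irreducibles, we conclude they are all of them, so they are exactly indexed by $\lambda \vdash n$. For the dimension, I would prove that the \emph{standard} polytabloids $\{ e_t : t \text{ standard}\}$ form a basis of $\Sp_\lambda$: linear independence via a total order on tabloids under which each standard $e_t$ has a distinct maximal term, and spanning via the straightening algorithm, which rewrites an arbitrary $e_t$ as an integer combination of standard ones using Garnir relations. This gives $\dim \Sp_\lambda = \#\{\text{SYT of shape } \lambda\}$, as claimed. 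The hard part will be exactly these two combinatorial pillars — the sign-lemma bookkeeping behind the submodule theorem, and proving both that the Garnir elements annihilate the relevant polytabloids and that the straightening process terminates; the remaining ingredients (invariance of the form, the conjugacy-class count) are routine.
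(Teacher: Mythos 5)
The paper does not prove this statement; it cites it as a standard fact to \cite{sagan2013symmetric}. Your proposal is a correct sketch of the standard Specht-module proof — polytabloid construction, James's submodule theorem for irreducibility, dominance order for distinctness, conjugacy-class count for exhaustion, and straightening/Garnir relations for the standard-basis dimension count — which is precisely the treatment in the cited reference, so your approach matches the paper's (implicit) one.
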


\begin{fact}[\cite{goodman2009symmetry}]
For each $\lambda \vdash n$, there is a (unique) irreducible polynomial representation of $GL_d$ corresponding to $\lambda$. We denote the corresponding map and vector space $(\pi_{\lambda}, V_{\lambda}^d)$.  The dimension $\dim(V_{\lambda}^d)$ is equal to the number of semistandard Young tableaux of shape $\lambda$ with entries in $[d]$.  This representation, restricted to $U_d$ is also an irreducible representation.
\end{fact}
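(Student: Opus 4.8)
The plan is to deduce all three assertions from Schur--Weyl duality. Consider $(\C^d)^{\otimes n}$ equipped with the commuting actions of $S_n$ (permuting the $n$ tensor factors) and of $GL_d$ (acting diagonally by $U \mapsto U^{\otimes n}$). The crux is a double-commutant statement: inside $\mathrm{End}((\C^d)^{\otimes n})$, the subalgebra generated by $S_n$ and the subalgebra generated by the $GL_d$-action are mutual commutants, and both are semisimple. Proving this is the main obstacle, and the cleanest route is Weyl's unitarian trick: $U_d$ is compact, so every finite-dimensional $U_d$-representation is unitarizable and hence completely reducible; moreover, since $\mathfrak{gl}_d(\C) = \mathfrak{u}_d \oplus i\,\mathfrak{u}_d$, a subspace of a polynomial (hence holomorphic) $GL_d$-representation is $GL_d$-invariant precisely when it is $U_d$-invariant, so polynomial $GL_d$-representations are completely reducible as well, and the double-commutant theorem applies. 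Decomposing $(\C^d)^{\otimes n}$ as a bimodule then gives
\begin{equation}
(\C^d)^{\otimes n} \;\cong\; \bigoplus_{\lambda \vdash n,\ \ell(\lambda) \le d} V_\lambda^d \otimes \Sp_\lambda ,
\end{equation}
where one \emph{defines} $V_\lambda^d := \mathrm{Hom}_{S_n}(\Sp_\lambda, (\C^d)^{\otimes n})$; by the double-commutant theorem each $V_\lambda^d$ is then an irreducible $GL_d$-module, and distinct $\lambda$ yield non-isomorphic modules. For existence and uniqueness of ``the'' irreducible polynomial representation attached to $\lambda$, I would note that any polynomial $GL_d$-representation splits into homogeneous components (isolate the degree-$n$ part using the scalar matrices $z\cdot I$, $|z| = 1$, via a Fourier-type projection), and a homogeneous degree-$n$ polynomial representation is a subquotient of a finite direct sum of copies of $(\C^d)^{\otimes n}$; complete reducibility then forces it to decompose as a direct sum of the $V_\lambda^d$ with $\lambda \vdash n$. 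Hence $\lambda \mapsto V_\lambda^d$ is a bijection onto the irreducible polynomial representations.

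For the dimension formula, I would compute the character of $V_\lambda^d$ as a symmetric function of the eigenvalues $x_1, \dots, x_d$ of the group element. Taking traces in the displayed decomposition against a permutation $\sigma$ of cycle type $\mu$ gives $\mathrm{tr}(\sigma \cdot U^{\otimes n}) = p_\mu(x_1,\dots,x_d)$, the power-sum symmetric polynomial; matching this against the $S_n$-character theory of the $\Sp_\lambda$ (the Frobenius character formula) identifies the character of $V_\lambda^d$ with the Schur polynomial $s_\lambda(x_1,\dots,x_d)$. The combinatorial definition of the Schur polynomial reads $s_\lambda = \sum_T x^{\mathrm{wt}(T)}$, the sum ranging over semistandard Young tableaux $T$ of shape $\lambda$ with entries in $[d]$, where $\mathrm{wt}(T)_j$ is the number of occurrences of $j$ in $T$. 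Evaluating at $x_1 = \cdots = x_d = 1$ therefore gives $\dim V_\lambda^d = s_\lambda(1,\dots,1) = \#\{\,\text{SSYT of shape }\lambda\text{ with entries in }[d]\,\}$; in particular this vanishes exactly when $\ell(\lambda) > d$, consistent with the index range of the direct sum above.

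Finally, the restriction of $V_\lambda^d$ to $U_d$ is irreducible by the same equivalence used above: a subspace of $V_\lambda^d$ is $U_d$-invariant if and only if it is $GL_d$-invariant, and $V_\lambda^d$ has no proper nonzero $GL_d$-invariant subspace. As an alternative to the Schur--Weyl route, one can instead invoke the highest-weight classification of the irreducible representations of the compact group $U_d$ --- parametrized by dominant weights $\lambda_1 \ge \cdots \ge \lambda_d$, the \emph{polynomial} ones being those with $\lambda_d \ge 0$, i.e.\ partitions with at most $d$ parts --- transfer to $GL_d$ by complexification, and then read off the dimension from the Weyl dimension formula, which equals $\#\mathrm{SSYT}(\lambda,[d])$ by a classical identity (e.g.\ via the hook-content formula). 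Either way, the only genuinely nontrivial ingredient is the semisimplicity/double-commutant step; everything else is bookkeeping with symmetric functions and Young tableaux.
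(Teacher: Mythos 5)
The paper states this as a Fact citing \cite{goodman2009symmetry} and does not include a proof, so there is no in-paper argument to compare against; what you have written is the standard textbook development, and it is correct in outline. Briefly: Weyl's unitary trick gives complete reducibility of holomorphic/polynomial $GL_d$-representations, the double-commutant theorem then yields the bimodule decomposition $(\C^d)^{\otimes n} \cong \bigoplus_{\lambda \vdash n,\, \ell(\lambda)\le d} V_\lambda^d \otimes \Sp_\lambda$ with $V_\lambda^d := \mathrm{Hom}_{S_n}(\Sp_\lambda,(\C^d)^{\otimes n})$ irreducible and pairwise non-isomorphic; the character computation identifies the character of $V_\lambda^d$ with the Schur polynomial $s_\lambda$, whose specialization $s_\lambda(1,\dots,1)$ counts SSYT of shape $\lambda$ with entries in $[d]$; and the same holomorphic-invariance equivalence ($U_d$-invariant subspace $\Leftrightarrow$ $GL_d$-invariant subspace) gives irreducibility of the restriction to $U_d$. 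One spot worth tightening: the clause ``a homogeneous degree-$n$ polynomial representation is a subquotient of a finite direct sum of copies of $(\C^d)^{\otimes n}$'' is true but reads as an unproved assertion; the cleanest justification is that a homogeneous degree-$n$ polynomial representation of $GL_d$ is exactly a module over the Schur algebra $S(n,d) = \mathrm{End}_{S_n}\bigl((\C^d)^{\otimes n}\bigr)$, and the double-commutant theorem applied to $(\C^d)^{\otimes n}$ shows $S(n,d)$ is semisimple with simple modules precisely the $V_\lambda^d$ for $\lambda \vdash n$, $\ell(\lambda)\le d$, so that every such module is already a direct sum of $V_\lambda^d$'s without passing through subquotients. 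Your alternative route via highest-weight theory for $U_d$ plus the hook-content formula is likewise valid and arguably closer to how Goodman--Wallach actually organize the material.
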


\begin{theorem}[Schur-Weyl Duality \cite{goodman2009symmetry}]\label{thm:schur-weyl}
Consider the representation of $S_n \times GL_d$ on $(\C^{d})^{\otimes n}$ where the action of the permutation $\pi \in S_n$ permutes the different copies of $\C^{d}$ and the action of $U \in GL_d$ is applied independently to each copy.  This representation can be decomposed as a direct sum 
\[
(\C^{d})^{\otimes n} = \bigoplus_{\substack{\lambda \vdash n \\ \ell(\lambda) \leq d  }} \Sp_{\lambda} \otimes V_{\lambda}^d \,.
\]
\end{theorem}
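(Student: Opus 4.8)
The plan is to deduce the decomposition from the double commutant theorem applied to the pair of commuting actions on $V \triangleq (\C^d)^{\otimes n}$. First I would observe that the $S_n$-action (permuting tensor factors) and the $GL_d$-action (diagonal, $U \mapsto U^{\otimes n}$) manifestly commute, so $V$ is a representation of $S_n \times GL_d$; equivalently, writing $\calA \subseteq \mathrm{End}(V)$ for the image of the group algebra $\C[S_n]$ and $\calB \subseteq \mathrm{End}(V)$ for the algebra generated by $\{U^{\otimes n} : U \in GL_d\}$, we have $\calB \subseteq \calA'$, the commutant of $\calA$. Since $\C[S_n]$ is semisimple by Maschke's theorem, the substantive content is to show that in fact $\calB = \calA'$; the structure of $V$ as a module then follows from general principles.

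To prove $\calB = \calA'$, I would use the canonical isomorphism $\mathrm{End}(V) \cong \mathrm{End}(\C^d)^{\otimes n}$, under which conjugation by the permutation action of $S_n$ on $V$ corresponds to the permutation action of $S_n$ on the $n$ tensor factors of $\mathrm{End}(\C^d)^{\otimes n}$. Hence $\calA' = (\mathrm{End}(\C^d)^{\otimes n})^{S_n} = \mathrm{Sym}^n(\mathrm{End}(\C^d))$. By a polarization identity, $\mathrm{Sym}^n(W)$ is spanned by $\{w^{\otimes n} : w \in W\}$ for any vector space $W$; applying this to $W = \mathrm{End}(\C^d)$ and noting that $A \mapsto A^{\otimes n}$ is a polynomial map while $GL_d$ is Zariski-dense in $\mathrm{End}(\C^d)$, we conclude that $\{U^{\otimes n} : U \in GL_d\}$ already spans $\mathrm{Sym}^n(\mathrm{End}(\C^d)) = \calA'$, i.e.\ $\calB = \calA'$.

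Given this, the double commutant theorem gives that $\calB$ is semisimple, $\calA = \calB'$, and $V$ decomposes as $\bigoplus_{\lambda} \Sp_\lambda \otimes M_\lambda$, where $\lambda$ ranges over the partitions $\lambda \vdash n$ for which the multiplicity space $M_\lambda \triangleq \mathrm{Hom}_{S_n}(\Sp_\lambda, V)$ is nonzero, with each nonzero $M_\lambda$ an irreducible $\calB$-module, hence an irreducible polynomial representation of $GL_d$. It remains to pin down which $\lambda$ occur and to identify $M_\lambda$ with $V_\lambda^d$. For this I would exhibit, whenever $\ell(\lambda) \le d$, an explicit highest-weight vector of weight $\lambda$ inside $V$ by applying the Young symmetrizer $c_\lambda$ to the tensor $e_1^{\otimes \lambda_1} \otimes e_2^{\otimes \lambda_2} \otimes \cdots$; since the $GL_d$-irrep with highest weight $\lambda$ is precisely $V_\lambda^d$ (as recalled before Theorem~\ref{thm:schur-weyl}), this shows $V_\lambda^d$ embeds into $M_\lambda$ for all such $\lambda$. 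To see there is nothing else, I would compare dimensions using the RSK correspondence, which gives a bijection between length-$n$ words over $[d]$ and pairs consisting of an SYT and an SSYT with entries in $[d]$ of a common shape $\lambda \vdash n$, hence $d^n = \sum_{\lambda \vdash n,\, \ell(\lambda) \le d} \dim(\Sp_\lambda)\cdot \dim(V_\lambda^d)$; matching this against $\dim V = \sum_\lambda \dim(\Sp_\lambda) \cdot \dim(M_\lambda)$ forces $M_\lambda \cong V_\lambda^d$ exactly and $M_\lambda = 0$ whenever $\ell(\lambda) > d$.

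The main obstacle is the step $\calB = \calA'$: establishing that the commutant of the symmetric group action is spanned by the $n$-th tensor powers $U^{\otimes n}$, which requires the polarization identity together with the Zariski-density argument to pass from arbitrary matrices to invertible ones. A secondary subtlety is upgrading the dimension count into an actual isomorphism $M_\lambda \cong V_\lambda^d$ of $GL_d$-representations, which is why the highest-weight-vector construction (or, alternatively, directly invoking the classification of polynomial irreps of $GL_d$ from \cite{goodman2009symmetry}) is needed rather than a bare dimension match.
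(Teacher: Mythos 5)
The paper offers no proof of this theorem; it is stated as a citation to \cite{goodman2009symmetric}, since Schur--Weyl duality is a classical result. Your argument is correct and, in fact, is essentially the standard textbook proof that the cited reference itself uses: the two actions commute, the commutant of the $S_n$-action is $\mathrm{Sym}^n(\mathrm{End}(\C^d))$, polarization plus Zariski-density of $GL_d$ in $\mathrm{End}(\C^d)$ shows this is spanned by $\{U^{\otimes n}\}$, and the double commutant theorem then delivers the isotypic decomposition. Your finishing step (exhibiting a highest-weight vector of weight $\lambda$ by applying a Young symmetrizer to $e_1^{\otimes\lambda_1}\otimes e_2^{\otimes\lambda_2}\otimes\cdots$, then using RSK to show the dimensions saturate $d^n$) is a clean and correct way to pin the multiplicity spaces down to exactly $V_\lambda^d$ for $\ell(\lambda)\le d$. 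There is no gap and no discrepancy with the paper; you have simply supplied the proof the paper delegated to the reference.
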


\begin{definition}[Schur Subspace]
We call $\Sp_{\lambda} \otimes V_{\lambda}^d$ the $\lambda$-Schur subspace.  Given integers $n,d$ and $\lambda \vdash n$, we define $\Pi_{\lambda}^d: (\C^{d})^{\otimes n} \rightarrow  \Sp_{\lambda} \otimes V_{\lambda}^d $ to project onto the $\lambda$-Schur subspace.  
\end{definition}

\begin{theorem}[Gelfand-Tsetlin Basis \cite{goodman2009symmetry}]\label{thm:gf-basis}
Let $n,d$ be positive integers.  For each partition $\lambda \vdash n$ where $\lambda$ has at most $d$ parts, there is a basis $v_1, \dots , v_m$ of $V_{\lambda}^d$ with $m = \dim(V_{\lambda}^d)$ such that for any matrix $D_{\alpha} = \diag(\alpha_1, \dots , \alpha_d)$, we have $v_i^\dagger \pi_{\lambda}(D_{\alpha})v_i = \alpha^{f^{(i)}}$ for all $i$ where $f^{(i)}$ are each $d$-tuples that give the frequencies of $1,2, \dots , d$ in each of the different semi-standard tableaux of shape $\lambda$.
\end{theorem}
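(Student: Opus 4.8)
The plan is to build the basis by iterating the branching rule down the chain of subgroups $GL_1 \subset GL_2 \subset \cdots \subset GL_d$, where $GL_k$ is embedded as the top-left $k\times k$ block. The only substantive step is the branching rule itself: as a representation of the block-diagonal subgroup $GL_{d-1}\times GL_1$, the space $V_\lambda^d$ decomposes \emph{multiplicity-freely} as $\bigoplus_{\mu} V_\mu^{d-1}\otimes L_{|\lambda|-|\mu|}$, where $L_j$ is the one-dimensional representation $z\mapsto z^j$ of $GL_1$ and $\mu$ ranges over all partitions that \emph{interlace} $\lambda$, i.e. $\lambda_1\ge\mu_1\ge\lambda_2\ge\mu_2\ge\cdots$. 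I would prove this via characters: the character of $V_\lambda^d$ is the Schur polynomial $s_\lambda(x_1,\dots,x_d)$, and the combinatorial identity $s_\lambda(x_1,\dots,x_d)=\sum_{\mu\preceq\lambda} s_\mu(x_1,\dots,x_{d-1})\, x_d^{\,|\lambda|-|\mu|}$ --- obtained by slicing each SSYT of shape $\lambda$ along the boxes containing $d$ --- exhibits this character as a sum of pairwise distinct irreducible characters of $GL_{d-1}\times GL_1$, each with multiplicity one. Since a finite-dimensional polynomial representation of $GL_{d-1}\times GL_1$ is determined by its character, this is exactly the claimed decomposition.

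Next, iterating this branching step, peeling off one coordinate at a time, produces a decomposition of $V_\lambda^d$ into one-dimensional subspaces, one for each chain $\lambda=\lambda^{(d)}\succeq\lambda^{(d-1)}\succeq\cdots\succeq\lambda^{(1)}\succeq\lambda^{(0)}=\emptyset$ of successively interlacing partitions (a Gelfand--Tsetlin pattern); because every step in the nested sequence of restrictions is multiplicity-free, this refinement into lines is canonical. The lines are mutually orthogonal for the $U_d$-invariant inner product, so picking a unit vector $v_T$ in each gives an orthonormal basis. Finally, Gelfand--Tsetlin patterns are in bijection with semistandard Young tableaux $T$ of shape $\lambda$ with entries in $[d]$: send $T$ to the chain whose $k$-th term $\lambda^{(k)}$ is the shape of the subtableau of $T$ formed by the boxes with entries $\le k$ (column-strictness guarantees $\ell(\lambda^{(k)})\le k$, so each $V_{\lambda^{(k)}}^{k}$ makes sense). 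In particular the number of lines equals the number of such tableaux, which by the dimension formula stated above is $\dim V_\lambda^d$, as it must be.

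It then remains to compute the eigenvalue of $\pi_\lambda(D_\alpha)$ on $v_T$. Because the step-by-step restriction places $v_T$ inside a $GL_k$-subrepresentation isomorphic to $V_{\lambda^{(k)}}^{k}$ for every $k$, it is a simultaneous eigenvector for the centers $Z(GL_k)=\{\diag(z,\dots,z,1,\dots,1):z\in\C^\times\}$ (with $z$ in the first $k$ slots), on which $Z(GL_k)$ acts by $z^{|\lambda^{(k)}|}$ since $V_{\lambda^{(k)}}^{k}$ is a polynomial representation homogeneous of degree $|\lambda^{(k)}|$. Writing $D_\alpha=\prod_{k=1}^d \diag(z_k,\dots,z_k,1,\dots,1)$ with $z_k=\alpha_k/\alpha_{k+1}$ and the convention $\alpha_{d+1}=1$, we get $\pi_\lambda(D_\alpha)v_T=\Bigl(\prod_{k=1}^d(\alpha_k/\alpha_{k+1})^{|\lambda^{(k)}|}\Bigr)v_T=\Bigl(\prod_{k=1}^d\alpha_k^{\,|\lambda^{(k)}|-|\lambda^{(k-1)}|}\Bigr)v_T$, and $|\lambda^{(k)}|-|\lambda^{(k-1)}|$ is exactly the number of boxes of $T$ with entry $k$, i.e. the $k$-th coordinate of $f^{(T)}$. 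Since $v_T$ is a unit vector, $v_T^\dagger\pi_\lambda(D_\alpha)v_T=\alpha^{f^{(T)}}$, which is the claim.

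The main obstacle is the first paragraph: establishing the branching rule together with its multiplicity-freeness. Everything afterwards is bookkeeping with tableaux and diagonal matrices. (An alternative to the character argument would be to construct explicit highest-weight vectors generating each $V_\mu^{d-1}$ inside $V_\lambda^d$ via Young-symmetrizer or Pieri-type constructions, but the character route is considerably shorter.)
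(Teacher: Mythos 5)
The paper offers no proof of this statement: it is imported as a textbook fact, cited to Goodman--Wallach, so there is no ``paper's proof'' to compare against. Your argument is the standard Gelfand--Tsetlin construction (the one you would in fact find in that reference), and it is correct: the branching rule is established via the Schur-function identity $s_\lambda(x_1,\dots,x_d)=\sum_{\mu\preceq\lambda}s_\mu(x_1,\dots,x_{d-1})x_d^{|\lambda|-|\mu|}$, whose multiplicity-freeness lets you iterate restriction down $GL_1\subset\cdots\subset GL_d$ to canonical mutually orthogonal lines indexed by GT patterns, which in turn are in bijection with SSYT; the eigenvalue on each line is read off from the central characters of the nested $GL_k$-subrepresentations. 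Note also that the theorem as stated asks only for $v_i^\dagger\pi_\lambda(D_\alpha)v_i=\alpha^{f^{(i)}}$; you prove the stronger statement that each $v_i$ is a genuine weight vector of weight $f^{(i)}$ (which, after normalizing $v_i$ to unit length, implies what is asked).

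Two very small points of polish. First, the factorization $D_\alpha=\prod_{k=1}^d\diag(z_k,\dots,z_k,1,\dots,1)$ with $z_k=\alpha_k/\alpha_{k+1}$ presupposes $\alpha_{k+1}\neq 0$; since $\pi_\lambda$ is polynomial in the matrix entries, the eigenvalue identity extends from the dense set of $\alpha$ with all entries nonzero to all $\alpha$ by continuity, and it is worth saying so since the theorem allows arbitrary diagonal $D_\alpha$. Second, the orthogonality of the lines deserves one more sentence: at each branching step the summands are pairwise non-isomorphic irreducibles of the compact subgroup $U_{k-1}\times U_1$, hence orthogonal under any $U_d$-invariant inner product, and this property is inherited through the iteration.
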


\begin{definition}[Maximal-weight Vector]
For a partition $\lambda \vdash n$, we define the maximal weight vector $v_{\lambda} \in V_{\lambda}^d$ to be the vector given by Theorem~\ref{thm:gf-basis} with $f^{(i)} = \lambda_i$ for all $i$.  
\end{definition}
\begin{remark}
Note that for a given partition $\lambda = (\lambda_1 , \dots , \lambda_d)$, there is a semi-standard tableaux of shape $\lambda$ with frequencies  $\lambda_1, \dots , \lambda_d$ (where we just fill the $i$th row with all entries equal to $i$) so the vector defined above indeed exists.
\end{remark}

\begin{definition}[Weak Schur Sampling]
We use the term weak Schur sampling to refer to the POVM on $\C^{d^n \times d^n}$ with elements given by $\Pi_\lambda^d$ for $\lambda$ ranging over all partitions of $n$ into at most $d$ parts.
\end{definition}

\begin{definition}[Keyl's POVM \cite{keyl2006quantum}]\label{def:keyl-povm}
We define the following POVM on $\C^{d^n \times d^n}$:  first perform weak Schur sampling to obtain $\lambda \vdash n$.  Then discard the permutation register (corresponding to the subspace $\Sp_{\lambda} $). 
 Within the remaining subspace $V_{\lambda}^d$, measure according to
\[
\dim(V_{\lambda}^d)  \{ \pi_{\lambda}(U) v_{\lambda}v_{\lambda}^\dagger \pi_{\lambda}(U)^{\dagger} \}_{U}
\]
where $U$ ranges over Haar random unitaries.  Note that the outcome of the measurement consists of a partition $\lambda \vdash n$ and a unitary $U \in \C^{d \times d}$.
\end{definition}

\begin{definition}[Schur Weyl Distribution]
Given integers $n,d$ and a tuple $(\alpha_1, \dots , \alpha_d)$ with $\alpha_i \geq 0$ and $\alpha_1 + \dots + \alpha_d = 1$, the Schur-Weyl distribution $\mathrm{SW}^n(\alpha)$ is a distribution over partitions $\lambda \vdash n$ into at most $d$ parts obtained by measuring the state $\diag(\alpha_1 , \dots , \alpha_d)^{\otimes n} $ via weak Schur sampling.  When $\alpha$ is uniform, we may write $\mathrm{SW}^n_d$ instead.
\end{definition}

\section{Basic Facts}

\begin{claim}\label{claim:haar-moments}
Let $X,Y \in \C^{d \times d}$ be Hermitian matrices.  Then
\[
\E_{U}[(U^\dagger X U) \langle U^{\dagger}X U, Y \rangle  ] =  \frac{1}{d^2 - 1}\left(\norm{X}_F^2 - \frac{\tr(X)^2}{d} \right) \left( Y - \frac{\tr(Y) I}{d}\right) + \frac{\tr(X)^2 \tr(Y) I}{d^2}
\]
where the expectation is over a Haar random unitary $U$.
\end{claim}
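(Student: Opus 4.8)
The plan is to compute the Haar average
$\E_U[(U^\dagger X U)\,\langle U^\dagger X U, Y\rangle]$ by reducing everything to Weingarten calculus, i.e. to moments of the form $\E_U[U_{i_1 j_1} U_{i_2 j_2} \overline{U_{k_1 l_1}} \overline{U_{k_2 l_2}}]$. Concretely, write $(U^\dagger X U)_{ab} = \sum_{p,q} \overline{U_{pa}} X_{pq} U_{qb}$ and $\langle U^\dagger X U, Y\rangle = \sum_{r,s,t} \overline{U_{rs}}X_{rt}U_{ts} \overline{Y_{\cdot}}$—being careful with the Hermitian inner product convention—so that the whole expression becomes a linear combination of degree-$(2,2)$ monomials in $U,\overline U$ contracted against entries of $X$ and $Y$. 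The second-order Weingarten formula states
\[
\E_U[U_{i_1 j_1}U_{i_2 j_2}\overline{U_{i_1' j_1'}}\,\overline{U_{i_2' j_2'}}] = \sum_{\sigma,\tau\in S_2} \delta_{i_1 i_{\sigma(1)}'}\delta_{i_2 i_{\sigma(2)}'}\delta_{j_1 j_{\tau(1)}'}\delta_{j_2 j_{\tau(2)}'}\,\mathrm{Wg}(\sigma\tau^{-1},d),
\]
with $\mathrm{Wg}(\mathrm{id},d) = \frac{1}{d^2-1}$ and $\mathrm{Wg}((12),d) = \frac{-1}{d(d^2-1)}$. Carrying out the four $(\sigma,\tau)$ contractions and collecting terms produces exactly expressions in the three scalars $\tr(X)$, $\tr(X^2)=\norm{X}_F^2$ (using Hermiticity), and the matrices $I$, $Y$, $\tr(Y)I$; matching coefficients should give the stated identity.

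An alternative, and probably cleaner, route is to exploit symmetry directly. The map $\Phi(Y) \triangleq \E_U[(U^\dagger X U)\langle U^\dagger X U, Y\rangle]$ is a linear, $\operatorname{Ad}$-equivariant superoperator on Hermitian matrices: $\Phi(V Y V^\dagger) = V\Phi(Y)V^\dagger$ for all unitary $V$, because we may absorb $V$ into the Haar integral. By Schur's lemma applied to the decomposition of Hermitian matrices into the trace part $\R I$ and the traceless part, any such $\Phi$ must have the form $Y \mapsto \alpha\,(Y - \tfrac{\tr Y}{d} I) + \beta\,\tfrac{\tr Y}{d} I$ for scalars $\alpha,\beta$ depending only on $X$ and $d$. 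It then remains to pin down $\alpha$ and $\beta$ by evaluating two traces: taking $\langle \Phi(Y), I\rangle = \tr\Phi(Y)$ recovers $\beta$ via $\E_U \tr(U^\dagger X U)\langle U^\dagger X U,Y\rangle = \tr(X)\,\E_U\langle U^\dagger X U, Y\rangle = \tr(X)\cdot \tfrac{\tr(X)\tr(Y)}{d}$, giving $\beta = \tr(X)^2/d$; and computing $\langle \Phi(Y), Y\rangle$ for, say, $Y$ a rank-one traceless-adjusted projector, or more simply using the known second moment $\E_U[(U^\dagger X U)\otimes (U^\dagger X U)]$, yields $\alpha = \tfrac{1}{d^2-1}(\norm{X}_F^2 - \tfrac{\tr(X)^2}{d})$. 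Substituting back gives the claim, after noting that $\alpha\cdot(-\tfrac{\tr Y}{d}) I + \beta \tfrac{\tr Y}{d} I$ combines with the scalar bookkeeping to match the right-hand side as written (in particular the lone $\tr(X)^2\tr(Y)I/d^2$ term).

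I would carry out the symmetry-based argument as the main line: (1) verify $\operatorname{Ad}$-equivariance of $\Phi$; (2) invoke Schur's lemma to get the two-parameter form; (3) compute $\beta$ by the easy trace identity above, which only needs the first-moment fact $\E_U[U^\dagger X U] = \tfrac{\tr X}{d} I$; (4) compute $\alpha$ by pairing against $X$ itself or by quoting the standard formula $\E_U[(U^\dagger X U)^{\otimes 2}] = c_1(\norm{X}_F^2,\tr X)\,\tfrac{I+\mathrm{SWAP}}{2} + \dots$, extracting the coefficient that multiplies the traceless part; (5) assemble and simplify. The Weingarten computation is a safe fallback that is guaranteed to work but is more tedious.

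The main obstacle is bookkeeping rather than conceptual: getting the inner-product conjugation conventions exactly right (the paper uses $\langle A,B\rangle$, which for Hermitian $A,B$ equals $\tr(AB)$ and is real, so this is mild), and correctly isolating the coefficient $\alpha$ of the traceless part—this is where an off-by-a-factor in $d^2-1$ versus $d(d^2-1)$ can creep in. A useful sanity check I would run at the end: take $X = I$, where the left side is $\E_U \langle I, Y\rangle \cdot I = \tr(Y) I$, and confirm the right side collapses to $\tr(Y)I$ since $\norm{I}_F^2 - \tr(I)^2/d = d - d = 0$ kills the first term and $\tr(I)^2\tr(Y)I/d^2 = \tr(Y) I$; and take $Y = I$, where the left side is $\E_U[(U^\dagger X U)\tr(X)] = \tr(X)^2 I/d$, matching the right side since the first term vanishes ($\tr(I) - \tr(I)\cdot d/d = 0$) and the last term is $\tr(X)^2 d I/d^2 = \tr(X)^2 I/d$.
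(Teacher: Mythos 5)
Your symmetry-based argument is correct and complete. The map $\Phi(Y)=\E_U\brk{(U^\dagger X U)\langle U^\dagger X U,Y\rangle}$ is indeed $\operatorname{Ad}$-equivariant (absorb $V$ into the Haar integral via $W=UV$); since the Hermitian matrices decompose into $\R I$ plus the traceless part, which is an absolutely irreducible real representation of $U(d)$ for $d\geq 2$, Schur's lemma forces the two-scalar form $Y\mapsto\alpha(Y-\tfrac{\tr Y}{d}I)+\beta\tfrac{\tr Y}{d}I$ with no cross terms. Your computation of $\beta=\tr(X)^2/d$ via $\tr\Phi(Y)$ and the first-moment identity $\E_U[U^\dagger XU]=\tfrac{\tr X}{d}I$ is right, and $\alpha=\tfrac{1}{d^2-1}(\norm{X}_F^2-\tfrac{\tr(X)^2}{d})$ follows by pairing against a traceless $Y$ using the standard formula $\E_U[(U^\dagger XU)^{\otimes 2}]=aI+b\,\mathrm{SWAP}$ with $b=\tfrac{d\norm{X}_F^2-\tr(X)^2}{d(d^2-1)}$, giving $\E_U\langle U^\dagger XU,Y\rangle^2=b\norm{Y}_F^2=\alpha\norm{Y}_F^2$. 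Your sanity checks at $X=I$ and $Y=I$ both pass. The paper states Claim~\ref{claim:haar-moments} without proof, so there is nothing to compare against, but this is the natural argument and it is sound; the only thing I would make explicit if you write it up is the observation that the adjoint representation on traceless Hermitians has endomorphism ring $\R$ (i.e.\ is of real type), which is what licenses the scalar form on that summand.
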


\begin{claim}\label{claim:typical-young-tableaux1}
Let $\alpha = (\alpha_1, \dots , \alpha_d)$ be a vector of nonnegative weights summing to $1$. Then for  $\lambda \sim \mathrm{SW}^n(\alpha)$, with probability at least $1/2$,
\[
\sum_{i = 1}^d \lambda_i^2 \geq \frac{n^{1.5}}{4} \,.
\]
\end{claim}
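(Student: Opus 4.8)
The plan is to bound the number of parts of $\lambda$ and then invoke Cauchy--Schwarz. Write $\ell(\lambda) = \#\{i : \lambda_i > 0\}$ for the number of parts. Since $\lambda \vdash n$, Cauchy--Schwarz gives $\sum_{i} \lambda_i^2 \ge (\sum_i \lambda_i)^2 / \ell(\lambda) = n^2/\ell(\lambda)$ deterministically, so it suffices to prove that $\ell(\lambda) \le 4\sqrt n$ with probability at least $1/2$; on that event we get $\sum_i \lambda_i^2 \ge n^2/(4\sqrt n) = n^{1.5}/4$. Thus the claim reduces entirely to a tail bound on the number of rows of a Schur--Weyl--distributed shape, and in particular it is enough to prove $\ell(\lambda) \le d \le 4\sqrt n$ deterministically handles the regime $d \le 4\sqrt n$, so the real work is in the regime where $\alpha$ is spread over many coordinates.

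To get the tail bound I would pass through the RSK correspondence. The standard dictionary (Robinson--Schensted--Knuth together with Schur--Weyl duality, which make both descriptions assign a partition $\lambda\vdash n$ the probability $\dim(\Sp_\lambda)\,\Tr\,\pi_\lambda(\diag \alpha)$) says that $\mathrm{SW}^n(\alpha)$ is the law of the common shape of the RSK tableaux of a word $w = (w_1,\dots,w_n)$ whose letters are drawn i.i.d.\ from $\alpha$, and by Schensted's theorem the number of rows of this shape equals the length $L$ of the longest \emph{strictly decreasing} subsequence of $w$. So it remains to show $\Pr[L > 4\sqrt n] < 1/2$ for an i.i.d.-$\alpha$ word, and here the argument becomes distribution-free: for fixed positions $i_1 < \dots < i_k$ the letters $w_{i_1},\dots,w_{i_k}$ are exchangeable, so conditioned on their being distinct all $k!$ orderings are equally likely, whence $\Pr[w_{i_1} > \dots > w_{i_k}] \le 1/k!$ regardless of $\alpha$. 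A union bound over the $\binom nk$ choices of positions gives $\Pr[L \ge k] \le \binom nk/k! \le n^k/(k!)^2 \le (e^2 n/k^2)^k$, using $k! \ge (k/e)^k$. Taking $k = \lfloor 4\sqrt n\rfloor + 1$ makes $e^2 n/k^2 \le e^2/16 < 1/2$, so $\Pr[L > 4\sqrt n] = \Pr[L \ge k] < 2^{-k} \le 1/2$, as needed.

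I do not expect a genuine obstacle; the points requiring care are (i) that the relevant statistic is the longest strictly decreasing subsequence, i.e.\ $\lambda'_1 = \ell(\lambda)$, not the longest weakly increasing one, and (ii) that the estimate $\Pr[w_{i_1} > \dots > w_{i_k}] \le 1/k!$ is uniform in $\alpha$, so the whole bound is dimension- and distribution-free. The only numeric commitment is the constant $4$, and it is comfortably compatible with the $1/2$ threshold because $e^2/16 < 1/2$; if one preferred, the same input shows $\E[L] = O(\sqrt n)$ and one could finish via Markov, but the direct union bound is cleaner and self-contained. As a consistency check, when $d \le 4\sqrt n$ one has $\ell(\lambda)\le d\le 4\sqrt n$ with probability one, so the probabilistic step is only doing work precisely in the regime where $w$ behaves like a uniformly random permutation and $L$ concentrates near $2\sqrt n$.
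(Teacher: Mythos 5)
The paper states this claim without proof (it sits in the ``Basic Facts'' appendix alongside other unproved lemmas, presumably deferred to \cite{chen2024optimal} or treated as standard), so there is no paper proof to compare against; your argument stands on its own, and it is correct. The reduction $\sum_i\lambda_i^2 \ge n^2/\ell(\lambda)$ via Cauchy--Schwarz is right, and so is the identification of $\mathrm{SW}^n(\alpha)$ with the RSK shape law of an i.i.d.-$\alpha$ word: both assign $\lambda$ probability $\dim(\Sp_\lambda)\,s_\lambda(\alpha)$. Schensted's theorem (more precisely, the Greene--Kleitman form of it for words) indeed gives $\ell(\lambda)=\lambda_1'$ equal to the longest \emph{strictly} decreasing subsequence length $L$, and your exchangeability argument $\Pr[w_{i_1}>\cdots>w_{i_k}]\le 1/k!$ is distribution-free since a strictly decreasing $k$-tuple is automatically a distinct $k$-tuple. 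The arithmetic $\binom nk/k!\le n^k/(k!)^2\le(e^2n/k^2)^k$ with $k=\lfloor 4\sqrt n\rfloor+1$ gives $e^2/16<1/2$, hence $\Pr[L\ge k]<2^{-k}\le 1/2$, which is exactly what you need. This is a clean, self-contained, and slightly stronger argument than is strictly required (it gives an exponentially small tail rather than just the median bound), and it nicely isolates why the claim is dimension- and distribution-free. The one thing I would tidy is the garbled sentence about the $d\le 4\sqrt n$ regime; it is not load-bearing, since your union bound works for all $d$, so you can simply drop it.
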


\begin{claim}\label{claim:typical-young-tableaux2}
Let $\alpha = (\alpha_1, \dots , \alpha_d)$ be a vector of nonnegative weights summing to $1$.  Then for  $\lambda \sim \mathrm{SW}^n(\alpha)$, 
\[
\E\left[ \sum_{i = 1}^d \lambda_i^2 \right] \leq 2((\alpha_1^2 + \dots + \alpha_d^2)n^2 + n^{1.5})
\]
\end{claim}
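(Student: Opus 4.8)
The plan is to reduce the whole computation to the sum of contents of $\lambda$. Write $\lambda'$ for the conjugate partition, $c(\lambda):=\sum_{(r,c)\in\lambda}(c-r)$ for the content sum, and $p_2:=\sum_k\alpha_k^2$. Starting from $\sum_i\lambda_i^2=2\sum_i\binom{\lambda_i}{2}+n$ and the standard identity $\sum_i\binom{\lambda_i}{2}-\sum_j\binom{\lambda'_j}{2}=c(\lambda)$, a one-line manipulation yields the exact identity $\sum_i\lambda_i^2=2\,c(\lambda)+\sum_j(\lambda'_j)^2$. Taking expectations over $\lambda\sim\mathrm{SW}^n(\alpha)$, it then suffices to (a) evaluate $\E[c(\lambda)]$ exactly, and (b) bound $\E[\sum_j(\lambda'_j)^2]=O(n^{3/2})$.

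For (a) I would use the classical fact that the central element $T=\sum_{1\le i<j\le n}(i\,j)\in\C[S_n]$ acts on the irreducible $S_n$-module $\Sp_\lambda$ as the scalar $c(\lambda)$. Under Schur--Weyl duality this means the operator $\wh{T}:=\sum_{i<j}\mathsf{SWAP}_{ij}$ on $(\C^d)^{\otimes n}$ acts as $c(\lambda)\cdot\mathrm{Id}$ on each Schur subspace $\Sp_\lambda\otimes V_\lambda^d$, i.e.\ $\wh{T}=\sum_\lambda c(\lambda)\,\Pi_\lambda^d$. Since $\Pr[\lambda]=\Tr(\Pi_\lambda^d\,\diag(\alpha)^{\otimes n})$ and $\Tr(\mathsf{SWAP}_{ij}\,\diag(\alpha)^{\otimes n})=\Tr(\diag(\alpha)^2)=p_2$ for each of the $\binom n2$ pairs, this gives
\begin{equation}
\E_{\lambda\sim\mathrm{SW}^n(\alpha)}[c(\lambda)] \;=\; \Tr\!\big(\wh{T}\,\diag(\alpha)^{\otimes n}\big) \;=\; \sum_{i<j}\Tr\!\big(\mathsf{SWAP}_{ij}\,\diag(\alpha)^{\otimes n}\big) \;=\; \binom{n}{2}\,p_2 .
\end{equation}

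For (b) I would use the crude bound $\sum_j(\lambda'_j)^2\le(\max_j\lambda'_j)\sum_j\lambda'_j=\ell(\lambda)\cdot n$, reducing matters to $\E[\ell(\lambda)]$, where $\ell(\lambda)=\lambda'_1$ is the number of parts. Via the RSK correspondence, if $w\in[d]^n$ is a random word with i.i.d.\ letters distributed as $\alpha$ then the shape of its insertion tableau is distributed as $\mathrm{SW}^n(\alpha)$, and $\ell(\lambda)$ equals the length of the longest \emph{strictly} decreasing subsequence of $w$. A first-moment bound gives $\Pr[\ell(\lambda)\ge k]\le\binom nk\,e_k(\alpha)\le\binom nk/k!$ (with $e_k$ the elementary symmetric polynomial), and summing over $k$ yields $\E[\ell(\lambda)]=O(\sqrt n)$; to get the precise constant one instead invokes the optimal $2\sqrt n$ bound on the longest increasing subsequence (Logan--Shepp/Vershik--Kerov) together with a stochastic comparison showing the uniform $\alpha$ is extremal, giving $\E[\ell(\lambda)]\le 2\sqrt n$ and hence $\E[\sum_j(\lambda'_j)^2]\le 2n^{3/2}$. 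Combining: $\E[\sum_i\lambda_i^2]=2\E[c(\lambda)]+\E[\sum_j(\lambda'_j)^2]\le n(n-1)p_2+2n^{3/2}\le 2(n^2 p_2+n^{3/2})$.

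The bookkeeping identity and the trace computation for $\E[c(\lambda)]$ are routine; I expect the only delicate point to be the constant in the bound on $\E[\ell(\lambda)]$. The elementary union bound over decreasing subsequences gives a constant of order $e$, which falls short of the factor $2$ in the statement, so obtaining the claim exactly as written requires the sharp longest-increasing-subsequence asymptotics (plus the extremality of the uniform distribution among all Schur--Weyl distributions), or else accepting a somewhat larger universal constant in the final estimate.
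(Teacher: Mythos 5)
The paper states this claim in the ``Basic Facts'' appendix without a proof, so there is no argument to compare yours against; I will instead assess the proposal on its own terms. Your decomposition and the mean-of-content computation are exactly right: $\sum_i\lambda_i^2 = 2c(\lambda) + \sum_j(\lambda_j')^2$, the central element $T=\sum_{i<j}(i\,j)$ acts as the scalar $c(\lambda)$ on $\Sp_\lambda$ (hence $\wh T = \sum_\lambda c(\lambda)\Pi_\lambda^d$ under Schur--Weyl), and $\Tr\bigl(\mathsf{SWAP}_{ij}\,\diag(\alpha)^{\otimes n}\bigr)=p_2$ gives $\E[c(\lambda)]=\binom n2 p_2$ cleanly. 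The reduction of the conjugate term to $\E[\ell(\lambda)]$ via $\sum_j(\lambda_j')^2\le \ell(\lambda)\cdot n$ is the natural crude bound.

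Two points to tighten. First, the coupling you allude to should be phrased not as ``uniform $\alpha$ is extremal'' (uniform on $[d]$ for finite $d$ is not extremal) but as stochastic domination by the atomless limit: take $U_1,\dots,U_n$ i.i.d.\ uniform, break ties in $(X_i)$ lexicographically using $(U_i)$, and observe that every strictly decreasing subsequence of the word is a decreasing subsequence of the resulting uniformly random permutation, so $\ell(\lambda)\preceq L_n$ stochastically for \emph{every} $\alpha$. Second, you are right that the constant matters here: the union bound gives $\E[L_n]\le e\sqrt n + O(1)$, and replacing $2$ by $e$ in your final chain would require $p_2\gtrsim 1/\sqrt n$ to absorb the excess, which fails for flat $\alpha$ in high dimension. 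The sharp non-asymptotic inequality $\E[L_n]\le 2\sqrt n$ for all $n$ is a classical fact from the Vershik--Kerov/Logan--Shepp circle of ideas (an all-$n$ proof appears in Romik's monograph on longest increasing subsequences), and once you cite it your argument closes; your arithmetic in fact yields the slightly sharper $\E[\sum_i\lambda_i^2]\le n^2 p_2 + 2n^{3/2}$. I would also note that the only place the claim is used, in the proof of Theorem~\ref{thm:shadow-balanced}, only needs the right-hand side up to an absolute constant, so for the paper's purposes even the elementary $e\sqrt n$ bound would have sufficed after adjusting constants --- the precision you are chasing is correct but not load-bearing.
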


\begin{lemma}\label{lem:sim}
    Let $0 \le \lambda \le 1$. Given $t$ copies of an unknown state $\rho$, and given a description of a density matrix $\sigma$, it is possible to simulate any measurement of $(\lambda\rho + (1 - \lambda)\sigma)^{\otimes t}$ using a measurement of $\rho^{\otimes t}$.
\end{lemma}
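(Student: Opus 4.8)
The plan is to expand $(\lambda\rho + (1-\lambda)\sigma)^{\otimes t}$ by multilinearity of the tensor product and recognize it as a convex combination of product states, each of which we can prepare using only the given copies of $\rho$ together with locally synthesized copies of $\sigma$. Concretely, for a subset $S \subseteq [t]$ let $\tau_S \triangleq \bigotimes_{i=1}^t \xi_i$ where $\xi_i = \rho$ if $i \in S$ and $\xi_i = \sigma$ if $i \notin S$. Expanding the product termwise gives
\[
(\lambda\rho + (1-\lambda)\sigma)^{\otimes t} \;=\; \sum_{S \subseteq [t]} \lambda^{|S|}(1-\lambda)^{t-|S|}\, \tau_S,
\]
which exhibits the left-hand side as the law of $\tau_S$ when $S$ is drawn by including each index $i \in [t]$ independently with probability $\lambda$.

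The simulation then proceeds as follows. Given a POVM $\{M_z\}_z$ on $(\C^d)^{\otimes t}$ that we wish to apply to $(\lambda\rho+(1-\lambda)\sigma)^{\otimes t}$: first, sample $S \subseteq [t]$ by the Bernoulli$(\lambda)$ procedure above; second, assemble the state on $(\C^d)^{\otimes t}$ that places one of the $t$ given copies of $\rho$ in each register $i \in S$ and a freshly prepared copy of $\sigma$ in each register $i \notin S$ — this is possible since we hold a classical description of $\sigma$, and it consumes only $|S| \le t$ of the copies of $\rho$, the remaining $t-|S|$ copies being discarded; third, apply $\{M_z\}_z$ to the assembled state and report the outcome. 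Conditioned on $S$, the reported outcome equals $z$ with probability $\iprod{M_z}{\tau_S}$, so the overall outcome probability is $\sum_{S} \lambda^{|S|}(1-\lambda)^{t-|S|}\iprod{M_z}{\tau_S} = \iprod{M_z}{(\lambda\rho+(1-\lambda)\sigma)^{\otimes t}}$, exactly matching the target. Since the entire procedure — drawing classical randomness for $S$, adjoining ancilla registers prepared in the known state $\sigma$, and then measuring — is itself a (randomized) measurement of $\rho^{\otimes t}$, this establishes the claim.

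I do not expect a genuine obstacle here; the content is essentially the multilinear expansion plus the observation that known states can be prepared for free. The only points that merit an explicit word in the write-up are that "a measurement of $\rho^{\otimes t}$" is taken to allow classical randomness and the appending of ancilla systems in a known state (standard), and that it is harmless for a given run to use strictly fewer than $t$ of the copies of $\rho$.
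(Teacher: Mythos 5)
Your proof is correct. The paper states Lemma~\ref{lem:sim} without proof among its ``basic facts,'' and your argument---expanding $(\lambda\rho + (1-\lambda)\sigma)^{\otimes t}$ multilinearly into the mixture $\sum_{S\subseteq[t]}\lambda^{|S|}(1-\lambda)^{t-|S|}\tau_S$, sampling $S$ classically, filling the complementary registers with freshly prepared copies of $\sigma$ (allowed since its description is known), and then applying the target POVM---is precisely the standard argument one would supply, and you have correctly flagged the only two conventions that must be granted (classical randomization and ancillas in a known state).
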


\section{Balanced Case}\label{sec:balanced}

We begin by presenting our algorithm for the case when $\rho$ is close to maximally mixed.  In this case, given $n$ total copies of $\rho$, our algorithm sets $t = 0.01d^2$ and measures $n/t$ copies of $\rho^{\otimes t}$, each using Keyl's POVM.  Recall that Keyl's POVM involves first obtaining a partition $\lambda$ and then obtaining a unitary $U$.  The estimator that we construct after measuring according to Keyl's POVM will be $U \diag(\lambda_1/t, \dots , \lambda_d/t)U^{\dagger}$, which we call Keyl's estimator.  We then average this estimator (with some appropriate linear rescaling) over all $n/t$ batches to construct our final estimate for $\rho$.  In Theorem~\ref{thm:shadow-balanced}, we prove that this estimator successfully solves classical shadows.

We will rely on a few of the intermediate lemmas from \cite{chen2024optimal}. We begin with a few definitions.  Note that Keyl's POVM is symmetric over the unitary in the following sense.

\begin{definition}
We say a POVM $\{ M_z \}_{z \in \calZ}$ in $\C^{d^t \times d^t}$ is copy-wise rotationally invariant if it is equivalent to 
\[
\{ U^{\otimes t} M_z (U^{\dagger})^{\otimes t} dU\}_{ z \in \calZ}
\]
where $U \in \C^{d \times d}$ is a random unitary drawn from the Haar measure. 
\end{definition}

\begin{definition}
Let $ \{ M_z \}_{z \in \calZ}$ be a POVM in $\C^{d^t \times d^t}$   that is copywise rotationally invariant.  We say a function $f:\{ M_z \}_{z \in \calZ} \rightarrow \C^{d \times d}$ is rotationally compatible with the POVM if  
\[
f(U^{\otimes t} M_z (U^{\dagger})^{\otimes t}) = U f(M_z) U^{\dagger}  
\]   
for all $z \in \calZ$ and unitary $U$.
\end{definition}

\begin{fact}
Keyl's POVM is copy-wise rotationally invariant and the estimator $(\lambda, U) \rightarrow U \diag(\lambda_1/t, \dots , \lambda_d/t)U^{\dagger}$ is rotationally compatible with Keyl's POVM.
\end{fact}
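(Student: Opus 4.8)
The plan is to verify both assertions by directly unwinding the definition of Keyl's POVM (Definition~\ref{def:keyl-povm}), using only the block structure supplied by Schur--Weyl duality (Theorem~\ref{thm:schur-weyl}) together with the fact that each $\pi_\lambda$ is a group homomorphism. The first thing I would record is that under the decomposition $(\C^d)^{\otimes t} = \bigoplus_{\lambda \vdash t} \Sp_\lambda \otimes V_\lambda^d$, the diagonal action of a unitary $V$ is $V^{\otimes t} = \bigoplus_\lambda \Id_{\Sp_\lambda} \otimes \pi_\lambda(V)$. In particular $V^{\otimes t}$ commutes with every Schur projector $\Pi_\lambda^d$ and preserves each $\lambda$-Schur subspace, acting there as $\Id \otimes \pi_\lambda(V)$.

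Next I would write the POVM element of Keyl's POVM at the outcome $(\lambda, U)$ explicitly as $M_{(\lambda,U)} = \dim(V_\lambda^d)\,\bigl(\Id_{\Sp_\lambda} \otimes \pi_\lambda(U) v_\lambda v_\lambda^\dagger \pi_\lambda(U)^\dagger\bigr)$, extended by zero on $\Sp_{\lambda'} \otimes V_{\lambda'}^d$ for $\lambda' \neq \lambda$ and carrying the Haar measure $\mathrm{d}U$ on the unitary register. Conjugating by $V^{\otimes t}$ and invoking the previous paragraph together with $\pi_\lambda(V)\pi_\lambda(U) = \pi_\lambda(VU)$ gives the one identity that drives the whole argument: $V^{\otimes t} M_{(\lambda,U)} (V^\dagger)^{\otimes t} = M_{(\lambda,VU)}$. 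For copy-wise rotational invariance I would then observe that $U \mapsto VU$ is a measure-preserving bijection of the unitary group (left-invariance of Haar measure) and leaves the $\lambda$-register untouched (since $V^{\otimes t}$ commutes with $\Pi_\lambda^d$), so conjugating every element of Keyl's POVM by $V^{\otimes t}$ returns the same POVM after the relabeling $U \mapsto VU$ of outcomes; feeding in a Haar-random $V$ then yields exactly the required equivalence. Put differently, rotating the input state by $V^{\otimes t}$ merely relabels every Keyl outcome $(\lambda, U)$ as $(\lambda, V^\dagger U)$.

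For rotational compatibility I would treat the estimator as the function $f(\lambda, U) = U\,\diag(\lambda_1/t, \dots, \lambda_d/t)\,U^\dagger$ on the outcome set and just compose it with the displayed identity: the outcome whose POVM element is $V^{\otimes t} M_{(\lambda,U)} (V^\dagger)^{\otimes t}$ is $(\lambda, VU)$, and $f(\lambda, VU) = (VU)\,\diag(\lambda_1/t, \dots, \lambda_d/t)\,(VU)^\dagger = V f(\lambda, U) V^\dagger$, which is precisely the compatibility relation. I expect the only genuinely delicate point — and hence the main obstacle — to be bookkeeping rather than representation theory: pinning down the precise sense in which, for a POVM with a continuous outcome space, the conjugated family is ``equivalent'' to the original, and keeping track of the fact that $f$ and the notion of compatibility are most naturally read as referring to the outcome label $(\lambda,U)$ rather than to the (possibly non-injectively labeled) operator $M_{(\lambda,U)}$. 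The substantive content, which is all the rest of the paper needs, is simply the identity $V^{\otimes t} M_{(\lambda,U)} (V^\dagger)^{\otimes t} = M_{(\lambda,VU)}$ and its immediate consequences above.
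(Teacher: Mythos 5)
Your proposal is correct and is precisely the detailed unwinding of what the paper leaves implicit; the paper's proof is the single sentence ``This follows immediately from Theorem~\ref{thm:schur-weyl},'' and your argument fills in exactly the steps that sentence is gesturing at: the block-diagonal action $V^{\otimes t}=\bigoplus_\lambda \Id_{\Sp_\lambda}\otimes\pi_\lambda(V)$ under Schur--Weyl, the key identity $V^{\otimes t}M_{(\lambda,U)}(V^\dagger)^{\otimes t}=M_{(\lambda,VU)}$, left-invariance of Haar measure for copy-wise rotational invariance, and the direct computation $f(\lambda,VU)=Vf(\lambda,U)V^\dagger$ for compatibility. Your caveat about the non-injective labeling of POVM elements is a real (if minor) subtlety that the paper does not discuss, and it resolves correctly because any unitary stabilizing $v_\lambda$ up to phase necessarily commutes with $\diag(\lambda_1,\dots,\lambda_d)$.
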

\begin{proof}
This follows immediately from Theorem~\ref{thm:schur-weyl}.
\end{proof}

When the state $\rho$ is close to maximally mixed, we can bound the mean of Keyl's estimator \\ $U \diag(\lambda_1/t, \dots , \lambda_d/t)U^{\dagger}$ as follows.

\begin{lemma}\label{lem:estimator-error}\cite{chen2024optimal}
Let $ \{ M_z \}_{z \in \calZ}$ be a POVM in in $\C^{d^t \times d^t}$   that is copywise rotationally invariant.  Let $f:\{ M_z \}_{z \in \calZ} \rightarrow \C^{d \times d}$ be a rotationally compatible estimator such that $\tr(f(M_z)) = 0$ for all $z\in\mathcal{Z}$.  Let $X = (I_d/d + E)^{\otimes t}$ and let $X' = (I_d/d)^{\otimes t} + \sum_{\mathrm{sym}} E \otimes (I_d/d)^{\otimes t-1}$.  Assume that $\norm{E}_F \leq \left(\frac{0.01}{t} \right)^4$.  Then
\[
\norm{\int_{\calZ} f(M_z) \langle X - X', M_z \rangle \,\mathrm{d}z}_{F} \leq \frac{10^5 t^2 \norm{E}_F^2}{d} \sqrt{\int_{\calZ} \frac{\norm{f(M_z)}_F^2 \tr(M_z)}{d^t} \,\mathrm{d}z} \,.
\]
\end{lemma}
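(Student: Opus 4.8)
The plan is: (i) strip the estimator $f$ via Cauchy--Schwarz, reducing to an $f$-free bound; (ii) expand $X-X'$ in powers of $E$ and observe that the part quadratic in $E$ dominates; (iii) analyze that quadratic part using the rotational invariance of Keyl's POVM together with the identity $\Tr(E)=0$ (which holds in the intended setting $E=\rho-I_d/d$) and estimates on the Schur--Weyl distribution; (iv) dispose of the higher-order terms crudely, using that $\norm{E}_F\le(0.01/t)^4$ is super-polynomially small.

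\textbf{Step (i): removing $f$.} Writing the Frobenius norm variationally as $\sup_{\norm{O}_F\le1}\int_\calZ\iprod{O,f(M_z)}\iprod{X-X',M_z}\,\mathrm dz$, applying Cauchy--Schwarz to the $z$-integral against the weight $w(z):=\Tr(M_z)/d^t$ (which satisfies $\int_\calZ w(z)\,\mathrm dz=\Tr(I_{d^t})/d^t=1$), and bounding $\iprod{O,f(M_z)}^2\le\norm{O}_F^2\norm{f(M_z)}_F^2$, one gets
\[
\norm{\int_\calZ f(M_z)\iprod{X-X',M_z}\,\mathrm dz}_F\le\Bigl(\int_\calZ\tfrac{\norm{f(M_z)}_F^2\Tr(M_z)}{d^t}\,\mathrm dz\Bigr)^{1/2}\Bigl(\int_\calZ\iprod{X-X',M_z}^2\tfrac{d^t}{\Tr(M_z)}\,\mathrm dz\Bigr)^{1/2}.
\]
So it suffices to prove $\int_\calZ\iprod{X-X',M_z}^2\,\frac{d^t}{\Tr(M_z)}\,\mathrm dz\le(10^5\,t^2\norm{E}_F^2/d)^2$.

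\textbf{Steps (ii)--(iii): the quadratic part.} Expand $X=\sum_{S\subseteq[t]}E^{\otimes S}\otimes(I_d/d)^{\otimes[t]\setminus S}$, so that $X'$ is exactly the $\lvert S\rvert\le1$ part and $X-X'=A_2+A_{\ge3}$, where $A_2:=\sum_{\lvert S\rvert=2}E^{\otimes S}\otimes(I_d/d)^{\otimes[t]\setminus S}$; by $(a+b)^2\le2a^2+2b^2$ we may treat $A_2$ and $A_{\ge3}$ separately. For $A_2$: since every $M_z$ commutes with the $S_t$-action on the tensor slots (Schur--Weyl duality, Theorem~\ref{thm:schur-weyl}), $\iprod{A_2,M_z}=\binom t2\iprod{E\otimes E\otimes(I_d/d)^{\otimes t-2},M_z}=\binom t2\iprod{E\otimes E,N_z}$, where $N_z$ is obtained by tracing out all but the first two tensor factors of $M_z$ and dividing by $d^{t-2}$, so that $\{N_z\}$ is a copy-wise rotationally invariant POVM on $(\C^d)^{\otimes2}$ with $\int_\calZ N_z\,\mathrm dz=I_{d^2}$. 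Using copy-wise rotational invariance to replace each $E$ by $U^\dagger E U$ and averaging over Haar $U\in U(d)$ reduces $\int_\calZ\iprod{E\otimes E,N_z}^2\,\mathrm dz/\Tr(N_z)$ to a low-degree Haar integral over $U(d)$, governed by $S_4$-Weingarten calculus (Claim~\ref{claim:haar-moments} and its degree-$4$ analogue). The key point is the cancellation from $\Tr(E)=0$: in the Weingarten expansion, every permutation with a fixed point pairs some copy of $E$ with itself and hence contributes a factor $\Tr(E)=0$, so the surviving (fixed-point-free) permutations pair the copies of $E$ together, each pairing contributing a factor $\norm{E}_F^2$ together with an extra $1/d$ from the Haar normalization, times a trace functional of $N_z$ such as $\Tr(N_z\,\mathrm{SWAP})$ or $\Tr(N_z^2)$. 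For Keyl's POVM with outcome $z=(\lambda,U)$ these trace functionals are determined by $\lambda$ (e.g.\ $\Tr(N_z\,\mathrm{SWAP})/\Tr(N_z)=\widehat{\chi}_\lambda((1\,2))$, the normalized $S_t$-character), and the law of $\lambda$ induced by $\Tr(N_z)\,\mathrm dz/d^2$ is the Schur--Weyl distribution $\mathrm{SW}_d^t$; the estimates on $\sum_i\lambda_i^2$ for this distribution (Claims~\ref{claim:typical-young-tableaux1}--\ref{claim:typical-young-tableaux2}, plus standard concentration) give that all the relevant moments are $O(1/d^2)$ in the parameter regime where the lemma is applied (notably $t=\Theta(d^2)$). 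Putting this together yields $\int_\calZ\iprod{E\otimes E,N_z}^2\,\mathrm dz/\Tr(N_z)=O(\norm{E}_F^4/d^4)$, hence $\int_\calZ\iprod{A_2,M_z}^2\,d^t/\Tr(M_z)\,\mathrm dz=\binom t2^2 d^2\cdot O(\norm{E}_F^4/d^4)=O(t^4\norm{E}_F^4/d^2)$.

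\textbf{Step (iv) and the main obstacle.} For $A_{\ge3}$ a crude bound suffices: since $M_z\succeq0$, $\lvert\iprod{A_{\ge3},M_z}\rvert\le\norm{A_{\ge3}}_{\op}\Tr(M_z)$, so $\int_\calZ\iprod{A_{\ge3},M_z}^2\,d^t/\Tr(M_z)\,\mathrm dz\le\norm{A_{\ge3}}_{\op}^2\,d^{2t}$; and $\norm{A_{\ge3}}_{\op}\le d^{-t}\sum_{k\ge3}(t\norm{E}_F d)^k\le 2t^3\norm{E}_F^3 d^{3-t}$, because $\norm{E}_F\le(0.01/t)^4$ forces $t\norm{E}_F d\le 10^{-8}t^{-3}d\le\tfrac12$ in the relevant range, so this term is at most $4t^6\norm{E}_F^6 d^6$, far below the target. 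Combining Steps (i)--(iv) and absorbing all constants into $10^5$ completes the proof. The heart of the argument is Step (iii): first, without the cancellation $\Tr(E)=0$ the quadratic term is only $O(\norm{E}_F^2)$ rather than the required $O(\norm{E}_F^2/d)$ (and the lemma is in fact false without this normalization); second, even with that cancellation, naively bounding the surviving trace functionals of $N_z$ gives only $O(t^4\norm{E}_F^4)$, and extracting the extra $1/d^2$ requires the Schur--Weyl moment estimates, which is where the specific parameter regime genuinely enters. Carrying out the $S_4$-Weingarten bookkeeping cleanly and establishing these $O(1/d^2)$-moment bounds for $\mathrm{SW}_d^t$ (which is also what legitimizes the crude argument in Step (iv)) is the main work; extending the lemma beyond this regime would instead require a genuine higher-moment analysis of all the terms in $A_{\ge3}$, which is more delicate when $t$ is large relative to $d$.
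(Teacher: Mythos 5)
First, a bookkeeping point: the paper does not prove Lemma~\ref{lem:estimator-error} at all \--- it is imported verbatim from \cite{chen2024optimal} \--- so there is no in-paper argument to compare against. Judged on its own merits, your proposal has a fatal gap at Step (i). By applying Cauchy--Schwarz pointwise in $z$ you discard precisely the hypotheses that make the lemma true, namely that $f$ is rotationally compatible and traceless; after that step you are committed to proving the $f$-free inequality $\int_\calZ \langle X-X',M_z\rangle^2\,\frac{d^t}{\tr(M_z)}\,\mathrm{d}z \le \bigl(10^5 t^2\norm{E}_F^2/d\bigr)^2$, and this is simply false in the generality of the lemma. Concretely, take Keyl's POVM with $t=2$ (so $X-X'=E\otimes E$), $E$ traceless Hermitian with $\norm{E}_F\le (0.005)^4$, and $d$ large. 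The $\lambda=(2)$ outcomes are $M_{(2),U}=\tfrac{d(d+1)}{2}\,\lvert uu\rangle\langle uu\rvert$ with $u$ Haar, and $\E_u[\langle u|E|u\rangle^4]\ge \bigl(\E_u[\langle u|E|u\rangle^2]\bigr)^2 = \Theta(\norm{E}_F^4/d^4)$, so their contribution to the $f$-free quantity is already $\Theta(\norm{E}_F^4)$, while your target is $O(t^4\norm{E}_F^4/d^2)=O(\norm{E}_F^4/d^2)$; for $d\gg 10^6$ the reduced claim fails. Yet the lemma itself holds comfortably there: e.g.\ $\E_u\bigl[(uu^\dagger - I_d/d)\langle u|E|u\rangle^2\bigr] = \tfrac{2E^2/d(d+1)(d+2) + \norm{E}_F^2 I_d(\tfrac{1}{d(d+1)(d+2)}-\tfrac{1}{d^2(d+1)})}{1}$ has Frobenius norm $O(\norm{E}_F^2/d)$. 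The extra factor $1/d$ in the lemma is produced by \emph{vector-valued cancellation} in the $f$-weighted integral (the averaged estimator is nearly isotropic, and tracelessness kills the isotropic part), which a scalar Cauchy--Schwarz bound on $\lvert\langle X-X',M_z\rangle\rvert$ can never recover. So any correct proof must keep $f$ inside the integral and exploit its rotational compatibility and tracelessness (Claim~\ref{claim:haar-moments}-style Haar averaging of the $f$-leg, with Cauchy--Schwarz deferred until after the cancellation), rather than stripping $f$ at the outset.

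Two further mismatches with the statement you were asked to prove. The lemma is asserted for \emph{every} copywise rotationally invariant POVM, every rotationally compatible traceless estimator, and all $t,d$ subject only to $\norm{E}_F\le(0.01/t)^4$; your Steps (iii)--(iv) specialize to Keyl's POVM and to the regime $t=\Theta(d^2)$ (you say so explicitly), so even if they worked they would establish only a special case \--- and the $t=2$ example above shows the specialization is not cosmetic, since your reduced inequality is genuinely regime-dependent. Finally, within that regime the decisive estimates are asserted rather than proved: the claim that ``all the relevant moments are $O(1/d^2)$'' after the Weingarten expansion is exactly where the required gain of $d^2$ over the naive $O(t^4\norm{E}_F^4)$ bound would have to come from, and no argument is given (the Schur--Weyl facts quoted, Claims~\ref{claim:typical-young-tableaux1}--\ref{claim:typical-young-tableaux2}, bound $\sum_i\lambda_i^2$ but do not by themselves control the fourth-moment functionals of the two-body marginals of $M_{\lambda,U}$ that appear in your expansion). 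As it stands, the proposal does not prove the lemma, and its core reduction cannot be repaired without abandoning Step (i).
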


\begin{corollary}\label{coro:fake-estimator-mean}\cite{chen2024optimal}
Let $\{M_{\lambda, U} \}_{\lambda, U}$ be Keyl's POVM where $\lambda$ ranges over partitions of $t$ and $U$ ranges over unitaries in $\C^{d \times d}$.  Let $X' = (I_d/d)^{\otimes t} + \sum_{\mathrm{sym}} E \otimes (I_d/d)^{\otimes t-1}$.  Then
\[
\sum_{\lambda\vdash t}\int U \diag(\lambda_1/t, \dots , \lambda_d/t)U^\dagger \cdot \langle M_{\lambda ,U}, X'  \rangle \,\mathrm{d}U = \frac{I_d}{d} + \frac{dE}{t(d^2 - 1)}  \E_{\lambda \sim \mathrm{SW}^t_d}\Bigl[\sum_{j = 1}^d \lambda_j^2 - (t^2/d) \Bigr] \,.
\]
\end{corollary}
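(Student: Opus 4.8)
The plan is to evaluate the left-hand side by a direct computation, exploiting (i) the rotational equivariance of Keyl's POVM together with the estimator, which lets us reduce to diagonal perturbations, and (ii) the observation that $X'$ is the first-order Taylor expansion in $\tau$ of the diagonal tensor power $D_{\alpha(\tau)}^{\otimes t}$ along the ray $\alpha(\tau)_i = \tfrac1d + \tau e_i$, where $e_1,\dots,e_d$ are the diagonal entries of the perturbation. \textbf{Step 1 (reduce to diagonal $E$).} I would set $\Phi(A) := \sum_{\lambda \vdash t}\int U\diag(\lambda_1/t,\dots,\lambda_d/t)U^\dagger\,\iprod{M_{\lambda,U},A}\,\mathrm{d}U$, so the claim is an identity for $\Phi(X')$. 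Since $\Pi_\lambda^d$ commutes with $V^{\otimes t}$ and $\pi_\lambda(V)^\dagger\pi_\lambda(U)v_\lambda v_\lambda^\dagger\pi_\lambda(U)^\dagger\pi_\lambda(V) = \pi_\lambda(V^\dagger U)v_\lambda v_\lambda^\dagger\pi_\lambda(V^\dagger U)^\dagger$, one gets $(V^\dagger)^{\otimes t}M_{\lambda,U}V^{\otimes t} = M_{\lambda,V^\dagger U}$; substituting into $\Phi$ and using Haar-invariance of $U \mapsto VU$ yields $\Phi(V^{\otimes t}A(V^\dagger)^{\otimes t}) = V\Phi(A)V^\dagger$. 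As $V^{\otimes t}(I_d/d)^{\otimes t}(V^\dagger)^{\otimes t} = (I_d/d)^{\otimes t}$ and each summand of $\sum_{\mathrm{sym}}$ transforms covariantly, $X'$ for perturbation $V\diag(e_1,\dots,e_d)V^\dagger$ equals $V^{\otimes t}$ times $X'$ for $\diag(e_1,\dots,e_d)$ times $(V^\dagger)^{\otimes t}$; since the claimed answer is likewise equivariant in $E$ (the expectation over $\mathrm{SW}^t_d$ not depending on $E$), it suffices to treat $E = \diag(e_1,\dots,e_d)$, where $\sum_i e_i = \tr E = 0$.

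\textbf{Step 2 (exact formula and its $\tau$-derivative).} For any real diagonal $\alpha$, Schur--Weyl duality gives $\Pi_\lambda^d D_\alpha^{\otimes t}\Pi_\lambda^d = I_{\Sp_\lambda}\otimes \pi_\lambda(D_\alpha)$, so unpacking Definition~\ref{def:keyl-povm} yields the algebraic identity $\iprod{M_{\lambda,U}, D_\alpha^{\otimes t}} = d_\lambda\, v_\lambda^\dagger\pi_\lambda(U^\dagger D_\alpha U)v_\lambda$, where $d_\lambda := \dim(\Sp_\lambda)\dim(V_\lambda^d) = \tr(\Pi_\lambda^d) = d^t\Pr_{\lambda\sim\mathrm{SW}^t_d}[\lambda]$. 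Taking $\alpha = \alpha(\tau)$, so $U^\dagger D_{\alpha(\tau)}U = \tfrac1d I_d + \tau\,U^\dagger E U$, and differentiating at $\tau = 0$ using that $\pi_\lambda$ is homogeneous of degree $t$ (hence $\tfrac{\mathrm d}{\mathrm d\tau}\big|_0\pi_\lambda(\tfrac1d I_d + \tau M) = \tfrac1{d^{t-1}}\mathrm{d}\pi_\lambda(M)$, where $\mathrm{d}\pi_\lambda$ is the differential of $\pi_\lambda$ at the identity), I get $\tfrac{\mathrm d}{\mathrm d\tau}\big|_0\iprod{M_{\lambda,U}, D_{\alpha(\tau)}^{\otimes t}} = \tfrac{d_\lambda}{d^{t-1}}\,v_\lambda^\dagger\mathrm{d}\pi_\lambda(U^\dagger E U)v_\lambda$. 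The key representation-theoretic input is that $v_\lambda$ has weight $\lambda$: by Theorem~\ref{thm:gf-basis} and the definition of the maximal-weight vector, $\mathrm{d}\pi_\lambda(\ket{j}\bra{j})v_\lambda = \lambda_j v_\lambda$, whereas for $j \neq k$ the vector $\mathrm{d}\pi_\lambda(\ket{j}\bra{k})v_\lambda$ lies in the weight space indexed by $\lambda + e_j - e_k \neq \lambda$ and is therefore orthogonal to $v_\lambda$ (the representation is unitary on $U_d$). Hence $v_\lambda^\dagger\mathrm{d}\pi_\lambda(M)v_\lambda = \sum_j\lambda_j M_{jj} = \iprod{\diag(\lambda),M}$, and with $M = U^\dagger E U$ this equals $\iprod{U\diag(\lambda)U^\dagger, E}$.

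\textbf{Step 3 (assemble).} I would decompose $\iprod{M_{\lambda,U},X'} = \iprod{M_{\lambda,U},(I_d/d)^{\otimes t}} + \tfrac{\mathrm d}{\mathrm d\tau}\big|_0\iprod{M_{\lambda,U}, D_{\alpha(\tau)}^{\otimes t}}$. The first term is $d_\lambda/d^t = \Pr_{\mathrm{SW}^t_d}[\lambda]$, independent of $U$; since $\int U\diag(\lambda/t)U^\dagger\,\mathrm{d}U = \tfrac1d I_d$ (because $\int UXU^\dagger\,\mathrm{d}U = \tfrac{\tr X}{d}I_d$ and $\tr\diag(\lambda/t) = 1$), this contributes $\sum_\lambda\Pr[\lambda]\tfrac1d I_d = \tfrac{I_d}{d}$. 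For the second term, Step 2 gives $\sum_\lambda\tfrac{d_\lambda}{t\,d^{t-1}}\int (U\diag(\lambda)U^\dagger)\iprod{U\diag(\lambda)U^\dagger,E}\,\mathrm{d}U$; Claim~\ref{claim:haar-moments} with $X = \diag(\lambda)$, $Y = E$ (using $\tr E = 0$, $\tr\diag(\lambda) = t$, $\norm{\diag(\lambda)}_F^2 = \sum_j\lambda_j^2$) evaluates the integral to $\tfrac1{d^2-1}\bigl(\sum_j\lambda_j^2 - \tfrac{t^2}{d}\bigr)E$, and then $d_\lambda = d^t\Pr[\lambda]$ converts $\tfrac1{d^{t-1}}\sum_\lambda d_\lambda\bigl(\sum_j\lambda_j^2 - \tfrac{t^2}{d}\bigr)$ into $d\,\E_{\lambda\sim\mathrm{SW}^t_d}\bigl[\sum_j\lambda_j^2 - \tfrac{t^2}{d}\bigr]$, so the second term contributes $\tfrac{dE}{t(d^2-1)}\E_{\lambda\sim\mathrm{SW}^t_d}\bigl[\sum_j\lambda_j^2 - \tfrac{t^2}{d}\bigr]$. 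Summing the two contributions and undoing Step 1 gives the claim.

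\textbf{Main obstacle.} Everything outside Step 2 is bookkeeping plus a single application of Claim~\ref{claim:haar-moments}; the substantive work is Step 2 — establishing the clean identity $\iprod{M_{\lambda,U},D_\alpha^{\otimes t}} = d_\lambda v_\lambda^\dagger\pi_\lambda(U^\dagger D_\alpha U)v_\lambda$ from the definition of Keyl's POVM, and especially the weight identity $v_\lambda^\dagger\mathrm{d}\pi_\lambda(M)v_\lambda = \iprod{\diag(\lambda),M}$, which is where the representation theory does its work (and where one must be careful that the off-diagonal generators move $v_\lambda$ out of its weight space). An alternative that avoids Lie-algebra derivatives is to expand $\sum_{\mathrm{sym}}E\otimes(I_d/d)^{\otimes t-1}$ directly in the computational basis for diagonal $E$ and compute $\iprod{M_{\lambda,U},\cdot}$ term by term, but the Taylor-expansion framing keeps the algebra under control.
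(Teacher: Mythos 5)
Your proof is correct, and it reconstructs the argument that the paper cites from \cite{chen2024optimal} without reproving. The three ingredients you use are exactly the right ones: the covariance identity $\Phi(V^{\otimes t}A(V^\dagger)^{\otimes t}) = V\Phi(A)V^\dagger$ obtained from the rotational equivariance of Keyl's POVM, which together with the equivariance of the right-hand side in $E$ legitimately reduces to diagonal $E$; the Schur–Weyl evaluation $\langle M_{\lambda,U}, D_\alpha^{\otimes t}\rangle = d_\lambda\, v_\lambda^\dagger\pi_\lambda(U^\dagger D_\alpha U)v_\lambda$ with $d_\lambda = \dim(\Sp_\lambda)\dim(V_\lambda^d) = d^t\Pr_{\mathrm{SW}^t_d}[\lambda]$, whose $\tau$-derivative at the maximally mixed point yields the weight identity $v_\lambda^\dagger\,\mathrm{d}\pi_\lambda(M)\,v_\lambda = \sum_j\lambda_j M_{jj}$ (using that the off-diagonal generators shift the weight of $v_\lambda$ and hence give vanishing diagonal matrix elements in a unitary weight basis, exactly as you argue); and finally Claim~\ref{claim:haar-moments} with $X = \diag(\lambda)$, $Y = E$, $\tr E = 0$, which turns $\int U\diag(\lambda)U^\dagger\iprod{U\diag(\lambda)U^\dagger,E}\,\mathrm{d}U$ into $\tfrac1{d^2-1}\bigl(\sum_j\lambda_j^2 - t^2/d\bigr)E$. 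The bookkeeping that converts $\tfrac{1}{t\,d^{t-1}}\sum_\lambda d_\lambda(\cdot)$ into $\tfrac{d}{t}\E_{\lambda\sim\mathrm{SW}^t_d}[\cdot]$ is clean, and the first term $(I_d/d)^{\otimes t}$ indeed contributes $I_d/d$ since $\langle M_{\lambda,U},(I_d/d)^{\otimes t}\rangle = d_\lambda/d^t$ is $U$-independent and $\int U\diag(\lambda/t)U^\dagger\,\mathrm{d}U = I_d/d$. I see no gaps; your stated ``main obstacle'' (Step 2) is in fact handled correctly, including the point that $\mathrm{d}\pi_\lambda(\ket{j}\bra{k})v_\lambda$ may either vanish or land in a distinct weight space, both of which kill the matrix element against $v_\lambda$.
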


\begin{algorithm}[h!]\caption{Shadows for balanced states}\label{alg:shadow-balanced}
    \begin{algorithmic}
        \STATE \textbf{Input:} $m$ copies of $\rho^{\otimes t}$ for some unknown quantum state $\rho \in \C^{d \times d}$
        \FOR{$j\in[m]$}
            \STATE Measure $\rho^{\otimes t}$ according to Keyl's POVM
            \STATE Let $\lambda \vdash t$ be the partition and $U$ be the unitary obtained from the measurement
            \STATE Set $D_j = U \diag(\lambda_1/t, \dots , \lambda_d/t)U^{\dagger}$
        \ENDFOR
        \STATE Compute $\theta = \E_{\lambda \sim \mathrm{SW}^t_d}[\sum_j \lambda_j^2 ] - (t^2/d)  $ \label{step:theta}
        \STATE Compute $\wh{E} =  \frac{t(d^2 - 1)}{d\theta}\left(\frac{D_1 + \dots + D_m}{m} - \frac{I_d}{d}\right)$
        \STATE \textbf{Output:} $\wh{E}$
    \end{algorithmic}
    \noindent\rule{\textwidth}{0.2pt}
\end{algorithm}


\noindent We can now prove the main theorem in the balanced case.  
\begin{theorem}\label{thm:shadow-balanced}
Let $\rho = \frac{I_d }{d} + E$ be an unknown quantum state in $\C^{d \times d}$.  Assume that $\norm{E}_F \leq \sqrt{\eps}/d^2$.  Then for any target accuracy $\eps \leq 1/d^{12}$, there is an algorithm that measures $O(1/\eps^2)$ copies of $\rho$ and  returns $\wh{E} \in \C^{d \times d}$ such that for any Hermitian matrix $O \in \C^{d \times d}$,
\[
\Pr\left[ \left\lvert\langle O,  E\rangle - \langle O, \wh{E} \rangle  \right\rvert \geq \eps \cdot \frac{\norm{O}_F}{\sqrt{d}} \right] \leq 0.1 \,.
\]
\end{theorem}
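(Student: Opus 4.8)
The plan is to analyze Algorithm~\ref{alg:shadow-balanced}, splitting the error $\langle O, \wh E - E\rangle$ into a bias term and a fluctuation term, and to control each separately using the machinery set up above. Concretely, write $\wh E = \frac{t(d^2-1)}{d\theta}\bigl(\frac1m\sum_j D_j - \frac{I_d}{d}\bigr)$ with $t = 0.01 d^2$ and $m = O(1/\eps^2)$ batches, so that each $D_j$ is an i.i.d.\ copy of Keyl's estimator $\wh\rho$ applied to $\rho^{\otimes t}$. First I would set $\mu \triangleq \E[\wh\rho]$ and decompose
\[
\langle O, \wh E - E\rangle = \underbrace{\frac{t(d^2-1)}{d\theta}\Bigl\langle O,\ \mu - \tfrac{I_d}{d}\Bigr\rangle - \langle O, E\rangle}_{\text{bias}} \;+\; \frac{t(d^2-1)}{d\theta}\Bigl\langle O,\ \tfrac1m\textstyle\sum_j D_j - \mu\Bigr\rangle .
\]
For the bias term, the key input is Corollary~\ref{coro:fake-estimator-mean}, which computes \emph{exactly} the result of applying Keyl's estimator to the linearized state $X' = (I_d/d)^{\otimes t} + \sum_{\mathrm{sym}} E\otimes(I_d/d)^{\otimes t-1}$: it equals $\frac{I_d}{d} + \frac{dE}{t(d^2-1)}\theta$ with exactly the $\theta$ computed in Line~\ref{step:theta}. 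Hence if $\mu$ were the linearized quantity, the rescaling $\frac{t(d^2-1)}{d\theta}(\cdot - I_d/d)$ would recover $E$ on the nose and the bias would vanish. The actual $\mu$ differs from the linearized value by $\int_{\calZ} f(M_z)\langle X - X', M_z\rangle\,\mathrm dz$, which Lemma~\ref{lem:estimator-error} bounds in Frobenius norm by $\frac{10^5 t^2 \|E\|_F^2}{d}\sqrt{\int \|f(M_z)\|_F^2 \tr(M_z)/d^t\,\mathrm dz}$ — here the hypothesis $\|E\|_F \le \sqrt\eps/d^2$ (which, since $\eps \le d^{-12}$, is far smaller than the $(0.01/t)^4$ required by that lemma) makes this tiny. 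One also needs $\theta = \Theta(t^{1.5}) = \Theta(d^3)$, which follows from Claims~\ref{claim:typical-young-tableaux1} and~\ref{claim:typical-young-tableaux2} (lower bound with probability $1/2$ gives $\E \ge \Omega(n^{1.5})$ after a union/averaging argument, upper bound is direct). Then $\langle O, \text{bias}\rangle \le \|O\|_F \cdot \|\text{bias matrix}\|_F$ by Cauchy--Schwarz, and plugging in the bounds on $\|E\|_F$, $\theta$, and the integral of $\|f(M_z)\|_F^2\tr(M_z)/d^t$ (which for Keyl's estimator is $O(1)$, since $\|U\diag(\lambda/t)U^\dagger\|_F^2 = \sum \lambda_i^2/t^2 \le 1$), shows the bias is $\ll \eps\|O\|_F/\sqrt d$.

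For the fluctuation term, the heart of the argument is a variance bound: I would show $\Var\bigl(\langle O, D_1 - \mu\rangle\bigr) = \E[\langle O, \wh\rho - \E\wh\rho\rangle^2] \le C\cdot \|O\|_F^2 / d$ for an absolute constant $C$, \emph{after} the rescaling by $\frac{t(d^2-1)}{d\theta} = \Theta(1)$ one should track that this scaling is $O(1/d)$ — wait, more carefully: $\frac{t(d^2-1)}{d\theta} \asymp \frac{d^2\cdot d^2}{d\cdot d^3} = 1$, so the scaling is $\Theta(1)$ and I need $\Var(\langle O, D_1\rangle) \lesssim \eps^2\|O\|_F^2/(dm) \cdot m = \eps^2 \|O\|_F^2/d \cdot (\eps^{-2}) $, i.e.\ $\Var(\langle O, D_1\rangle) \lesssim \|O\|_F^2/d$. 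Averaging over $m$ i.i.d.\ batches divides the variance by $m = \Theta(1/\eps^2)$, giving $\eps^2\|O\|_F^2/d$ for the averaged quantity, and Chebyshev then yields failure probability $\le 0.1$ for the event $|\langle O, \text{fluct}\rangle| \ge \eps\|O\|_F/\sqrt d$ (with room to spare for an absolute constant, which can be absorbed into $t = 0.01 d^2$). The crucial point, flagged in the technical overview, is that one must \emph{not} bound $\langle O, \wh\rho - \mu\rangle^2$ by $\|O\|_F^2\|\wh\rho - \mu\|_F^2$ and invoke the existing $O(d/t)$ bound on $\E\|\wh\rho - \mu\|_F^2$ — that loses a factor of $d$. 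Instead I would use copy-wise rotational invariance: since Keyl's POVM is invariant under $U^{\otimes t}$-conjugation and the estimator is rotationally compatible, conjugating $\rho$ by a Haar-random $W$ and averaging lets me replace the fixed $O$ by $W^\dagger O W$ and use Claim~\ref{claim:haar-moments} to evaluate $\E_W[\langle W^\dagger O W, \cdot\rangle^2]$, which splits $O$ into its traceless part plus $\tr(O)I/d$ and effectively averages the estimator's second moment against the uniform direction — this gains back the missing $1/d$. (Formally: write $\rho = I/d + E$, and after the $W$-averaging the variance of $\langle O, \wh\rho\rangle$ over the randomness in the POVM is controlled by $\frac{1}{d^2-1}(\|O\|_F^2 - \tr(O)^2/d)\cdot \E\|\wh\rho - \tr(\wh\rho)I/d\|_F^2$-type quantities plus a $\tr$-dependent piece, all of which are $O(\|O\|_F^2/d)$ once one uses the standard second-moment bounds on Keyl's estimator from~\cite{o2016efficient} together with Claims~\ref{claim:typical-young-tableaux1} and~\ref{claim:typical-young-tableaux2}.)

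The main obstacle I anticipate is exactly this variance bound: getting the \emph{dimension-free} bound $\Var(\langle O, \wh\rho\rangle) = O(\|O\|_F^2/d)$ rather than the naive $O(\|O\|_F^2)$. Executing the Haar-averaging correctly requires care — one must argue that introducing an auxiliary random rotation $W$ does not change the distribution of the measurement outcome in a way that affects the final estimate (it does not, precisely because of rotational compatibility, which makes $W$ ``commute through'' the estimator), and then the combinatorial moment bounds on $\sum_i \lambda_i^2$ under the Schur--Weyl distribution (Claims~\ref{claim:typical-young-tableaux1}, \ref{claim:typical-young-tableaux2}) must be combined with Claim~\ref{claim:haar-moments} to produce a clean constant. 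Everything else — the bias analysis, the choice $t = 0.01 d^2$, checking $\|E\|_F$ is small enough for Lemma~\ref{lem:estimator-error}, the final Chebyshev step — is comparatively routine bookkeeping, modulo tracking that all the constants and the $\eps \le d^{-12}$ threshold line up (the exponent $12$ presumably comes from needing $\|E\|_F^2 = \eps/d^4 \ll 1/(t^8 d) \asymp 1/d^{17}$ slack in the bias term, or similar).
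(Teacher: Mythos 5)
Your overall approach — same decomposition into bias and fluctuation, same use of Corollary~\ref{coro:fake-estimator-mean} and Lemma~\ref{lem:estimator-error} for the bias, same idea of gaining a dimension factor in the variance via rotation-invariance and the Schur--Weyl moment claims — matches the paper. But there are two concrete problems in the variance step, which you yourself flagged as the ``main obstacle.''

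First, your justification for introducing the Haar average is circular. You argue that conjugating $\rho$ by a Haar-random $W$ ``does not change the distribution of the measurement outcome'' because $W$ commutes through the estimator. That commuting-through identity says exactly that $\langle W^\dagger O W,\ D(\rho)\rangle$ has the same distribution as $\langle O,\ D(W\rho W^\dagger)\rangle$ — so averaging over $W$ on the left-hand side is equivalent to replacing the fixed state $\rho$ by a Haar-randomly rotated one on the right. That is \emph{not} the same as the variance for your fixed $\rho$, so on its own this gives you nothing. What the paper actually uses is a \emph{pointwise multiplicative comparison of measurement distributions}: since $\norm{E}_F \le 1/(10d)^8$, one has $0.9(I_d/d)^{\otimes t} \preceq \rho^{\otimes t} \preceq 1.1(I_d/d)^{\otimes t}$, hence the distribution of the Keyl outcome $(\lambda,U)$ under $\rho^{\otimes t}$ is within a pointwise factor of $2$ of its distribution under $(I_d/d)^{\otimes t}$, and the latter \emph{is} exactly rotation-invariant in $U$. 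Only after this factor-of-$2$ swap can you insert the fresh Haar average $\E_U[\langle U^\dagger O U, D_1 - I_d/d\rangle^2] \le \norm{O}_F^2\norm{D_1 - I_d/d}_F^2/d^2$.

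Second, your batch count and your target per-batch variance are each off by $d^2$ (in a way that mutually cancels, so the error is self-consistent but gives the wrong sample complexity). You set the number of batches $m = \Theta(1/\eps^2)$ and target $\Var(\langle O, D_1\rangle) \lesssim \norm{O}_F^2/d$; but each batch consumes $t = 0.01d^2$ copies, so $m = \Theta(1/\eps^2)$ batches would cost $\Theta(d^2/\eps^2)$ copies total, not $O(1/\eps^2)$. The paper uses $m = \Theta(1/(\eps^2 d^2))$ batches, and correspondingly must (and does) prove the stronger per-batch bound $\E[\langle O, D_1 - \E D_1\rangle^2] = O(\norm{O}_F^2/d^3)$: the Haar average contributes a $1/d^2$, and the remaining $1/d$ comes from $\E[\norm{D_1 - I_d/d}_F^2] = \E_{\lambda\sim\mathrm{SW}^t(\rho)}[\sum_j\lambda_j^2]/t^2 - 1/d \approx t^{-1/2} \asymp 1/d$ via Claim~\ref{claim:typical-young-tableaux2}. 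With your numbers, the final variance and Chebyshev step come out right, but the stated sample complexity does not.
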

\begin{proof}
We run Algorithm~\ref{alg:shadow-balanced} with $t = 0.01d^2$ and $m = 10^6/(\eps^2 d^2)$ (note that the total number of copies used is then indeed $O(1/\eps^2)$).  The POVM in Algorithm~\ref{alg:shadow-balanced} is clearly copywise rotationally invariant and the estimator is rotationally compatible with it.  Let us use the shorthand $\{ M_z \}_{z \in \calZ}$ to denote this POVM and for $M_z$ corresponding to unitary $U$ and partition $\lambda$, we let $f(M_z) = U\diag(\lambda_1/t, \dots , \lambda_d/t)U^\dagger$.  We have
\[
\E\left[ \frac{D_1 + \dots + D_m}{m} \right] = \int_{\calZ} f(M_z) \langle M_z , (I_d/d + E)^{\otimes t} \rangle \,\mathrm{d}z
\]
where the expectation is over the randomness of the quantum measurement in Algorithm~\ref{alg:shadow-balanced}.  We can make the estimator $D_j$ have trace $0$ by simply subtracting out $I_d/d$ and adding it back at the end.  Thus, by Lemma~\ref{lem:estimator-error} and Corollary~\ref{coro:fake-estimator-mean}, recalling the definition of $\theta$ in Line~\ref{step:theta} of Algorithm~\ref{alg:shadow-balanced}, we have
\[
\begin{split}
\norm{\E\left[ \frac{D_1 + \dots + D_m}{m} \right] - \frac{I_d}{d} - \frac{d\theta E}{t(d^2 - 1)} }_{F} & \leq \frac{10^5 t^2 \norm{E}_F^2}{d} \sqrt{\int_{\calZ} \frac{\norm{f(M_z)}_F^2 \tr(M_z)}{d^t}\,\mathrm{d}z} \\ & \leq \frac{10^5t^2\norm{E}_F^2}{d} \,.
\end{split}
\]
Thus, if $\wh{E}$ is the output of Algorithm~\ref{alg:shadow-balanced}, then
\[
\norm{\E[\wh{E}] - E}_F \leq \frac{10^5t^3 \norm{E}_F^2}{\theta} \,.
\]
Next, we compute the variance of the estimator.  Note that WLOG, we can assume the observable $O$ has $\tr(O) = 0$ since our estimator $\wh{E}$ is always traceless.  We have
\[
\begin{split}
\E\left[ \langle O , \wh{E} - \E[\wh{E}] \rangle^2 \right]  & \leq \frac{d^2t^2}{m\theta^2}  \E\left[ \langle O, D_1 - \E[D_1] \rangle^2 \right]  \label{eq:boundvariance} \\
&\leq  \frac{4d^2t^2}{m\theta^2}\left( \E\left[ \left\langle O, D_1 - \frac{I_d}{d} \right \rangle^2 \right] \right) 
\\ & \leq \frac{8d^2t^2}{m\theta^2}\left( \E \left[ \E_{U}\left[ \left\langle U^\dagger O U, D_1 - \frac{I_d}{d} \right \rangle^2 \right] \right] \right) 
\\ & \leq \frac{8d^2t^2}{m\theta^2} \frac{\norm{O}_F^2 \E[\norm{D_1}_F^2]}{d^2}
\\ & = \frac{8 \norm{O}_F^2}{m\theta^2} \E_{\lambda \sim \mathrm{SW}^t(\rho)}[\sum_{j}\lambda_j^2 ]
\end{split}
\]
where in the above, we used that  $\norm{E}_F \leq 1/(10d)^8$ so 
\[
0.9(I_d/d)^{\otimes t} \preceq \rho^{\otimes t} \preceq 1.1(I_d/d)^{\otimes t} 
\]
and thus when measuring $\rho^{\otimes t}$ with Keyl's POVM (which is rotationally invariant), the rotation of the outcome is approximately uniform up to a factor of $2$.  Now by Claim~\ref{claim:typical-young-tableaux2}, we can upper bound $\E_{\lambda \sim \mathrm{SW}^t(\rho)}[\sum_{j}\lambda_j^2 ] \leq 2(\norm{\rho}_F^2t^2 + t^{1.5}) \leq 4t^{1.5}$ where recall that we set $t \le 0.01 d^2$.  Also by Claim~\ref{claim:typical-young-tableaux1}, we have $\theta \geq t^{1.5}/4$.  Thus, putting everything together, we conclude
\[
\E\left[ \langle O, \wh{E} - E \rangle^2\right] \leq 2 \cdot 10^{10} t^3 \norm{E}_F^4 \norm{O}_F^2 + \frac{(10 \norm{O}_F)^2}{m t^{1.5}} \leq \frac{\eps^2 \norm{O}_F^2}{10^2 d} \,.
\]
The desired statement then follows from Chebyshev's inequality.
\end{proof}

\section{Splitting Reduction}

As in \cite{chen2024optimal}, when $\rho$ is not balanced, we reduce to the balanced case via a splitting reduction.

\begin{definition}\label{def:split}
Let $b_1, \dots , b_d \in \mathbb{Z}_{\geq 0}$.  We define $\Split_{b_1, \dots , b_d}$ to be a linear map that sends any $M\in \C^{d\times d}$ to a square matrix with dimension $2^{b_1} + \dots + 2^{b_d}$ defined as follows.  The rows and columns of $\Split_{b_1, \dots , b_d}(M)$ are indexed by pairs $(j, s)$ where $j \in [d]$ and $s \in \{0,1 \}^{b_j}$ and these are sorted first by $j$ and then lexicographically according to $s$.  Now the entry indexed by row $(j_1, s_1)$ and column $ (j_2,s_2)$  is defined as 
\begin{itemize}
\item If $b_{j_1} \leq b_{j_2}$ then the entry is $M_{j_1j_2}/2^{b_{j_2}}$ if $s_1$ is  a prefix of $s_2$ and is  $0$ otherwise
\item If $b_{j_1} > b_{j_2}$ then the entry is $M_{j_1j_2}/2^{b_{j_1}}$ if $s_2$ is a prefix of $s_1$ and is $0$ otherwise
\end{itemize}
\end{definition}

As an example, we have the following splitting of a $2 \times 2$ matrix with $b_1 = 2, b_2 = 1$. 

\[
\Split_{2,1}\left(\begin{bmatrix} 
\textcolor{red}{a_{11}} & \textcolor{blue}{a_{12}}  \\ 
\textcolor{blue}{a_{21}} & \textcolor{green}{a_{22}} 
\end{bmatrix}\right) = \begin{bmatrix} \textcolor{red}{0.25 a_{11}} & 0 & 0 & 0 & \textcolor{blue}{0.25a_{12}} & 0 \\ 0 & \textcolor{red}{0.25 a_{11}} & 0 & 0 & \textcolor{blue}{0.25 a_{12}} & 0  \\ 0 & 0 &  \textcolor{red}{0.25a_{11}} & 0 & 0 & \textcolor{blue}{0.25a_{12}}  \\ 0 & 0 & 0 & \textcolor{red}{0.25a_{11}} & 0 & \textcolor{blue}{0.25 a_{12}} \\ \textcolor{blue}{0.25 a_{21}} & \textcolor{blue}{0.25 a_{21}} & 0 & 0 & \textcolor{green}{0.5 a_{22}} & 0 \\ 0 & 0 & \textcolor{blue}{0.25 a_{21}} & \textcolor{blue}{0.25 a_{21}} & 0 & \textcolor{green}{0.5a_{22}}  \end{bmatrix}
\]

\noindent We can also define a dual map to $\Split$.  

\begin{definition}\label{def:dsplit}
Let $b_1, \dots , b_d \in \mathbb{Z}_{\geq 0}$.  We define $\Dsplit_{b_1, \dots , b_d}$ to be a linear map that sends any $M\in \C^{d\times d}$ to a square matrix with dimension $2^{d_1} + \dots + 2^{b_d}$ defined as follows.  The rows and columns of $\Dsplit_{b_1, \dots , b_d}(M)$ are indexed by pairs $(j, s)$ where $j \in [d]$ and $s \in \{0,1 \}^{b_j}$ and these are sorted first by $j$ and then lexicographically according to $s$.  Now the entry indexed by row $(j_1, s_1)$ and column $ (j_2,s_2)$  is defined as 
\begin{itemize}
\item $M_{j_1j_2}$ if $s_1$ is  a prefix of $s_2$ and or $s_2$ is a prefix of $s_1$ and $0$ otherwise
\end{itemize}
\end{definition}    

\noindent We have the following basic properties.
\begin{claim}\label{claim:splitting-properties}
Let $b_1, \dots , b_d \in \mathbb{Z}_{\geq 0}$. We have the following statements for any $M,N \in \C^{d \times d}$:
\begin{itemize}
\item $\norm{\Split_{b_1, \dots , b_d}(M)}_F \leq \norm{M}_F $
\item $\langle \Split_{b_1, \dots , b_d}(M), \Dsplit_{b_1, \dots , b_d}(N) \rangle = \langle M,  N \rangle $
\item $\norm{\Dsplit_{b_1, \dots , b_d}(N)}_F \leq 2\sqrt{2^{b_1} + \dots + 2^{b_d}}\norm{N} $
\end{itemize}    
\end{claim}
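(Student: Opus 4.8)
The plan is to prove all three parts by the same strategy: decompose each of $\Split_{b_1,\dots,b_d}(M)$ and $\Dsplit_{b_1,\dots,b_d}(N)$ into the natural $d^2$ blocks indexed by pairs $(j_1,j_2)\in[d]^2$, where the $(j_1,j_2)$ block consists of the rows indexed by pairs $(j_1,\cdot)$ and the columns indexed by pairs $(j_2,\cdot)$, and carefully count the nonzero entries inside each block. The key elementary observation is that, writing $b=\max(b_{j_1},b_{j_2})$, the $(j_1,j_2)$ block of $\Split(M)$ has exactly $2^{b}$ nonzero entries: if $b_{j_1}\le b_{j_2}$, each of the $2^{b_{j_1}}$ rows $s_1$ is a prefix of exactly $2^{b_{j_2}-b_{j_1}}$ columns $s_2$; if $b_{j_1}>b_{j_2}$, each of the $2^{b_{j_1}}$ rows $s_1$ has exactly one column $s_2$ (namely its length-$b_{j_2}$ truncation) that is a prefix of it. Moreover, in both cases the set of nonzero positions of $\Split(M)$ in the block is exactly the set of $(s_1,s_2)$ with $s_1$ a prefix of $s_2$ or $s_2$ a prefix of $s_1$, which is precisely the set of nonzero positions of $\Dsplit(N)$ in that block.

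For the first bullet, every nonzero entry of the $(j_1,j_2)$ block of $\Split(M)$ equals $M_{j_1j_2}/2^{b}$, so that block contributes $2^{b}\cdot|M_{j_1j_2}|^2/2^{2b}=|M_{j_1j_2}|^2/2^{b}\le|M_{j_1j_2}|^2$ to $\norm{\Split(M)}_F^2$; summing over the blocks, which partition the entries, gives $\norm{\Split(M)}_F^2\le\sum_{j_1,j_2}|M_{j_1j_2}|^2=\norm{M}_F^2$.

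For the second bullet, in the $(j_1,j_2)$ block the nonzero entries of $\Split(M)$ (each equal to $M_{j_1j_2}/2^{b}$) sit in exactly the $2^{b}$ positions where the entry of $\Dsplit(N)$ equals $N_{j_1j_2}$. Pairing entrywise, the block's contribution to $\langle\Split(M),\Dsplit(N)\rangle$ is $2^{b}\cdot\overline{M_{j_1j_2}}\,N_{j_1j_2}/2^{b}=\overline{M_{j_1j_2}}\,N_{j_1j_2}$; summing over blocks yields $\langle\Split(M),\Dsplit(N)\rangle=\sum_{j_1,j_2}\overline{M_{j_1j_2}}\,N_{j_1j_2}=\langle M,N\rangle$.

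For the third bullet, the $(j_1,j_2)$ block of $\Dsplit(N)$ has $2^{b}$ nonzero entries each equal to $N_{j_1j_2}$, so it contributes $2^{b}|N_{j_1j_2}|^2\le(2^{b_{j_1}}+2^{b_{j_2}})|N_{j_1j_2}|^2$. Summing and regrouping,
\[
\norm{\Dsplit(N)}_F^2\le\sum_{j_1}2^{b_{j_1}}\Bigl(\sum_{j_2}|N_{j_1j_2}|^2\Bigr)+\sum_{j_2}2^{b_{j_2}}\Bigl(\sum_{j_1}|N_{j_1j_2}|^2\Bigr).
\]
Each inner sum is a squared row or column $\ell_2$-norm of $N$, hence a diagonal entry of $NN^\dagger$ or $N^\dagger N$, and is therefore at most $\norm{N}^2$. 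This gives $\norm{\Dsplit(N)}_F^2\le 2(2^{b_1}+\dots+2^{b_d})\norm{N}^2$, and taking square roots together with $\sqrt2\le2$ finishes the claim. The main (and essentially only non-mechanical) step is this last one — bounding the row and column norms of $N$ by its operator norm — while everything else is just prefix-counting bookkeeping.
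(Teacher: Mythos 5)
Your proposal is correct and takes essentially the same route as the paper: the first two bullets are the direct prefix-counting the paper dismisses as immediate from the definitions, and for the third you use the same identity $\norm{\Dsplit_{b_1,\dots,b_d}(N)}_F^2=\sum_{j_1,j_2}2^{\max(b_{j_1},b_{j_2})}\abs{N_{j_1j_2}}^2$, bound $2^{\max(b_{j_1},b_{j_2})}\le 2^{b_{j_1}}+2^{b_{j_2}}$, and control row/column $\ell_2$-norms by the operator norm, exactly as the paper does (it merely groups indices by the level sets $S_k$ instead of summing index by index).
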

\begin{proof}
The first two statements follow immediately from the definitions.  To verify the third, for each integer $k$, let $S_k \subseteq [d]$ denote the set of indices $j$ such that $b_j = k$.  Note that the entries of $\Dsplit_{b_1, \dots , b_d}(N)$ are  obtained by taking the entries of $N$ and duplicating each a certain number of times \--- an entry $N_{j_1j_2}$ appears $2^{\max(b_{j_1}, b_{j_2})}$ times.  Thus,
\[
\begin{split}
\norm{\Dsplit_{b_1, \dots , b_d}(N)}_F^2 & = \sum_{j_1, j_2 \in [d]} 2^{\max(b_{j_1}, b_{j_2})} N_{j_1j_2}^2 \\ &\leq   \sum_{k = 0}^{\infty} 2^k  \sum_{j_1 \in S_k \text{ or } j_2 \in S_k} N_{j_1j_2}^2  \\ &\leq \sum_{k = 0}^{\infty} 2^k (2|S_k| \norm{N}^2) \\ &= 2(2^{b_1} + \dots + 2^{b_d})\norm{N}^2
\end{split}
\]
and this gives the desired inequality.
\end{proof}

\begin{claim}[\cite{chen2024optimal}]\label{claim:simulate-measurements}
Given measurement access to $\rho^{\otimes t}$ where $\rho \in \C^{d \times d}$ is a state, $\Split_{b_1, \dots , b_d}(\rho)$ is a valid state and we can simulate measurement access to access to $\Split_{b_1, \dots , b_d}(\rho)^{\otimes t}$.    
\end{claim}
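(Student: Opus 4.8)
The plan is to exhibit $\Split \equiv \Split_{b_1,\dots,b_d}$ as a bona fide quantum channel --- i.e.\ a completely positive, trace-preserving linear map $\C^{d\times d}\to\C^{D\times D}$ with $D = 2^{b_1}+\dots+2^{b_d}$ --- by writing down an explicit Kraus decomposition; once this is in hand, both assertions of the claim follow immediately. Set $B = \max_{j\in[d]} b_j$, and index the computational basis of $\C^D$ by pairs $(j,s)$ with $j\in[d]$ and $s\in\{0,1\}^{b_j}$, exactly as in Definition~\ref{def:split}. For each string $r\in\{0,1\}^B$ I would introduce the operator $A_r\colon\C^d\to\C^D$ given on basis vectors by $A_r\ket{j} = 2^{-B/2}\,\ket{(j,\,r_{\le b_j})}$, where $r_{\le b_j}$ denotes the length-$b_j$ prefix of $r$; intuitively $A_r$ routes the $j$-th level to the leaf of the depth-$b_j$ binary tree that one reaches by following the bits of $r$.

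The first step is to check that $\Split(M) = \sum_{r\in\{0,1\}^B} A_r M A_r^\dagger$ for every $M\in\C^{d\times d}$. This is a short computation: the entry of $\sum_r A_r M A_r^\dagger$ in row $(j_1,s_1)$ and column $(j_2,s_2)$ equals $2^{-B} M_{j_1 j_2}$ times the number of $r\in\{0,1\}^B$ whose length-$b_{j_1}$ prefix is $s_1$ and whose length-$b_{j_2}$ prefix is $s_2$. Such an $r$ exists exactly when the shorter of $s_1, s_2$ is a prefix of the longer, in which case the first $\max(b_{j_1},b_{j_2})$ bits of $r$ are forced and the rest are free, giving $2^{\,B-\max(b_{j_1},b_{j_2})}$ such strings; multiplying out reproduces $M_{j_1 j_2}/2^{\max(b_{j_1},b_{j_2})}$ under the prefix condition and $0$ otherwise, i.e.\ Definition~\ref{def:split}. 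Next I would note that $A_r^\dagger A_r = 2^{-B} I_d$, since $A_r$ carries the computational basis to $2^{-B/2}$ times an orthonormal set; hence $\sum_{r\in\{0,1\}^B} A_r^\dagger A_r = 2^{B}\cdot 2^{-B} I_d = I_d$, so the map is CPTP. In particular $\Split(\rho)\succeq 0$ and $\tr\Split(\rho) = \tr\rho = 1$, so $\Split(\rho)$ is a valid density matrix. Finally, any CPTP map can be implemented physically --- concretely via the Stinespring isometry $V\ket{\psi} = \sum_r (A_r\ket{\psi})\otimes\ket{r}$, which satisfies $V^\dagger V = I_d$ by the relation just derived, followed by discarding the ancilla register. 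Thus, given $\rho^{\otimes t}$ we apply this channel independently to each of the $t$ tensor factors, producing $\Split^{\otimes t}(\rho^{\otimes t}) = \Split(\rho)^{\otimes t}$; composing this with whatever POVM $\{N_z\}$ one wishes to run on $\Split(\rho)^{\otimes t}$ yields a legitimate POVM on $\rho^{\otimes t}$ whose outcome $z$ has probability $\tr\!\big(N_z\,\Split(\rho)^{\otimes t}\big)$, which is exactly the desired simulation.

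I expect the only genuinely nontrivial step to be hitting on the right Kraus operators; after that the argument is elementary prefix-counting together with standard channel facts. The guiding heuristic is that sampling a single shared uniform string $r\in\{0,1\}^B$ and reading off its $b_{j_1}$- and $b_{j_2}$-bit prefixes yields a nonzero overlap precisely when one prefix extends the other --- which is the exact combinatorial pattern built into $\Split$. (One could instead establish $\Split(\rho)\succeq 0$ directly by a blockwise Cauchy--Schwarz estimate, mimicking the $2\times 2$ worked example, but simulating the measurement still requires complete positivity of $\Split$, so deriving the Kraus form outright is the more economical route.)
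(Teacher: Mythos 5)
Your Kraus-operator construction is correct: for each $r\in\{0,1\}^B$ with $B=\max_j b_j$, the operator $A_r\ket{j}=2^{-B/2}\ket{(j,r_{\le b_j})}$ routes level $j$ to the leaf selected by the first $b_j$ bits of $r$, and the prefix-counting argument correctly recovers the $M_{j_1j_2}/2^{\max(b_{j_1},b_{j_2})}$ entries of Definition~\ref{def:split}, while $A_r^\dagger A_r = 2^{-B}I_d$ gives $\sum_r A_r^\dagger A_r = I_d$, so $\Split$ is CPTP and both parts of the claim follow by applying the channel copy-wise and pulling back any POVM. The paper itself does not supply a proof here --- it cites \cite{chen2024optimal} --- so there is nothing to compare against in this document, but your argument is a complete and self-contained verification; morally it matches the standard way to realize $\Split$ physically (adjoin a uniformly random string $r$, route conditionally on it, discard $r$), which is what the cited work does.
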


To complete the reduction, our full algorithm first obtains a rough estimate $\rho'$ of $\rho$ via tomography and then applies the splitting reduction in the eigenbasis of $\rho'$

 \begin{theorem}[\cite{guctua2020fast}]\label{thm:untentangled-tomography}
For any $\delta, \eps < 1$ and unknown state $\rho \in \C^{d \times d}$, there is an algorithm that makes unentangled measurements on $O(d^2 \log(1/\delta)/\eps^2  )$ copies of $\rho$ and with $1 - \delta$ probability outputs a state $\wh{\rho}$ such that $\norm{\rho - \wh{\rho}}_F \leq \eps$.
\end{theorem}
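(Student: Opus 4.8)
The plan is to prove this cited bound by the now-standard ``random rank-one POVM plus a crude second-moment estimate'' route to Hilbert--Schmidt-error tomography, which uses only unentangled (single-copy) measurements. I would measure each copy of $\rho$ in an independent Haar-random orthonormal basis of $\C^d$ --- equivalently, apply the uniform rank-one POVM $\{\,d\,\ket{\psi}\bra{\psi}\,\mathrm{d}\psi\,\}$ --- and, on observing outcome $\ket{\psi}$, form the single-copy estimator $\wh{\rho}_{\mathrm{single}} \triangleq (d+1)\ket{\psi}\bra{\psi} - I_d$. To obtain a genuinely finite-description algorithm one replaces the Haar measure by any complex projective $2$-design, which exists in every dimension; only its second-moment identity is used below.

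The first step is to check unbiasedness. Since the POVM produces outcome $\ket{\psi}$ with density proportional to $\bra{\psi}\rho\ket{\psi}$, the $2$-design identity $\E[\ket{\psi}\bra{\psi}^{\otimes 2}] = \tfrac{2}{d(d+1)}\Pi_{\mathrm{sym}}$ (with $\Pi_{\mathrm{sym}} = \tfrac12(I + \mathrm{SWAP})$ the projector onto the symmetric subspace of $\C^d\otimes\C^d$) yields $\E[\ket{\psi}\bra{\psi}] = \tfrac{I_d + \rho}{d+1}$ in one line, hence $\E[\wh{\rho}_{\mathrm{single}}] = \rho$. The second step is the key variance bound: because $\ket{\psi}\bra{\psi}$ is a rank-one projector, $\norm{\wh{\rho}_{\mathrm{single}}}_F^2 = \tr\bigl((d+1)\ket{\psi}\bra{\psi} - I_d\bigr)^2 = d^2 + d - 1$ holds \emph{deterministically}, so $\E\bigl[\norm{\wh{\rho}_{\mathrm{single}} - \rho}_F^2\bigr] = \E\norm{\wh{\rho}_{\mathrm{single}}}_F^2 - \norm{\rho}_F^2 \le 2d^2$. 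Averaging $\wh{\rho}_{\mathrm{single}}$ over $n_0 = O(d^2/\eps^2)$ fresh copies into $\bar{\rho}$ divides this by $n_0$ (cross terms vanish by independence and zero mean), so $\E[\norm{\bar{\rho} - \rho}_F^2] \le \eps^2/100$, and Markov gives $\norm{\bar{\rho} - \rho}_F \le \eps/3$ with probability at least $9/10$.

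To push the confidence up to $1 - \delta$ I would run $R = O(\log(1/\delta))$ independent blocks of the above, obtaining $\bar{\rho}^{(1)}, \dots, \bar{\rho}^{(R)}$, and output the one minimizing the median over $j'$ of $\norm{\bar{\rho}^{(j)} - \bar{\rho}^{(j')}}_F$; a Chernoff bound ensures that except with probability $\delta$ more than half the blocks land within $\eps/3$ of $\rho$, whereupon the triangle inequality places the selected estimate within $\eps$ of $\rho$. Finally I would project this estimate onto the closed convex set of density matrices in Frobenius norm; as $\rho$ itself belongs to that set, the projection only decreases the distance to $\rho$, so the output $\wh{\rho}$ is a valid state with $\norm{\wh{\rho} - \rho}_F \le \eps$ after adjusting constants. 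The total number of copies is $R \cdot n_0 = O(d^2 \log(1/\delta)/\eps^2)$.

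I expect the only subtle point --- and the one place where the dimension dependence could be spoiled --- to be insisting on controlling the error directly in Frobenius norm through the cheap bound $\norm{\wh{\rho}_{\mathrm{single}}}_F^2 = O(d^2)$, rather than bounding the operator-norm error via matrix Bernstein, which would incur a spurious $\log d$ factor and give only $O(d^2 \log d/\eps^2)$ copies. The remaining ingredients --- the $2$-design moment identity, Markov's inequality, and the geometric-median boosting lemma --- are routine.
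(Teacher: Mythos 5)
The paper does not actually prove Theorem~\ref{thm:untentangled-tomography}; it imports it as a black box from~\cite{guctua2020fast}, so there is no in-paper argument to compare against. Your self-contained proof is correct and follows the standard route to $O(d^2/\eps^2)$ Frobenius-norm tomography with single-copy measurements: the uniform (or complex-projective $2$-design) rank-one POVM with the linear-inversion estimator $(d+1)\ket{\psi}\bra{\psi}-I_d$ is unbiased via the identity $\E[\ket{\psi}\bra{\psi}^{\otimes 2}]=\tfrac{2}{d(d+1)}\Pi_{\mathrm{sym}}$, its Frobenius norm is deterministically $\sqrt{d^2+d-1}$, so $n_0=O(d^2/\eps^2)$ fresh copies and Markov give a constant-probability estimate, and an $O(\log(1/\delta))$-fold median/tournament selection followed by Frobenius projection onto the density-matrix simplex boosts the success probability and certifies the output is a valid state. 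Each step checks out (the unbiasedness computation, the deterministic norm identity, the independence-based variance reduction, the pigeonhole argument for the tournament winner, and the contraction property of the metric projection), and the $2$-design replacement is exactly what is needed to make the measurement finitely describable. You are also right to flag the key subtlety: controlling the error directly through the second moment in Frobenius norm is what avoids the spurious $\log d$ that a matrix-Bernstein / operator-norm bound would introduce, and this is precisely the feature the cited theorem needs.
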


\begin{theorem}
Let $\rho$ be an unknown quantum state in $\C^{d \times d}$. Then for any target accuracy $\eps \leq 1/d^{12}$ and failure probability $\delta$, there is an algorithm that measures $O(\log(1/\delta)/\eps^2)$ copies of $\rho$ and stores classical information (of $O(d^2 \log(1/\delta))$ real numbers) such that given any Hermitian matrix $O \in \C^{d \times d}$, it can produce an estimate $\tau$ (from only this classical information) with
\[
\Pr\left[ \left\lvert\langle O,  \rho \rangle - \tau  \right\rvert \geq \eps \norm{O} \right] \leq \delta \,.
\]    
\end{theorem}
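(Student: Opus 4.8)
The plan is to reduce the general case to the balanced case of Theorem~\ref{thm:shadow-balanced} via the splitting reduction, following the template of~\cite{chen2024optimal} but replacing their $\Rec$ operation with the dual operation $\Dsplit$. First, using Theorem~\ref{thm:untentangled-tomography}, I would spend $O(d^2 \log(1/\delta))$ copies to obtain a rough estimate $\rho'$ of $\rho$ in Frobenius norm, say to accuracy $\eta$ for some small polynomial-in-$1/d$ parameter $\eta$ to be tuned (this is only a lower-order term in the copy budget since $\eps \le d^{-12}$, so $O(d^2 \log(1/\delta)) = O(\log(1/\delta)/\eps^2)$ as long as $\eps$ is a small enough inverse polynomial; one should double-check the constant $12$ is enough to absorb this, possibly using a coarser accuracy for $\rho'$). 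Diagonalize $\rho'$ and work in its eigenbasis; then choose integers $b_1, \dots, b_d$ so that $2^{-b_j}$ is a dyadic approximation of the $j$-th eigenvalue of $\rho'$, hence of $\rho$'s eigenvalues up to the tomography error. With $D = 2^{b_1} + \cdots + 2^{b_d} = O(d)$, the claim (Claim~\ref{claim:simulate-measurements}) is that $\Split_{b_1,\dots,b_d}(\rho)$ is a valid $D$-dimensional state whose eigenvalues are all at most roughly $1/d$, so after recentering it is within Frobenius distance $O(\eta \cdot \mathrm{poly}(d))$ of $I_D/D$, i.e. balanced in the sense required by Theorem~\ref{thm:shadow-balanced}.

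Next, I would set up the reduction so that the balanced algorithm runs entirely on simulated measurement access to $\Split_{b_1,\dots,b_d}(\rho)^{\otimes t}$. Concretely, write $\Split(\rho) = I_D/D + E$ where $E$ is the perturbation; we need $\norm{E}_F \le \sqrt{\eps'}/D^2$ for the hypothesis of Theorem~\ref{thm:shadow-balanced} with some target accuracy $\eps'$, which forces us to pick $\eta$ small enough (polynomially in $1/d$) and also to use Lemma~\ref{lem:sim} to further recenter $\Split(\rho)$ against $\Split(\rho')$, subtracting off the known part so that the residual perturbation is genuinely tiny. Running Algorithm~\ref{alg:shadow-balanced} on the recentered state produces $\wh{E}$, from which we recover an estimate $\wh{\sigma} = I_D/D + (\text{known part}) + \wh{E}$ of $\Split(\rho)$. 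The stored classical information is the description of $\rho'$ (the $b_j$'s and eigenbasis, $O(d^2)$ reals per... actually $O(d^2 \log(1/\delta))$ total accounting for the failure probability boosting) together with $\wh{E}$.

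For the estimation phase: given a Hermitian observable $O$ with $\norm{O}_{\op} \le 1$ (WLOG, by rescaling), apply $\Dsplit_{b_1,\dots,b_d}$ to it. By the second bullet of Claim~\ref{claim:splitting-properties}, $\iprod{\Split(\rho), \Dsplit(O)} = \iprod{\rho, O}$, so $\tau \triangleq \iprod{\wh{\sigma}, \Dsplit(O)}$ is the natural estimator, and its error is $\iprod{\wh{\sigma} - \Split(\rho), \Dsplit(O)} = \iprod{\wh{E} - E, \Dsplit(O)}$. Now invoke Theorem~\ref{thm:shadow-balanced} in $D$ dimensions: with probability $1 - \delta$ (after boosting the constant failure probability $0.1$ by median-of-means or by repetition with fresh batches and $\log(1/\delta)$ overhead, hence $O(\log(1/\delta)/\eps^2)$ copies) we get $|\iprod{\Dsplit(O), \wh{E} - E}| \le \eps' \norm{\Dsplit(O)}_F / \sqrt{D}$. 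By the third bullet of Claim~\ref{claim:splitting-properties}, $\norm{\Dsplit(O)}_F \le 2\sqrt{D}\norm{O}_{\op} \le 2\sqrt{D}$, so the error is at most $2\eps'$. Choosing $\eps' = \eps/2$ (and checking $\eps' \le 1/D^{12}$, which follows from $D = O(d)$ and $\eps \le d^{-12}$ up to constants, and that $\eta$ can be made small enough that $\norm{E}_F \le \sqrt{\eps'}/D^2$) gives $|\iprod{O,\rho} - \tau| \le \eps$ as desired.

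The main obstacle I anticipate is bookkeeping the chain of polynomial factors in $d$: we need the tomography accuracy $\eta$ small enough that (i) the recentered split state is balanced, (ii) the residual perturbation $E$ satisfies the stringent norm bound $\norm{E}_F \le \sqrt{\eps'}/D^2$ required by Theorem~\ref{thm:shadow-balanced}, and (iii) the $O(d^2 \log(1/\delta)/\eta^2)$ copies spent on tomography stay within the $O(\log(1/\delta)/\eps^2)$ budget — this last constraint is why the theorem needs $\eps \le d^{-12}$ rather than a milder threshold, and getting all the exponents to line up consistently (and confirming $D^{-12} \ge \Omega(\eps)$ after the $O(d)$-factor blowup from splitting) is the delicate part. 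A secondary subtlety is that $\Split$ is defined relative to a fixed basis, so everything must be phrased in the eigenbasis of $\rho'$ and one must verify $\Dsplit$ is applied in the same basis; and one must confirm that simulating measurement access to the recentered $\Split(\rho)^{\otimes t}$ — combining Claim~\ref{claim:simulate-measurements} with Lemma~\ref{lem:sim} — is legitimate, i.e. that the convex-combination recentering $\lambda \Split(\rho) + (1-\lambda)\Split(\rho')$ with an appropriate $\lambda$ close to $1$ produces exactly the balanced target whose perturbation we then estimate and rescale back by $1/\lambda$.
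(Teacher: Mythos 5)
Your overall route is the same as the paper's: rough tomography to obtain $\rho'$, splitting in the eigenbasis of $\rho'$, recentering via Lemma~\ref{lem:sim}, running Theorem~\ref{thm:shadow-balanced}, the $\Dsplit$ duality from Claim~\ref{claim:splitting-properties} to answer queries, and a median step for $\log(1/\delta)$ boosting. The exponent accounting you sketch (tomography needs $O(d^2/\eta^2)$ copies, set $\eta = d\eps$, and the constraint $\norm{E}_F \le \sqrt{\eps'}/D^2$ yields the $\eps \le d^{-12}$ threshold) is also the paper's calculation.

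The one step that, as written, would fail is the recentering. You propose mixing $\tilde\rho = \lambda\,\Split(\rho) + (1-\lambda)\,\Split(\rho')$ with $\lambda$ ``close to $1$.'' For any $\lambda$ this equals $\Split(\rho') + \lambda\,\Split(\rho-\rho')$, i.e.\ it is close to $\Split(\rho')$, not to $I_k/k$ \--- and $\Split(\rho')$ is only near $I_k/k$ when $\rho'$ was already balanced, which is what you are trying not to assume. The point of the recentering is to mix against a \emph{different} known density matrix chosen so the known part cancels exactly: the paper takes $\sigma = \tfrac13\bigl(4I_k/k - \Split(\rho')\bigr)$, giving $\tfrac34\sigma + \tfrac14\Split(\rho) = I_k/k + \tfrac14\Split(\rho-\rho')$. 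The weight on the unknown state is then forced to be small, not close to $1$: for $\sigma = (I_k/k - \lambda\Split(\rho'))/(1-\lambda)$ to be PSD one needs $\lambda\norm{\Split(\rho')}_{\op} \le 1/k$, and with $\norm{\Split(\rho')}_{\op} \le 1/d$ and $k \le 4d$ this gives $\lambda \le 1/4$. (Relatedly, the $b_j$'s should satisfy $2^{b_j} \ge d\lambda_j$, not ``$2^{-b_j}$ approximates $\lambda_j$''; the former is what guarantees both $\sum_j 2^{b_j} \le 4d$ and $\norm{\Split(\rho')}_{\op} \le 1/d$, facts you then rely on.) With these two corrections \--- and rescaling $\wh E$ by $4$ in the final estimate $\tau = \iprod{O,\rho'} + 4\iprod{\Dsplit(O),\wh E}$ \--- your outline coincides with the paper's proof.
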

\begin{proof}
First, we apply Theorem~\ref{thm:untentangled-tomography} with   half of the total copies to learn a state $\rho'$ such that 
\[
\norm{\rho' - \rho}_F \leq d\eps \leq \frac{\sqrt{\eps}}{d^5} \,.
\]
Now let $U$ be the matrix that diagonalizes $\rho'$.  We will work in the $U$-basis where say  $\rho' = \diag(\lambda_1, \dots ,\lambda_d)$.  For each $j \in [d]$ let $b_j$ be the smallest nonnegative integer such that $2^{b_j} \geq d\lambda_j$.  Note that we must have $2^{b_1} + \dots + 2^{b_d} \leq 4d$.  Also, $\Split_{b_1,\dots , b_d}(\rho')$ is a diagonal matrix with all entries at most $1/d$  so $\norm{\Split_{b_1,\dots , b_d}(\rho')} \leq 1/d$.  Let $k = 2^{b_1} + \dots + 2^{b_d}$.  Now let
\[
\sigma = \frac{1}{3} \left(\frac{4I_{k}}{k} - \Split_{b_1,\dots , b_d}(\rho') \right) \,.
\]
Note that $\sigma$ is a quantum state in $\C^{k \times k}$.  Now by Claim~\ref{claim:simulate-measurements} and Lemma~\ref{lem:sim}, we simulate access to  copies of the state 
\[
\tilde{\rho} = \frac{3}{4}\sigma + \frac{1}{4} \Split_{b_1,\dots , b_d}(\rho) \,.
\]
Note that 
\[
\norm{\tilde{\rho} - \frac{I_{k}}{k}}_F = \norm{\frac{1}{4}\Split_{b_1,\dots , b_d}( \rho - \rho') }_F \leq  \frac{\sqrt{\eps}}{d^5} \,.
\]
Thus, we can apply Theorem~\ref{thm:shadow-balanced} using $O(1/\eps^2)$ copies to obtain an estimate $\wh{E}$ for $\tilde{\rho} - \frac{I_{k}}{k}$.  We get that
\[
\Pr\left[ \left\lvert \left\langle \Dsplit_{b_1, \dots , b_d}(O),  \tilde{\rho} - \frac{I_{k}}{k} \right\rangle - \left\langle \Dsplit_{b_1, \dots , b_d}(O),  \wh{E} \right\rangle  \right\rvert \geq 4\eps \norm{O}  \right] \leq 0.1 
\]
where we used Claim~\ref{claim:splitting-properties}.  When the above holds, then we set 
\[
\tau =  \langle O, \rho' \rangle + 4 \langle \Dsplit_{b_1, \dots , b_d}(O),  \wh{E} \rangle
\]
and then get
\[
\left\lvert\langle O,  \rho \rangle - \tau  \right\rvert \leq 16 \eps \norm{O} \,.
\]
To complete the proof and get $1 - \delta$ success probability, note that we can repeat the above $c = 10\log(1/\delta)$ times independently to obtain estimates $\wh{E}_1, \dots \wh{E}_c$.  Then for a given query $O$ we compute estimates $\tau_1, \dots , \tau_c$ as above and output the median of these estimates.
\end{proof}
\section{Reducing to Small $\eps$ via Random Projection}
\label{sec:reduction}

In this section, we show how to reduce to the case where $\eps \leq 1/\sqrt{d}$ by first applying a random ``projection" (we will actually use matrices with Gaussian entries so it is not technically a projection).  We show in Claim~\ref{claim:mean-calc} that we have an unbiased estimator after the projection and we bound the variance in Claim~\ref{claim:var-calc}.

\begin{definition}
For $k$ matrices $V_1 \in \C^{d \times m},  \dots , V_k  \in \C^{d \times m}$ and a matrix $M \in \C^{d \times d}$, we write $M_{V_1, \dots , V_k}$ to be the $k \times k$ matrix whose $ij$ entry is $\tr(V_i^\dagger  M V_j)$ when $i \neq j$ and is $0$ otherwise.
\end{definition} 

\begin{claim}\label{claim:mean-calc}
Let $d,m$ be integers.  Let $M,N \in \C^{d \times d}$ be traceless Hermitian matrices.  Let $V_1, \dots , V_k \in \C^{d \times m}$ be matrices with entries drawn i.i.d. from $N(0,1/d)$.  Then
\[
\E[\langle M_{V_1, \dots , V_k}, N_{V_1, \dots , V_k} \rangle ] = \frac{k(k-1)m}{d^2} \langle M, N \rangle \,.
\]
\end{claim}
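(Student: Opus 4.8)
The plan is a direct second-moment computation. First I would expand, using the definition of $M_{V_1,\dots,V_k}$ and linearity of expectation,
\[
\E\bigl[\langle M_{V_1,\dots,V_k}, N_{V_1,\dots,V_k}\rangle\bigr] = \sum_{i\neq j} \E\bigl[\overline{\tr(V_i^\dagger M V_j)}\,\tr(V_i^\dagger N V_j)\bigr],
\]
which is a sum of $k(k-1)$ terms. Since the $V_i$ are i.i.d., the joint law of $(V_i,V_j)$ is the same for every ordered pair $i\neq j$, so every term is equal and it suffices to evaluate a single one and multiply by $k(k-1)$. I would fix such a pair and abbreviate $A = V_i$, $B = V_j$, independent $d\times m$ Gaussian matrices with i.i.d.\ $N(0,1/d)$ entries.

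For the per-pair term I would condition on $A$ and integrate out $B$ first. Rewriting $\tr(A^\dagger M B) = \langle MA, B\rangle$ (using Hermiticity of $M$) and similarly $\tr(A^\dagger N B) = \langle NA, B\rangle$, and using that the entries of $B$ have mean zero with $\E[\overline{B_{qa}}B_{q'a'}] = \tfrac1d\delta_{qq'}\delta_{aa'}$, a one-line calculation gives $\E_B\bigl[\overline{\langle MA,B\rangle}\langle NA,B\rangle\bigr] = \tfrac1d\overline{\langle MA,NA\rangle} = \tfrac1d\overline{\tr(MN\,AA^\dagger)}$. Then I would integrate out $A$ via $\E[AA^\dagger] = \tfrac md I_d$ to get $\tfrac{m}{d^2}\overline{\tr(MN)}$, and finally invoke Hermiticity of $M$ and $N$ to note that $\tr(MN)$ is real and equals $\langle M,N\rangle$. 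Summing the resulting value $\tfrac{m}{d^2}\langle M,N\rangle$ over all $k(k-1)$ ordered pairs yields the claim.

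I do not anticipate a genuine obstacle: the computation is routine. The only points that require care are the conjugation bookkeeping — which is why it is cleanest to phrase the $B$-integration through the second moment $\E[\overline{B_{qa}}B_{q'a'}] = \tfrac1d\delta_{qq'}\delta_{aa'}$, a form that is valid whether one reads $N(0,1/d)$ as a real or complex Gaussian — and, for a reader who prefers to avoid matrix identities, the alternative of expanding all four traces in coordinates and collapsing the Kronecker deltas. It is worth remarking that the traceless hypothesis on $M,N$ plays no role in this particular (mean) computation: what matters is only that the diagonal of $M_{V_1,\dots,V_k}$ is set to zero, so that the terms $\tr(V_i^\dagger M V_i)$ never enter the sum; Hermiticity is used solely to guarantee that $\langle M,N\rangle \in \R$.
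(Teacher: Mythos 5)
Your proof is correct and takes essentially the same route as the paper: reduce by linearity and symmetry to a single off-diagonal term, integrate out $V_j$ then $V_i$ using the second moment $\E[\overline{B_{qa}}B_{q'a'}]=\tfrac1d\delta_{qq'}\delta_{aa'}$, yielding $\tfrac{m}{d^2}\langle M,N\rangle$ per ordered pair and $\tfrac{k(k-1)m}{d^2}\langle M,N\rangle$ in total. Your handling of the complex conjugates in $\langle M_{V},N_{V}\rangle$ is actually more careful than the paper's (which elides them), and your observation that tracelessness is irrelevant to this mean computation, entering only in the variance bound of Claim~\ref{claim:var-calc}, is a correct and useful remark.
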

\begin{proof}
Let $i,j \in [k]$ with $i \neq j$.  We have 
\[
\begin{split}
\E[ M_{V_1, \dots , V_k}[i,j] N_{V_1, \dots , V_k}[i,j] ] & = \E[ \tr(V_i^\dagger M V_j) \tr(V_i^\dagger N V_j) ] \\ &= \frac{1}{d}\E[  \langle V_i^\dagger M, V_i^\dagger N \rangle ] \\ &= \frac{m}{d^2} \langle M, N \rangle \,.
\end{split} 
\]
Now summing the above over all $i,j \in [k]$ with $i \neq j$ gives us
\[
\E[\langle M_{V_1, \dots , V_k}, N_{V_1, \dots , V_k} \rangle ]  = \frac{k(k-1)m}{d} \langle M, N \rangle \,.
\]
\end{proof}

Now we bound the variance of the above quantity.  

\begin{claim}\label{claim:var-calc}
Let $M,N \in \C^{d \times d}$ be traceless Hermitian matrices.  Let $V_1, \dots, V_k \in \C^{d \times m}$ be matrices whose entries are drawn i.i.d. from $N(0,1/d)$.    Then 
\[
\Var(\langle M_{V_1, \dots , V_k}, N_{V_1, \dots , V_k}\rangle)  \leq  6\left(\frac{m^2k^2\norm{M}_F^2 \norm{N}_F^2}{d^4} + \frac{mk^2 \langle M^2, N^2 \rangle}{d^4} + \frac{mk^3 \norm{MN}_F^2 }{d^4} \right) \,.
\]
\end{claim}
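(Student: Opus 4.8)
The plan is to expand the variance over the independent Gaussian matrices $V_1,\dots,V_k$ and to organize the calculation by how the index sets of two "off-diagonal entries" overlap. Following the proof of Claim~\ref{claim:mean-calc}, write
\[
\langle M_{V_1,\dots,V_k}, N_{V_1,\dots,V_k}\rangle \;=\; \sum_{i\neq j} W_{ij}, \qquad W_{ij} \;:=\; \tr(V_i^\dagger M V_j)\,\tr(V_i^\dagger N V_j),
\]
and record three facts: $W_{ij}$ depends only on the pair $(V_i,V_j)$; $W_{ij}$ and $W_{ji}$ are identically distributed (since $(V_i,V_j)\stackrel{d}{=}(V_j,V_i)$); and $\E[W_{ij}] = \tfrac{m}{d^2}\langle M,N\rangle$ by the intermediate computation in the proof of Claim~\ref{claim:mean-calc}. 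Then
\[
\Var\!\bigl(\langle M_V,N_V\rangle\bigr) \;=\; \sum_{i\neq j}\ \sum_{i'\neq j'}\Cov\!\bigl(W_{ij},\,W_{i'j'}\bigr),
\]
and by independence of the $V_\ell$ this covariance is $0$ whenever $\{i,j\}\cap\{i',j'\}=\emptyset$. So only two families of quadruples contribute: (a) those with $\{i,j\}=\{i',j'\}$, of which there are $2k(k-1)=O(k^2)$, and (b) those sharing exactly one index, of which there are $O(k^3)$. I would bound each family using the Gaussian (Isserlis/Wick) moment identities for the i.i.d.\ $N(0,1/d)$ entries, the elementary one being $\E[(V_\ell)_{ab}(V_\ell)_{cd}] = \tfrac1d\delta_{ac}\delta_{bd}$, together with the three-term pairing rule for fourth moments.

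For family (b), say $i=i'$ with $j,j'$ distinct from $i$ and from each other (the patterns $i=j'$, $j=i'$, $j=j'$ are identical after relabeling and using $W_{ij}\stackrel{d}{=}W_{ji}$). The "outer" copies $V_j,V_{j'}$ can only contract within their own factor, collapsing $W_{ij}$ and $W_{ij'}$ to bilinear forms in the shared copy $V_i$ through the matrix $MN$, after which $V_i$ appears to the fourth power. Of the three Wick pairings of that fourth moment, exactly one carries the factor $m^2/d^4$, and it evaluates to $\bigl(\tfrac{m}{d^2}\langle M,N\rangle\bigr)^2 = \E[W_{ij}]\,\E[W_{ij'}]$, so it is annihilated by the mean subtraction in the covariance. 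The two surviving pairings each carry only a factor $m/d^4$ and evaluate to $\|MN\|_F^2$ and $\tr((MN)^2)$ respectively, each at most $\|MN\|_F^2$ in absolute value (via $|\tr(A^2)|\le\|A\|_F^2$). Hence each such covariance is $O(m/d^4)\,\|MN\|_F^2$; summing over the $O(k^3)$ quadruples produces the $\tfrac{mk^3}{d^4}\|MN\|_F^2$ term.

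For family (a), Cauchy--Schwarz for covariances with $W_{ij}\stackrel{d}{=}W_{ji}$ gives $\bigl|\Cov(W_{ij},W_{i'j'})\bigr|\le\Var(W_{ij})\le\E[W_{ij}^2]$, which is a product of a fourth moment in $V_i$ and one in $V_j$. Expanding both by Wick, the "factorizing" pairing yields $\tfrac{m^2}{d^4}\|M\|_F^2\|N\|_F^2$, one further pairing yields $\tfrac{m^2}{d^4}\langle M,N\rangle^2 \le \tfrac{m^2}{d^4}\|M\|_F^2\|N\|_F^2$, and every remaining pairing carries only a factor $m/d^4$ and evaluates to a trace monomial bounded by one of $\langle M^2,N^2\rangle$, $\|MN\|_F^2$, or $\langle M,N\rangle^2$ (and, since $M,N$ are Hermitian, $\langle M^2,N^2\rangle = \|MN\|_F^2$). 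Thus $\E[W_{ij}^2] \le O\!\bigl(\tfrac{m^2}{d^4}\|M\|_F^2\|N\|_F^2 + \tfrac{m}{d^4}\langle M^2,N^2\rangle\bigr)$; summing over the $O(k^2)$ quadruples gives the $\tfrac{m^2k^2}{d^4}\|M\|_F^2\|N\|_F^2$ and $\tfrac{mk^2}{d^4}\langle M^2,N^2\rangle$ terms. Adding the two families and absorbing the absolute constants into the stated factor $6$ finishes the proof.

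I expect the main obstacle to be the bookkeeping in family (b): one must check carefully that the \emph{unique} $\Theta(m^2/d^4)$ contraction of $\E[W_{ij}W_{i'j'}]$ is exactly $\E[W_{ij}]\E[W_{i'j'}]$, so that no $\Theta(m^2 k^3/d^4)$ term survives in the variance — this cancellation is precisely what allows the $\|M\|_F^2\|N\|_F^2$ term to keep only a $k^2$ factor while the $\|MN\|_F^2$ term picks up $k^3$ — and then to enumerate the remaining $O(m/d^4)$ contractions and identify each with the correct trace monomial. A minor technical point is that the entries are \emph{real} Gaussians, so the relevant fourth moments use the three-pairing Isserlis rule (with a complex circularly-symmetric convention the pertinent second moment, and hence $\E[W_{ij}]$, would vanish, contradicting Claim~\ref{claim:mean-calc}); in any case the same trace monomials appear and the bound is unaffected.
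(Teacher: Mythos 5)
Your proposal is correct and follows essentially the same route as the paper's proof: decompose the inner product as $\sum_{i\neq j} P_{ij}$, note that disjoint index pairs are uncorrelated, and then separately bound the $O(k^2)$ same-pair terms (via $\E[P_{ij}^2]$, yielding the $m^2 k^2\|M\|_F^2\|N\|_F^2/d^4$ and $mk^2\langle M^2,N^2\rangle/d^4$ contributions) and the $O(k^3)$ one-shared-index terms (where the $m^2/d^4$ contraction cancels against $\E[P_{ij}]\E[P_{ij'}]$, leaving the $mk^3\|MN\|_F^2/d^4$ term). The only cosmetic difference is that you phrase the moment computations in Wick/Isserlis language and use Cauchy--Schwarz $|\Cov(P_{ij},P_{ji})|\le\Var(P_{ij})$ for the $\{i,j\}=\{i',j'\}$ terms, whereas the paper computes $\Var(P_{ij})$ and $\Cov(P_{ij_1},P_{ij_2})$ explicitly; the underlying Gaussian-moment calculation is identical.
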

\begin{proof}
We can write
\[
\langle M_{V_1, \dots , V_k}, N_{V_1, \dots , V_k}\rangle = \sum_{i,j \in [k], i \neq j} \tr(V_i^\dagger M V_j)\tr(V_i^\dagger N V_j) \,.
\]
Define
\[
P_{ij} = \tr(V_i^\dagger M V_j)\tr(V_i^\dagger N V_j) \,,
\]
for any pairs $(i_1, j_1)$ and $(i_2, j_2)$ that are disjoint, $\Cov(P_{i_1j_1}, P_{i_2j_2}) = 0$.  Now we compute $\Cov(P_{ij_1}, P_{ij_2}) = 0$ where $i, j_1, j_2$ are all distinct.  We have
\[
\begin{split}
\E[P_{ij_1}P_{ij_2}] &=  \E_{V_i} [\E_{V_{j_1}}[\tr(V_i^\dagger M V_{j_1})\tr(V_i^\dagger N V_{j_1})]  \E_{V_{j_2}}[\tr(V_i^\dagger M V_{j_2})\tr(V_i^\dagger N V_{j_2})]] \\ & = \E_{V_i} \left[  \frac{1}{d^2}\langle V_{i}^\dagger M, V_i^\dagger N \rangle^2  \right] \\ &= \frac{1}{d^2} \left( m \E_{v}[(vMNv^\dagger)^2] + m(m-1) (\E_{v}[vMNv^\dagger])^2 \right) \\ & = \frac{m^2\tr(MN)^2 }{d^4}  + \frac{m\norm{MN}_F^2}{d^4} + \frac{m \tr((MN)^2)}{d^4}
\end{split}
\]
where the vector $v$ is a $d$-dimensional vector with entries drawn from $N(0,1/d)$. Also
\[
\begin{split}
\E[P_{ij_1}] \E[P_{ij_2}] &= \left(\E_{V_i,V_j}[\tr(V_i^\dagger M V_{j})\tr(V_i^\dagger N V_{j})]\right)^2 \\ & = \left(\E_{V_i}\left[\frac{1}{d} \langle V_i^\dagger M, V_i^\dagger N \rangle \right]\right)^2
\\ &= \frac{m^2}{d^2} (\E_{v}[vMNv^\dagger])^2
\\ & = \frac{m^2\tr(MN)^2 }{d^2}
\end{split}
\]
and thus,
\[
\Cov(P_{ij_1}, P_{ij_2}) = \frac{2m \norm{MN}_F^2}{d^4}
\]

Next, we can compute $\Var(P_{ij})$ for $i \neq j$.  We have
\[
\begin{split}
E[P_{ij}^2] &=  \E_{V_i, V_j} [\tr(V_i^\dagger M V_{j})^2\tr(V_i^\dagger N V_{j})^2] \\ &= \frac{1}{d^2}\E_{V_i}\left[  \norm{V_i^\dagger M}_F^2 \norm{V_i^\dagger N}_F^2 + 2\langle V_i^\dagger M, V_i^\dagger N \rangle^2 \right]  \\ & = \frac{m^2\norm{M}_F^2 \norm{N}_F^2 }{d^4} + \frac{2m\langle M^2, N^2 \rangle}{d^4} + \frac{2 m^2 \tr(MN)^2}{d^4} + \frac{2 m\norm{MN}_F^2}{d^4} + \frac{2 m\tr((MN)^2)}{d^4}
\end{split}
\]
and recall
\[
\E[P_{ij}]^2 = \frac{m^2\tr(MN)^2 }{d^4}
\]
so
\[
\Var(P_{ij}) = \frac{m^2 \norm{M}_F^2 \norm{N}_F^2 }{d^4} + \frac{2m\langle M^2, N^2 \rangle}{d^4} + \frac{ m^2 \tr(MN)^2}{d^4} + \frac{2 m\norm{MN}_F^2}{d^4} + + \frac{2 m\tr((MN)^2)}{d^4} \,.
\]

Thus, we can bound
\[
\Var(\langle M_{V_1, \dots , V_k}, N_{V_1, \dots , V_k}\rangle) \leq  6\left(\frac{m^2k^2\norm{M}_F^2 \norm{N}_F^2}{d^4} + \frac{mk^2 \langle M^2, N^2 \rangle}{d^4} + \frac{mk^3 \norm{MN}_F^2 }{d^4} \right)
\]
as desired.
\end{proof}

\begin{corollary}\label{coro:shadow-var-bound}
Let $M,N \in \C^{d \times d}$ be Hermitian matrices such that $\norm{M}_1 \leq 1$ and $\norm{N}_{\op} \leq 1$.   Let $k$ be a parameter and let $V_1, \dots, V_k \in \C^{d \times 2d/k}$ be matrices whose entries are drawn i.i.d. from $N(0,1/d)$.    Then for any parameter $\gamma > 0$,
\[
\Pr\left[ \left\lvert \langle M, N \rangle - \frac{d}{2(k-1)} \langle M_{V_1, \dots , V_k}, N_{V_1, \dots , V_k}\rangle \right\rvert \geq \gamma \right] \leq \frac{20 d}{k^2 \gamma^2} \,.
\]     
\end{corollary}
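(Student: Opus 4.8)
The plan is to recognize $\wh{\theta} \triangleq \frac{d}{2(k-1)}\langle M_{V_1,\dots,V_k}, N_{V_1,\dots,V_k}\rangle$ as an unbiased estimator of $\langle M, N\rangle$ with variance $O(d/k^2)$, and then conclude by Chebyshev's inequality. Write $m = 2d/k$ for the common number of columns of the $V_i$. Unbiasedness is immediate from Claim~\ref{claim:mean-calc}, which gives $\E[\langle M_{V_1,\dots,V_k}, N_{V_1,\dots,V_k}\rangle] = \frac{k(k-1)m}{d^2}\langle M, N\rangle = \frac{2(k-1)}{d}\langle M, N\rangle$; the rescaling factor $\frac{d}{2(k-1)}$ is chosen precisely to cancel this, so $\E[\wh{\theta}] = \langle M, N\rangle$. (Claims~\ref{claim:mean-calc} and~\ref{claim:var-calc} are stated for traceless $M,N$, but their proofs are pure Gaussian-moment calculations that never invoke tracelessness, so they apply here verbatim; alternatively one may split off the exactly-known trace parts of $M$ and $N$.)

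For the variance, first record the norm bounds to be plugged in. Since $\norm{M}_{\op} \le \norm{M}_1 \le 1$, we have $\norm{M}_F^2 \le \norm{M}_{\op}\norm{M}_1 \le 1$; since $\norm{N}_{\op}\le 1$, we have $\norm{N}_F^2 \le d\,\norm{N}_{\op}^2 \le d$ and $N^2 \preceq I$, whence $\langle M^2, N^2\rangle = \norm{MN}_F^2 = \tr(M N^2 M) \le \tr(M^2) = \norm{M}_F^2 \le 1$. Now apply Claim~\ref{claim:var-calc} and multiply through by $\bigl(\tfrac{d}{2(k-1)}\bigr)^2$ to get
\[
\Var(\wh{\theta}) \;\le\; \frac{6d^2}{4(k-1)^2}\left(\frac{m^2k^2\norm{M}_F^2\norm{N}_F^2}{d^4} + \frac{mk^2\langle M^2,N^2\rangle}{d^4} + \frac{mk^3\norm{MN}_F^2}{d^4}\right).
\]

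Substituting $m = 2d/k$ (so that $m^2 k^2 = 4d^2$, $mk^2 = 2dk$, and $mk^3 = 2dk^2$) along with the bounds above, the first term becomes $\frac{6\norm{M}_F^2\norm{N}_F^2}{(k-1)^2} \le \frac{6d}{(k-1)^2}$, while the other two are lower order (of size $O(1/d)$, hence $O(d/k^2)$ since $k \le 2d$). A routine accounting of constants, using $(k-1)^2 \ge k^2/4$, then gives $\Var(\wh\theta) \le 20d/k^2$. Chebyshev's inequality finally yields
\[
\Pr\bigl[\,\lvert\langle M, N\rangle - \wh\theta\rvert \ge \gamma\,\bigr] \;\le\; \frac{\Var(\wh\theta)}{\gamma^2} \;\le\; \frac{20d}{k^2\gamma^2},
\]
as claimed.

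There is no genuine obstacle here: Claims~\ref{claim:mean-calc} and~\ref{claim:var-calc} already carry out all the substantive probabilistic work. The only points needing care are (i) choosing the rescaling $\frac{d}{2(k-1)}$ so that $\wh\theta$ is exactly unbiased, and (ii) tracking the constants (and the mild edge case where $k$ is close to $2d$) carefully enough that the final bound emerges with the stated constant $20$ rather than a larger multiple.
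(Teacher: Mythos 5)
Your proof is correct and follows the paper's approach exactly: the paper's own proof of this corollary is the single sentence ``combine Claim~\ref{claim:mean-calc} and Claim~\ref{claim:var-calc} with Chebyshev's inequality,'' and you spell out precisely that computation. Your added observation that the tracelessness hypothesis in Claims~\ref{claim:mean-calc} and~\ref{claim:var-calc} is never actually used in their proofs is a worthwhile clarification the paper omits, since the corollary is subsequently applied with $M = \rho$, a density matrix of trace one, so the corollary as stated genuinely requires this; the only soft spot in your write-up is the final ``routine accounting of constants,'' which as written (using $(k-1)^2 \ge k^2/4$) does not quite land on $20$ for very small $k$ (e.g.\ $k=2$), but this is immaterial since the corollary is only invoked with $k \ge 100\sqrt{d}/\eps$.
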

\begin{proof}
This follows from combining Claim~\ref{claim:mean-calc} and Claim~\ref{claim:var-calc} and applying Chebyshev's inequality.    
\end{proof}

Before we can complete the reduction, we need a few basic matrix concentration bounds.  The following statements follow from standard matrix concentration inequalities.

\begin{claim}\label{claim:matrix-chernoff}
Let $V_1, \dots, V_k \in \C^{d \times 2d/k}$ be matrices whose entries are drawn i.i.d. from $N(0,1/d)$.  Then with probability $0.99$,
\[
0.1 I_d \preceq \sum_{i = 1}^k V_i V_i^\dagger \preceq 10 I_d \,.
\]
Also let $N \in \C^{d \times d}$ be a fixed matrix with $\norm{N}_{\op} \leq 1$.  Then with probability $0.99$
\[
\norm{N_{V_1, \dots , V_k}}_{\op} \leq 10 \,.
\]
\end{claim}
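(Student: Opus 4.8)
The plan is to prove the two statements separately; both are instances of standard Gaussian random matrix concentration, and neither uses any of the representation-theoretic machinery from the rest of the paper.

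\medskip
\noindent\textbf{The spectral bound on $\sum_i V_iV_i^\dagger$.} First I would concatenate the blocks into one matrix $V := [\,V_1 \mid \cdots \mid V_k\,] \in \C^{d \times 2d}$ with i.i.d.\ $N(0,1/d)$ entries; then $\sum_{i=1}^k V_i V_i^\dagger = VV^\dagger$, so the claim says exactly that the squared extreme singular values of a $d \times 2d$ Gaussian matrix lie in $[0.1,10]$. This is immediate from the standard non-asymptotic estimate (Bai--Yin, Davidson--Szarek): for $G = \sqrt d\, V$ with i.i.d.\ $N(0,1)$ entries, $\sqrt{2d} - \sqrt{d} - t \le \sigma_{\min}(G) \le \sigma_{\max}(G) \le \sqrt{2d} + \sqrt{d} + t$ outside probability $2e^{-t^2/2}$; taking $t$ a small constant multiple of $\sqrt d$ places the eigenvalues of $\tfrac1d GG^\dagger = VV^\dagger$ in $[(\sqrt 2 - 1)^2 - o(1),\ (\sqrt 2 + 1)^2 + o(1)] \subseteq [0.1, 10]$ with failure probability $\le 0.005$, for $d$ above an absolute constant (the only regime relevant here).

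\medskip
\noindent\textbf{The operator-norm bound on $N_{V_1,\dots,V_k}$.} Since $N$ is Hermitian, $N_{V_1,\dots,V_k}$ is Hermitian, so $\norm{N_{V_1,\dots,V_k}}_{\op} \le 2\max_{x\in\mathcal{N}}\abs{x^\dagger N_{V_1,\dots,V_k}x}$ for a $\tfrac14$-net $\mathcal{N}$ of the unit sphere of $\C^k$, with $\abs{\mathcal N}\le 9^{2k}$. Fixing $x$ and letting $\mathbf v \sim N(0,\tfrac1d I)$ denote the Gaussian vector of all entries of the $V_i$,
\[
x^\dagger N_{V_1,\dots,V_k}x \;=\; \sum_{i\ne j}\overline{x_i}x_j\,\tr(V_i^\dagger N V_j) \;=\; \mathbf v^\dagger B_x \mathbf v,\qquad B_x = M_x \otimes (I_m \otimes N),
\]
with $m=2d/k$ and $M_x$ equal to $xx^\dagger$ with its diagonal zeroed. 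The key facts are $\tr B_x = 0$ (this is precisely where the vanishing diagonal of $N_{V_1,\dots,V_k}$ enters), $\norm{B_x}_{\op} \le \norm{M_x}_{\op}\norm{N}_{\op} \le 2$, and $\norm{B_x}_F^2 = \norm{M_x}_F^2 \cdot m\norm{N}_F^2 \le md = 2d^2/k$. Hanson--Wright then gives $\Pr[\,\abs{x^\dagger N_{V_1,\dots,V_k}x} > s\,] \le 2\exp(-c\min(s^2 k, sd))$; since $m\ge 1$ forces $k\le 2d$, a union bound over $\mathcal{N}$ is controlled by $9^{2k}e^{-cs^2k}$ and is $\le 0.005$ once $s$ is a large enough absolute constant, so $\norm{N_{V_1,\dots,V_k}}_{\op} = O(1)$. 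Combined with the first part the total failure probability is $\le 0.01$.

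\medskip
\noindent\textbf{Main obstacle and remarks.} The only real content is the operator-norm bound, and the subtlety is the deleted diagonal: one cannot bound $N_{V_1,\dots,V_k} = A - \diag(A)$, with $A = [\tr(V_i^\dagger N V_j)]_{i,j} = \mathcal V^\dagger(I_m\otimes N)\mathcal V$, by bounding $A$ and its diagonal separately, because $\norm{A}_{\op}$ can be of order $m\abs{\tr N}/d$, i.e.\ as large as $m = 2d/k$ (take $N = I_d$). The bound holds only because of the cancellation between $A$ and its diagonal --- equivalently, because the quadratic-form kernel $B_x$ is traceless --- which the Hanson--Wright step is what exploits. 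To nail the explicit constant $10$ rather than a generic $O(1)$ one can instead decouple: randomly two-color $[k] = S\sqcup T$ so that $N_{V_1,\dots,V_k} = 4\,\E_{S,T}[\,\text{the }S\times T\text{ block, zero-padded}\,]$; conditioned on $\{V_j\}_{j\in T}$ that block is an honest Gaussian matrix with i.i.d.\ rows of covariance $\Sigma = \tfrac1d[\tr(V_{j'}^\dagger N^2 V_j)]_{j,j'\in T}$, and by the first part $\tr\Sigma = \tfrac1d\iprod{N^2,\ \sum_{j\in T}V_jV_j^\dagger} = O(1)$ while $\norm{\Sigma}_{\op} = O(1/k)$; the Gaussian-matrix bound yields $\E\norm{\cdot}_{\op} \le (\sqrt{\abs{S}}+\sqrt{\abs{T}})\norm{\Sigma}_{\op}^{1/2}$, which a short computation puts below $10$, and concentration of order-$2$ Gaussian chaos upgrades this to high probability. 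The small-$d$ corner of the first part (where the conclusion can genuinely fail, e.g.\ $d=1$) is irrelevant, since in the application $d$ is polynomially large.
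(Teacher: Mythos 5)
The paper offers no proof of this claim at all; it is simply asserted with the remark that it ``follows from standard matrix concentration inequalities.'' So there is nothing in the paper's argument to compare against, and the question is just whether your sketch is sound. It is.

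Your treatment of the first part (concatenate to a $d\times 2d$ Gaussian matrix and invoke Davidson--Szarek / Bai--Yin, noting $(\sqrt 2-1)^2\approx 0.17$ and $(\sqrt 2+1)^2\approx 5.83$ both sit comfortably inside $[0.1,10]$) is exactly the standard route and is what ``matrix concentration'' presumably refers to. Your treatment of the second part is the genuinely nontrivial piece, and you correctly identify \emph{why} it is nontrivial: the full Gram matrix $A=[\tr(V_i^\dagger N V_j)]_{i,j}$ and its diagonal each have operator norm of order $m=2d/k$, so one cannot bound the off-diagonal part $N_{V_1,\dots,V_k}=A-\diag(A)$ by triangle inequality, and the cancellation must be used. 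Encoding $x^\dagger N_{V_1,\dots,V_k}x$ as a Gaussian chaos $\mathbf v^\dagger B_x\mathbf v$ with $B_x = M_x\otimes I_m\otimes N$, observing that the deleted diagonal is precisely what forces $\tr B_x=0$, and then running Hanson--Wright over a $\tfrac14$-net is a clean way to exploit that cancellation; your bounds $\|B_x\|_{\op}\le 2$, $\|B_x\|_F^2\le md=2d^2/k$ are right, and the union bound does close for $s$ an absolute constant in both regimes $s^2k\lessgtr sd$ since $m\ge 1$. The decoupling alternative you outline is also the standard way to nail the literal constant $10$ rather than an unspecified $O(1)$ (which would anyway suffice after adjusting the other absolute constants in Theorem~\ref{lem:reduce-dimension}).

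Two small remarks, neither a real gap. First, the claim as written does not assume $N$ is Hermitian, while your net argument does; for non-Hermitian $N$ you would use a bilinear net $\max_{x,y}|x^\dagger N_{V_1,\dots,V_k}y|$, or simply note that in the only place the claim is used ($N=O$ an observable) $N$ is Hermitian. Second, the paper never fixes whether the $V_i$ are real or complex Gaussian; your argument (and the surrounding Claims~\ref{claim:mean-calc} and \ref{claim:var-calc}) should be read with one convention fixed throughout, but Hanson--Wright and the $\epsilon$-net bound work in either case with only cosmetic changes.
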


\noindent
Before we formalize the reduction, we first need the following definition:

\begin{definition}
We say that an algorithm solves classical shadows with parameters $d, \eps$ with probability $p$ using $f(d,\eps , p)$ copies if for an arbitrary unknown state $\rho$, the algorithm makes measurements on $f(d,\eps , p)$ copies of $\rho$ and stores classical information such that for any observable $O \in \C^{d \times d}$, the algorithm can access only the classical information and produce an estimate $\tau$ such that with probability $p$, 
\[
|\langle O, \rho \rangle - \tau | \leq \eps \norm{O} \,.
\]
\end{definition}

\noindent With this, we can now show:

\begin{theorem}\label{lem:reduce-dimension}
Assume we are given parameters $d,\eps, k$ such that  $d \geq k \geq 100\sqrt{d}/\eps$. Then if there is an algorithm for solving classical shadows with probability $0.9$ for parameters $k, 0.1 \eps k / d$ using $f(k, 0.01 \eps k / d, 0.9) $ samples, then for any $\delta > 0$, there is also an algorithm for solving classical shadows with parameters $d,\eps$ that succeeds with probability $1 - \delta$ and uses  $f(d,\eps , 1- \delta) \leq O(f(k, 0.01 \eps k / d, 0.9) + 1/\eps^2 ) \cdot \log(1/\delta) $ samples.
\end{theorem}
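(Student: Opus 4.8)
\emph{Proof plan.} The plan is to compress the $d$-dimensional instance down to a $k$-dimensional one using the random Gaussian ``compression'' $M\mapsto M_{V_1,\dots,V_k}$ underlying Claim~\ref{claim:mean-calc} and Corollary~\ref{coro:shadow-var-bound}, and then to run the assumed $k$-dimensional classical shadows algorithm as a black box on the compressed state. First I would fix the query observable $O$; subtracting off its (exactly computable) trace part I may assume $O$ is traceless with $\norm{O}_{\op}\le 1$, so that the target equals $\langle O,\rho\rangle = \langle O,\rho - I_d/d\rangle$. I then draw $V_1,\dots,V_k\in\C^{d\times m}$ with $m = 2d/k$ and i.i.d.\ $N(0,1/d)$ entries and set $G = \sum_i V_i V_i^\dagger$; by Claim~\ref{claim:matrix-chernoff}, with probability at least $0.98$ we have $G\preceq 10 I_d$ and $\norm{O_{V_1,\dots,V_k}}_{\op}\le 10$, and since the $V_i$ are our own randomness and $O$ is fixed, I simply resample until this event holds.

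The technical heart is to turn the compression into an honest quantum channel. Fixing a small enough constant $\alpha\le 1/10$, I would define $\Lambda\colon\C^{d\times d}\to\C^{k\times k}$ by
\[
\Lambda(M) = \alpha\,\Phi(M) + \Tr\!\bigl((I_d-\alpha G)M\bigr)\,\frac{I_k}{k}, \qquad \Phi(M)_{ij} = \Tr(V_i^\dagger M V_j),
\]
where $\Phi$ is the completely positive map with Kraus operators $K_\mu = \sum_{i\in[k]} e_i\,(v_i^{(\mu)})^\dagger$ (here $v_i^{(\mu)}$ is the $\mu$-th column of $V_i$ and $e_i$ the $i$-th standard basis vector of $\C^k$), so that $\sum_\mu K_\mu^\dagger K_\mu = G$, and the second summand is the measure-and-prepare channel attached to the PSD operator $I_d-\alpha G$ (which is PSD on the good event, since then $\alpha G\preceq I_d$). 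A direct check using $\Tr\Phi(M) = \Tr(GM)$ shows $\Lambda$ is trace preserving, so $\Lambda(\rho)$ is a genuine $k$-dimensional state, and since $\Lambda$ depends only on the $V_i$ I can physically apply it copy-by-copy (by Stinespring dilation of the known channel, exactly as in Claim~\ref{claim:simulate-measurements}) and thus simulate any measurement of $\Lambda(\rho)^{\otimes t}$ from $t$ copies of $\rho$, at no extra cost. The identity that makes everything fit is that $O_{V_1,\dots,V_k}$ is traceless and has zero diagonal, so pairing it against $\Lambda(\rho)$ annihilates both the diagonal of $\Phi(\rho)$ and the $I_k/k$ term, leaving $\langle O_{V_1,\dots,V_k},\Lambda(\rho)\rangle = \alpha\,\langle O_{V_1,\dots,V_k},\rho_{V_1,\dots,V_k}\rangle$.

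With this in hand I would run the assumed $k$-dimensional algorithm with accuracy $\eps' = 0.01\,\eps k/d$ and success probability $0.9$ on $f(k,\eps',0.9)$ simulated copies of $\Lambda(\rho)$, then query its stored description with the observable $O_{V_1,\dots,V_k}$ to get $\tau_{\mathrm{bb}}$ with $|\tau_{\mathrm{bb}} - \langle O_{V_1,\dots,V_k},\Lambda(\rho)\rangle|\le\eps'\norm{O_{V_1,\dots,V_k}}_{\op}\le 10\eps'$, and output
\[
\tau = \frac{d}{k-1}\left(\frac{\tau_{\mathrm{bb}}}{2\alpha} - \frac{1}{2d}\,\bigl\langle O_{V_1,\dots,V_k},\,(I_d)_{V_1,\dots,V_k}\bigr\rangle\right),
\]
the correction term being exactly computable from the stored $V_i$ and $O$. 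Substituting $M_0 = (\rho-I_d/d)/2$ and $N_0 = O$ into Corollary~\ref{coro:shadow-var-bound} (which packages Claims~\ref{claim:mean-calc} and~\ref{claim:var-calc}), the hypothesis $k\ge 100\sqrt d/\eps$ drives the variance term down to a small constant, so with high constant probability one has $|\langle\rho,O\rangle - \tfrac{d}{k-1}\langle (M_0)_{V_1,\dots,V_k},O_{V_1,\dots,V_k}\rangle|\le 0.3\,\eps$; feeding in the black-box error, which propagates through the displayed formula to at most $\tfrac{5 d\eps'}{\alpha(k-1)}\le 0.6\,\eps$ by the choice of $\eps'$ and $\alpha$, gives $|\tau - \langle O,\rho\rangle|\le\eps$ on an event of constant probability strictly above $1/2$. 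Finally I would amplify to $1-\delta$ in the standard way: run $O(\log(1/\delta))$ independent copies of the whole procedure, store all their classical descriptions, and on a query $O$ output the median of the per-run estimates, with a Chernoff bound over the independent runs giving failure probability $\le\delta$ and total sample complexity $O(\log(1/\delta))\cdot f(k,0.01\eps k/d,0.9)$, which lies within the claimed bound (the additive $O(1/\eps^2)$ in the statement is only needed as slack and can be ignored).

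I expect the main obstacle to be precisely the channel construction in the second paragraph: the naive compression $\rho\mapsto\Phi(\rho)$ is \emph{not} trace preserving --- its trace is $\Tr(G\rho)$, which concentrates around $2$ rather than $1$ --- so one must dilate it by a complementary ``garbage'' channel, and the reason the dilation does no harm is the structural fact that the effective observable $O_{V_1,\dots,V_k}$ is traceless with zero diagonal, hence orthogonal both to the $I_k$-garbage and to the spurious diagonal entries of $\Phi(\rho)$. Beyond that, the argument is just the moment and variance bookkeeping already carried out in Claims~\ref{claim:mean-calc}--\ref{claim:var-calc}, the matrix-concentration estimate of Claim~\ref{claim:matrix-chernoff}, and careful tracking of constants using the hypotheses $k\ge 100\sqrt d/\eps$ and $d\ge k$.
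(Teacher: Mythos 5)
Your proposal is correct in substance and in fact takes a genuinely different route through the most delicate part of the argument, namely how to turn the Gaussian compression $\rho\mapsto\rho_{V_1,\dots,V_k}$ into something the $k$-dimensional black box can consume. The paper embeds $\rho$ into a $(d+k)$-dimensional block-diagonal state via the channel $M\mapsto Y^{1/2}MY^{1/2}\oplus 0.1\,\Phi(M)$ with $Y = I_d - 0.1\sum_i V_iV_i^\dagger$, then measures the block projector and \emph{post-selects} on landing in the $k\times k$ block. This introduces a random acceptance probability $\beta$ that must be estimated and concentrated, which is exactly where the additive $O(1/\eps^2)$ in the sample bound comes from, and it also means the effective normalization $\alpha=\Tr(G\rho)$ is $\rho$-dependent. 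Your construction $\Lambda(M)=\alpha\,\Phi(M)+\Tr((I_d-\alpha G)M)\,I_k/k$ with a \emph{fixed} constant $\alpha\le 1/10$ is a bona fide trace-preserving $d\to k$ channel with no post-selection at all; the structural observation that $O_{V_1,\dots,V_k}$ is traceless with zero diagonal (hence orthogonal to both the $I_k/k$ garbage and the spurious diagonal of $\Phi(\rho)$) cleanly isolates the signal $\alpha\langle O_{V_1,\dots,V_k},\rho_{V_1,\dots,V_k}\rangle$. This removes the acceptance-rate bookkeeping and, as you note, dispenses with the additive $O(1/\eps^2)$ term entirely. You are also a bit more careful than the paper in applying Corollary~\ref{coro:shadow-var-bound} to the traceless, trace-norm-one pair $M_0=(\rho-I_d/d)/2$ and traceless $O$, rather than directly to $\rho$ and $O$.

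One point needs fixing: you say you ``simply resample [the $V_i$] until'' both $G\preceq 10I_d$ and $\norm{O_{V_1,\dots,V_k}}_\op\le 10$ hold. In the classical-shadows model the $V_i$ are drawn during the measurement phase, before any query $O$ is seen, so you cannot condition their sampling on a property of $O_{V_1,\dots,V_k}$. Resampling until $G\preceq 10I_d$ holds is fine (it is a property of the $V_i$ alone and is needed for $\Lambda$ to be a valid channel), but the event $\norm{O_{V_1,\dots,V_k}}_\op\le 10$ must instead be charged to the per-query failure budget, exactly as the paper does. Since it holds with probability $0.99$ by Claim~\ref{claim:matrix-chernoff}, your probability accounting (target constant success, then median-of-$O(\log(1/\delta))$ amplification) still closes with room to spare, so this is a local repair rather than a gap in the overall strategy.
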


\begin{proof}
It suffices to prove the statement with $\delta = 1/3$ and then we can simply take $\log(1/\delta)$ independent runs of the algorithm and output the median of the estimates.

Now draw $V_1, \dots , V_k \in \C^{d \times 2d/k}$ with i.i.d. entries drawn from $N(0,1/d)$.  Recall by Claim~\ref{claim:matrix-chernoff}, with probability $0.99$, 
\begin{equation}\label{eq:condition-number}
0.1I_d \preceq \sum_{i = 1}^k V_i V_i^\dagger \preceq 10 I_d  \,.
\end{equation}
Assuming the above holds, we define
\[
Y = I_d - 0.1 \sum_{i = 1}^k V_i V_i^\dagger
\]
and construct the following quantum channel.  We map a matrix $M \in \C^{d \times d}$ to a block diagonal matrix with a $d \times d$ block consisting of $Y^{1/2} M Y^{1/2}$ and a $k \times k$ block with entries given by $0.1 \tr(V_i^\dagger M V_j) $ for all $i,j \in [k]$ (this is similar to the matrix $M_{V_1, \dots , V_k}$ except the diagonal is included as well).  It is clear that this map is completely positive and trace preserving.   We run all copies of $\rho$ through this channel and then measure them according to the POVM $(\Pi , I_{d+k} - \Pi)$ where $\Pi$ is the projector onto the $k \times k$ block.  We keep all of the copies for which the measurement outcome is $\Pi$ and discard the rest.  Let
\[
\alpha = \tr(V_1^\dagger \rho V_1) + \dots + \tr(V_k^\dagger \rho V_k) \,.
\]
Note that by \eqref{eq:condition-number}, $\alpha \geq 0.1$.  Let $\beta$ be the fraction of samples that we actually keep.  Since we have more than $O(1/\eps^2)$ samples, with  $0.99$ probability, $0.1 \alpha - 0.001\eps \leq \beta \leq 0.1 \alpha + 0.001\eps$.  Note that all of these copies are now in the state
\[
\rho' = \frac{\rho_{V_1, \dots , V_k} + \diag(\tr(V_1^\dagger \rho V_1), \dots , \tr(V_k^\dagger \rho V_k))}{\tr(V_1^\dagger \rho V_1) + \dots + \tr(V_k^\dagger \rho V_k)} \,.
\]

We now run the classical shadows algorithm with parameters $k, 0.01 \eps k / d$ on this $k \times k$ matrix.  As long as enough samples are kept in the previous step i.e. $\beta \geq 0.1\alpha - 0.001\eps$, we have enough copies remaining to run this algorithm.  Given a query, $O$, we construct the matrix $O_{V_1, \dots , V_k}$ and then compute an estimate $\tau'$ for $\langle \rho' , O_{V_1, \dots , V_k} \rangle$.

WLOG assume $\norm{O}_{\op} \leq 1$.  Then by the assumption about the classical shadows algorithm, and as long as $\norm{O_{V_1, \dots , V_k}}_{\op} \leq 10$ (which happens with $0.99$ probability by Claim~\ref{claim:matrix-chernoff}), our estimate $\tau'$ satisfies
\[
|\tau' - \langle \rho' , O_{V_1, \dots , V_k} \rangle| \leq \frac{0.1\eps k}{d} 
\]
with probability at least $0.9$.  Now we output the estimate  
\[
\tau = \frac{5d\beta }{(k-1)} \tau' \,.
\]
Assuming that the previous inequality holds,
\[
\begin{split}
\left\lvert \tau - \frac{10\beta}{\alpha} \langle \rho , O \rangle \right\rvert  \leq \left\lvert \frac{10\beta}{\alpha}  \langle  \rho , O \rangle - \frac{5d\beta}{(k-1)}  \langle \rho' , O_{V_1, \dots , V_k} \rangle \right\rvert + 0.6\eps \\ = \frac{10\beta}{\alpha} \left\lvert  \langle \rho , O \rangle - \frac{d}{2(k-1)}  \langle \rho_{V_1, \dots , V_k} , O_{V_1, \dots , V_k} \rangle \right\rvert + 0.6\eps
\end{split}
\]
where the last step uses that $O_{V_1, \dots , V_k}$ is $0$ on the diagonal by definition.  Now by Corollary~\ref{coro:shadow-var-bound}, with probability $0.9$, the above quantity is at most $0.8\eps$.  Also assuming $0.1 \alpha - 0.001\eps \leq \beta \leq 0.1 \alpha + 0.001\eps$, we have
\[
\left\lvert \frac{10\beta}{\alpha} - 1 \right\rvert \leq 0.1\eps \,.
\]
This immediately implies that our estimate $\tau$ has $| \tau - \langle \rho, O \rangle| \leq \eps$.  Overall, combining the failure probabilities over all of the steps, the total failure probability is less than $1/3$, so with $2/3$ probability, the estimate $\tau$ is $\eps$-accurate and we are done.
\end{proof}

\end{document}